\newenvironment{varitemize}
{
\begin{list}{\labelitemi}
{\setlength{\itemsep}{0pt}
 \setlength{\topsep}{0pt}
 \setlength{\parsep}{0pt}
 \setlength{\partopsep}{0pt}
 \setlength{\leftmargin}{15pt}
 \setlength{\rightmargin}{0pt}
 \setlength{\itemindent}{0pt}
 \setlength{\labelsep}{5pt}
 \setlength{\labelwidth}{10pt}
}}
{
 \end{list}
}
\newcounter{numberone}
\newenvironment{varenumerate}
{
\begin{list}{\arabic{numberone}.}
{
  \usecounter{numberone}
  \setlength{\itemsep}{0pt}
  \setlength{\topsep}{0pt}
  \setlength{\parsep}{0pt}
  \setlength{\partopsep}{0pt}
  \setlength{\leftmargin}{15pt}
  \setlength{\rightmargin}{0pt}
  \setlength{\itemindent}{0pt}
  \setlength{\labelsep}{5pt}
  \setlength{\labelwidth}{15pt}
}}
{
\end{list}
} 
\newcommand{\bnf}{\;::=\;}
\newcommand{\proves}{\vdash}
\newcommand{\midd}{\; \; \mbox{\Large{$\mid$}}\;\;} 
\newcommand{\NN}{\mathbb{N}}
\newcommand{\BB}{\mathbb{B}}
\newcommand{\RR}{\mathbb{R}}
\newcommand{\values}{\ensuremath{\mathcal{V}}}
\newcommand{\dom}[1]{\,\mathit{dom}(#1)\,}
\newcommand{\ttrue}{\,\mathsf{true}\,}
\newcommand{\ffalse}{\,\mathsf{false}\,}
\newcommand{\nil}{\,\mathsf{nil}\,}
\newcommand{\evav}[1]{#1 \Downarrow}
\newcommand{\evalv}[2]{#1 \Downarrow \distr{#2}}
\newcommand{\evalvs}[2]{#1 \Rightarrow {\distr{#2}}}
\newcommand{\evalvsD}[2]{#1 \Rightarrow {{#2}}}
\newcommand{\evalvsp}[2]{#1 \Rightarrow^{#2}}
\newcommand{\distrv}[1]{{\{{#1}^1\}}}
\newcommand{\distr}[1]{\mathscr #1}
\newcommand{\diS}[1]{\llbracket #1\rrbracket}
\newcommand{\op}[2]{#1 \, \mathsf{op}\, #2}
\newcommand{\opn}[2]{#1 \, \tilde {\mathsf{op}}\, #2}
\newcommand{\fst}[1]{\mathsf{fst}\, (#1)}
\newcommand{\snd}[1]{\mathsf{snd}\, (#1)}
\newcommand{\abstrac}{\lambda x.}
\newcommand{\abstr}[2]{\lambda #1.#2}
\newcommand{\ifth}[3]{\mathsf{if}\, #1\,  \mathsf{then} \,#2 \,\mathsf{else}\, #3}
\newcommand{\fix}[1]{\mathsf{fix}\, x. \, #1}
\newcommand{\pair}[2]{\langle #1,#2 \rangle}
\newcommand{\caset}[3]{\mathsf{case}\, #1\, \mathsf{of}\, \{ \mathsf{nil}\rightarrow #2\;|\;h::t \rightarrow #3\}}
\newcommand{\lst}[2]{#1 :: #2}
\newcommand{\id}{I}
\newcommand{\termty}[1]{\mathcal{T}^{#1}}
\newcommand{\val}[1]{\mathcal{V}^{#1}}
\newcommand{\termtyg}[2]{\mathcal{T}_{#1}^{#2}}
\newcommand{\subst}[3]{#1[#2/#3]}
\newcommand{\R}{{\precsim^n_\circ}}
\newcommand{\Rh}{{{\precsim^n_\circ}^H}}
\newcommand{\Rv}{{\precsim_\circ}}
\newcommand{\Ev}{{\backsim_\circ}}
\newcommand{\Rvh}{\precsim_\circ^H}
\newcommand{\LOP}{\ensuremath{\Lambda_\oplus}}
\newcommand{\PCF}{\ensuremath{\mathsf{PCF}}}
\newcommand{\PCFL}{\ensuremath{\mathsf{PCFL}}}
\newcommand{\PCFLP}{\ensuremath{\mathsf{PCFL}_\oplus}}
\newcommand{\vars}{\mathcal{V}}
\newcommand{\ops}{\mathcal{O}}
\newcommand{\termone}{M}
\newcommand{\termtwo}{N}
\newcommand{\termthree}{L}
\newcommand{\termfour}{P}
\newcommand{\termfive}{R}
\newcommand{\termsix}{T}
\newcommand{\termseven}{U}
\newcommand{\varone}{x}
\newcommand{\vartwo}{y}
\newcommand{\varthree}{z}
\newcommand{\varfour}{w}
\newcommand{\valone}{V}
\newcommand{\valtwo}{W}
\newcommand{\valthree}{Z}
\newcommand{\emb}[1]{\langle\!\langle #1\rangle\!\rangle}
\newcommand{\embsn}[1]{\lceil #1\rceil}
\newcommand{\abstrsv}[1]{\lambda.#1}
\newcommand{\dmyval}{\star}
\newcommand{\fixt}{\mathsf{fix}}
\newcommand{\typtwo}{\tau}
\newcommand{\typthree}{\theta}
\newcommand{\mcone}{\mathcal{M}}
\newcommand{\stateone}{s}
\newcommand{\statetwo}{t}
\newcommand{\statethree}{r}
\newcommand{\bty}{\mathbf{bool}}
\newcommand{\ity}{\mathbf{int}}
\newcommand{\arr}[2]{#1\rightarrow #2}
\newcommand{\cprod}[2]{#1\times #2}
\newcommand{\lstty}[1]{[#1 ]}
\newcommand{\opc}{\mathsf{op}}
\newcommand{\resty}[1]{{#1}_{\mathsf{op}}}
\newcommand{\FV}[1]{\mathit{FV}(#1)}
\newcommand{\cnst}[1]{\underline{#1}}
\newcommand{\relone}{\mathcal{R}}
\newcommand{\reltwo}{\mathcal{P}}
\newcommand{\relonea}[2]{#1\;\relone\;#2}
\newcommand{\rela}[3]{#1\;#2\;#3}
\newcommand{\relct}{\mathscr{R}}
\newcommand{\vrelone}{R}
\newcommand{\vreltwo}{P}
\newcommand{\probone}{p}
\newcommand{\probtwo}{q}
\newcommand{\emptycon}{[\cdot]}
\newcommand{\ctxone}{C}
\newcommand{\ctxtwo}{D}
\newcommand{\ectxone}{E}
\newcommand{\act}[2]{#1[#2]}
\newcommand{\tycon}[4]{#1(#2;#3):#4}
\newcommand{\distrone}{\mathscr{D}}
\newcommand{\distrtwo}{\mathscr{E}}
\newcommand{\distrthree}{\mathscr{F}}
\newcommand{\supp}[1]{\mathsf{S}(#1)}
\newcommand{\nilact}{\mathit{nil}} 
\newcommand{\hd}{\mathit{hd}}
\newcommand{\tl}{\mathit{tl}}
\newcommand{\evact}{\mathit{eval}}
\newcommand{\fstact}{\mathit{fst}}
\newcommand{\sndact}{\mathit{snd}}
\newcommand{\mccbv}{\mathcal{M}_\oplus}
\newcommand{\mccbvstates}{\mathcal{S}_\oplus}
\newcommand{\mccbvlabels}{\mathcal{L}_\oplus}
\newcommand{\mccbvmatrix}{\mathcal{P}_\oplus}
\newcommand{\pasv}[1]{\ensuremath{\precsim_{#1}}}
\newcommand{\pabv}[1]{\ensuremath{\backsim_{#1}}}
\newcommand{\ngc}[1]{\mathit{CC}_{#1}}
\newcommand{\ccone}{\xi}
\newcommand{\types}{\mathcal{Y}}
\newcommand{\cqed}{}
\newcommand{\SSred}{\rightarrow}
\newcommand{\statone}{\mathcal{X}}
\newcommand{\fieldone}{\Sigma}
\newcommand{\actset}{\mathscr{A}}
\newcommand{\setone}{X}
\newcommand{\mpone}{\mathcal{C}}
\newcommand{\matrixone}{\mathcal{P}}
\newcommand{\powset}[1]{\mathscr{P}(#1)}
\newcommand{\stone}{s}
\newcommand{\lmp}[1]{\mathcal{C}_{#1}}
\newcommand{\testlan}{\mathscr{T}}
\newcommand{\testone}{t}
\newcommand{\testtwo}{s}
\newcommand{\actone}{a}
\newcommand{\seq}[1]{\langle #1\rangle}
\newcommand{\prsucc}[3]{P_{#1}(#2,#3)}
\newtheorem{lemma}{Lemma}
\newtheorem{definition}{Definition}
\newtheorem{proposition}{Proposition}
\newtheorem{theorem}{Theorem}
\newtheorem{example}{Example}
\newenvironment{proof}{\begin{trivlist}
       \item[\hskip \labelsep {\bfseries Proof.}]}{\hfill $\Box$ \end{trivlist}}
\begin{document}

\title{On Probabilistic Applicative Bisimulation\\ and Call-by-Value $\lambda$-Calculi (Long Version)}
\author{Rapha\"elle Crubill\'e\footnote{ENS-Lyon, \texttt{raphaelle.crubille@ens-lyon.fr}} \and 
        Ugo Dal Lago\footnote{Universit\`a di Bologna \& INRIA, \texttt{dallago@cs.unibo.it}}}
\date{}

\maketitle

\begin{abstract}
Probabilistic applicative bisimulation is a recently introduced coinductive methodology for program 
equivalence in a probabilistic, higher-order, setting. In this paper, the technique is generalized to 
a typed, call-by-value, lambda-calculus. Surprisingly, the obtained relation coincides with context 
equivalence, contrary to what happens when call-by-name evaluation is considered. Even more surprisingly, 
full-abstraction only holds in a symmetric setting. 
\end{abstract}

\section{Introduction}
Traditionally, an algorithm is nothing but a finite description of a sequence of deterministic primitive
instructions, which solve a computational problem when executed. Along the years, however, this concept has been generalized 
so as to reflect a broader class of effective procedures and machines. One of the many ways this has been done consists
in allowing probabilistic choice as a primitive instruction in algorithms, this way shifting from 
usual, deterministic computation to a new paradigm, called probabilistic computation. Examples of application 
areas in which probabilistic computation has proved to be useful include natural language processing~\cite{manning1999foundations}, 
robotics~\cite{thrun2002robotic}, computer vision~\cite{comaniciu2003kernel}, and machine learning~\cite{pearl1988probabilistic}.
Sometimes, being able to ``flip a fair coin'' while computing is a \emph{necessity} rather than an alternative, 
like in computational cryptography (where, e.g., secure public key encryption schemes must be
probabilistic~\cite{GoldwasserMicali}). 

Any (probabilistic) algorithm can be executed by concrete machines only once it takes the form of a \emph{program}. 
And indeed, various probabilistic programming languages have been introduced in the last years, 
from abstract ones~\cite{Plotkin,RamseyPfeffer,ParkPfenningThrun} to more concrete ones~\cite{Pfeffer01,Goodman}.
A quite common scheme consists in endowing any deterministic language with one or more primitives for probabilistic 
choice, like binary probabilistic choice or primitives for distributions.

Viewing algorithms as functions allows a smooth integration of distributions into the playground, itself
nicely reflected at the level of types through monads~\cite{GordonABCGNRR13,RamseyPfeffer}. As a matter of fact, some existing probabilistic
programming languages~\cite{Pfeffer01,Goodman} are designed around the $\lambda$-calculus or one of its incarnations, like
\textsf{Scheme}. This, in turn has stimulated foundational research about probabilistic $\lambda$-calculi, and in
particular about the nature of program equivalence in a probabilistic setting. 
This has already started to produce some interesting results in the realm of denotational semantics, where adequacy
and full-abstraction results have recently appeared~\cite{DanosHarmer,EPT13}.

Not much is known about operational techniques for probabilistic program equivalence, and in particular about coinductive
methodologies. This is in contrast with what happens for deterministic or nondeterministic programs, when various notions
of bisimulation have been introduced and proved to be adequate and, in some cases, fully abstract. A recent paper
by Alberti, Sangiorgi and the second author~\cite{ADLS13} generalizes Abramsky's applicative 
bisimulation~\cite{Abramsky90} to $\LOP$, a call-by-name, untyped $\lambda$-calculus endowed with binary, 
fair, probabilistic choice~\cite{DalLagoZorzi}. Probabilistic applicative bisimulation is shown to be
a congruence, thus included in context equivalence. Completeness, however, fails, the counterexample being exactly the
one separating bisimulation and context equivalence in a nondeterministic setting. Full abstraction is then recovered
when pure, deterministic $\lambda$-terms are considered, as well as well another, more involved, notion of
bisimulation, called coupled logical bisimulation, takes the place of applicative bisimulation.

In this paper, we proceed with the study of probabilistic applicative bisimulation, analysing its behaviour when instantiated
on call-by-value $\lambda$-calculi. This investigation brings up some nice, unexpected results. Indeed, not only
the non-trivial proof of congruence for applicative bisimulation can be adapted to the call-by-value setting, which is somehow
expected, but applicative bisimilarity turns out to precisely characterize context equivalence. This is quite surprising,
given that in nondeterministic $\lambda$-calculi, both when call-by-name and call-by-value evaluation are considered,
applicative bisimilarity is a congruence, but \emph{finer} than context equivalence. There is another, even less expected result:
the aforementioned correspondence does not hold anymore if we consider applicative \emph{simulation} and the
contextual \emph{preorder}.

Technically, the presented results owe much to a recent series of studies about probabilistic bisimulation for labelled
Markov processes~\cite{DEP02,vBMOW05}, i.e., labelled probabilistic transition systems in which the state space is continuous
(rather than discrete, as in Larsen and Skou's labelled Markov chains~\cite{LS91}), but time stays discrete. More specifically,
the way we prove that context equivalent terms are bisimilar goes by constructively show how 
each \emph{test} of a kind characterizing probabilistic bisimulation can be turned into an equivalent \emph{context}. 
If, as a consequence, two terms are not bisimilar, then any test the two terms satisfy with different probabilities 
(of which there must be at least one) becomes a context in which the two terms converges with different probabilities. 
This also helps understanding the discrepancies between the probabilistic and nondeterministic settings, since 
in the latter the class of tests characterizing applicative bisimulation is well-known to be quite large~\cite{Ong93}.

The whole development is done in a probabilistic variation on \PCF\ with lazy lists, called \PCFLP: working
on an applied calculus allows to stay closer to concrete programming languages, this way facilitating exemplification,
as in Section~\ref{sect:motexa} below.
\section{Some Motivating Examples}\label{sect:motexa}
\newcommand{\enc}{\mathit{ENC}}
\newcommand{\notc}{\mathit{NOT}}
\newcommand{\gen}{\mathit{GEN}}
\newcommand{\expc}{\mathit{EXP}}
\newcommand{\rand}{\mathit{RND}}
\newcommand{\sop}[1]{\overline{#1}}
In this section, we want to show how $\lambda$-calculus can express interesting, although simple, probabilistic programs.
More importantly, we will argue that checking the equivalence of some of the presented programs is not only interesting
from a purely theoretical perspective, but corresponds to a proof of \emph{perfect security} in the sense of Shannon~\cite{S49}.

Let's start from the following very simple programs:
\begin{align*}
  \notc&=\abstr{\varone}{\ifth{\varone}{\underline \ffalse}{\underline \ttrue}}:\arr{\bty}{\bty};\\
  \enc&=\abstr{\varone}{\abstr{\vartwo}{\ifth{\varone}{(\notc\;\vartwo)}{\vartwo}}}:\arr{\bty}{\arr{\bty}{\bty}};\\
  \gen&=\underline\ttrue\oplus\underline\ffalse:\bty.
\end{align*}
The function $\enc$ computes exclusive disjunction as a boolean function, but can also be seen as the encryption function 
of a one-bit version of the so-called One-Time Pad cryptoscheme (OTP in the following). On the other hand, $\gen$ is a term reducing probabilistically
to one of the two possible boolean values, each with probability $\frac{1}{2}$, and is meant to be a way to generate a random key
for the same scheme.

One of the many ways to define perfect security of an encryption scheme consists is setting up an 
\emph{experiment}~\cite{KL07}: the adversary generates two messages, of which one is randomly 
chosen, encrypted, and given back to the adversary who, however, should not be able to guess
whether the first or the second message have been chosen (with success probability strictly greater than $\frac{1}{2}$). 
This can be seen as the problem of proving the following two programs to be context equivalent:
\begin{align*}
  \expc&=\abstr{\varone}{\abstr{\vartwo}{\enc\;(\varone\oplus\vartwo)\;\gen}}:\arr{\bty}{\arr{\bty}{\bty}};\\
  \rand&=\abstr{\varone}{\abstr{\vartwo}{\underline\ttrue\oplus\underline\ffalse}}:\arr{\bty}{\arr{\bty}{\bty}}.
\end{align*}
where $\oplus$ is a primitive for fair, probabilistic choice. Analogously, one could verify that 
any adversary is not able to distinguish an experiment in which the \emph{first} message
is chosen from an experiment in which the \emph{second} message is chosen. This, again, can be seen as the task of
checking whether the following two terms are context equivalent:
\begin{align*}
  \expc_\mathit{FST}&=\abstr{\varone}{\abstr{\vartwo}{\enc\;\varone\;\gen}}:\arr{\bty}{\arr{\bty}{\bty}};\\
  \expc_\mathit{SND}&=\abstr{\varone}{\abstr{\vartwo}{\enc\;\vartwo\;\gen}}:\arr{\bty}{\arr{\bty}{\bty}}.
\end{align*}
But how could we actually \emph{prove} context equivalence? The universal quantification in its definition, as is well known,
turns out to be burdensome in proofs. The task can be made easier by way of various techniques, including context lemmas and
logical relations. Later in this paper, we show how the four terms above can be shown equivalent by way of applicative bisimulation, 
which is proved sound (and complete) with respect to context equivalence in Section \ref{sect:appbis} below.

Before proceeding, we would like to give examples of terms having the same type, but which are \emph{not} context equivalent.
We will do so by again referring to perfect security. The kind of security offered by the OTP is unsatisfactory not only
because keys cannot be shorter than messages, but also because it does not hold in presence of multiple encryptions, or when the 
adversary is \emph{active}, for example by having an access to an encryption oracle. In the aforementioned scenario, security holds
if and only if the following two programs (both of type $\arr{\bty}{\arr{\bty}{\bty\times(\arr{\bty}{\bty})}}$) are context equivalent:
\begin{align*}
  \expc_\mathit{FST}^\mathit{CPA}&=\abstr{\varone}{\abstr{\vartwo}{(\abstr{\varthree}{\pair{\enc\;\varone\;\varthree}{\abstr{\varfour}{\enc\;\varfour\;\varthree}}})\gen}};\\
  \expc_\mathit{SND}^\mathit{CPA}&=\abstr{\varone}{\abstr{\vartwo}{(\abstr{\varthree}{\pair{\enc\;\vartwo\;\varthree}{\abstr{\varfour}{\enc\;\varfour\;\varthree}}})\gen}}.
\end{align*}
It is very easy, however, to realize that if 
$\ctxone=(\abstr{\varone}{(\snd{\varone})(\fst{\varone})})(\emptycon\;\ttrue\;\ffalse)$, then
$\ctxone[\expc_\mathit{FST}^\mathit{CPA}]$ reduces to $\ttrue$, while $\ctxone[\expc_\mathit{SND}^\mathit{CPA}]$ reduces to $\ffalse$, both
with probability $1$. In other words, the OTP is not secure in presence of active adversaries, and for very good reasons: having access
to an oracle for encryption is essentially equivalent to having access to an oracle for \emph{decryption}.
\section{Programs and Their Operational Semantics}
In this section, we will present the syntax and operational semantics of \PCFLP, the language on which we will define applicative
bisimulation. The language \PCFLP\ is identical to Pitts' \PCFL~\cite{Pitts97}, except for the presence of a primitive for binary probabilistic choice.
\subsection{Terms and Types}
The terms of \PCFLP\ are built up from constants (for boolean and integer values, and for the empty list) and variables, 
using the usual constructs from \PCF, and binary choice. In the following, $\vars=\{\varone,\vartwo,\ldots\}$ is a countable set of variables
and $\ops$ is a finite set of binary arithmetic operators including at least the symbols $+$, $\leq$, and $=$.
\begin{definition}
\emph{Terms} are expressions generated by the following grammar: 
\begin{align*}
\termone,\termtwo\bnf& \varone\midd\cnst{n}\midd\cnst{b}\midd\nil\midd\pair{\termone}{\termone}
   \midd\lst{\termone}{\termone}\midd \abstrac{\termone} \midd \fix{\termone}\\ 
   &\midd\termone \oplus\termone\midd \ifth{\termone}{\termone}{\termone} \midd \op{\termone}{\termone} \midd \fst{\termone} \midd \snd{\termone}\\ 
   &\midd \caset{\termone}{\termone}{\termone}\midd  \termone\, \termone,
\end{align*}
where $x,h,t \in\vars$, $n\in\NN$, $b\in\BB=\{\ttrue, \ffalse\}$, $\mathsf{op} \in \ops$. 
\end{definition}
In what follows, we  consider  terms of \PCFLP\  as $\alpha$-equivalence classes of syntax trees. 
The set of free variables of a term $\termone$ is indicated as $\FV{\termone}$. A term $\termone$ is closed if $\FV{\termone} = \emptyset$.
The (capture-avoiding) substitution of $\termtwo$ for the free occurrences of $\varone$ in $\termone$ is denoted 
$\subst{\termone}{\termtwo}{\varone}$. 

The constructions from \PCF\ have their usual meanings. The operator $(\lst{\cdot}{\cdot})$ is the 
constructor for lists, $\nil$ is the empty list, and $\caset{\termthree}{\termone}{\termtwo}$ is a list destructor.
The construct $\termone\oplus\termtwo$ is a binary choice operator, to be
interpreted probabilistically, as in $\LOP$~\cite{DalLagoZorzi}.
\begin{example}
Relevant examples of terms are $\Omega = \left(\fix{x}\right){\underline 0}$, and $\id=\abstrac{x}$: the first one
always diverges, while the second always converges (to itself). In between, one can find terms that converge 
with probability between $0$ and $1$, excluded, e.g., 
$\id\oplus\Omega$, and $\id\oplus\left(\id\oplus\Omega\right)$.
\end{example}
We are only interested in well-formed terms, i.e., terms to which one can assign a type.
\begin{definition}
\emph{Types} are given by the following grammar:
\begin{align*}
\sigma,\tau\bnf&\gamma\midd\arr{\sigma}{\sigma}\midd\cprod{\sigma}{\sigma}\midd\lstty{\sigma};\\
\gamma,\delta\bnf&\bty\midd\ity.
\end{align*}
\end{definition}
The set of all types is $\types$. Please observe that the language of types we consider here coincides
with the one of Pitts' \PCFL~\cite{Pitts97}. An alternative typing discipline 
for probabilistic languages (see, e.g. \cite{RamseyPfeffer}), views probability as a \emph{monad}, this way reflecting the
behaviour of programs in types: if $\sigma$ is a type, $\Box\sigma$ is the type 
of probabilistic distributions over $\sigma$, and the binary choice operator always produces elements of type 
$\Box\sigma$. 
\begin{example}
The following expressions are types: $\ity$,  $\cprod{\ity}{\bty}$, $\arr{\ity}{\left(\cprod{\bty}{\ity}\right)}$.
\end{example}

We assume that all operators from $\ops$ take natural numbers as input, and we associate to each operator $\opc\in\ops$ its 
\emph{result type} $\resty{\gamma}\in\{\bty,\ity\}$ and its semantics $\sop{\opc}:\NN\times\NN\rightarrow X$
where $X$ is either $\BB$ or $\NN$, depending on $\resty{\gamma}$. 
A \emph{typing context} $\Gamma$ is a finite partial function from variables to types. 
$\dom \Gamma$ is the domain of the function $\Gamma$. If $x\not\in\dom{\Gamma}$, 
$\left(x:\sigma,\Gamma\right)$ represents the function which extends $\Gamma$ 
to $\dom{\Gamma}\cup\{\varone\}$, by associating $\sigma$ to $\varone$.
\begin{definition}
A \emph{typing judgement} is an assertion of the form
$\Gamma \proves \termone: \sigma$, where $\Gamma$ is a context, $\termone$ is a term, and $\sigma$ is a type. 
A judgement is valid if it can be derived by the rules of the formal system given in Figure~\ref{fig1}.
\end{definition}
\begin{center}
\begin{figure}[htpb]
\fbox{
\begin{minipage}{.97\textwidth}
{\center
\def\proofSkipAmount{\vskip 10ex plus.10ex minus.5ex}
\small
$$
  \AxiomC{  $(x,\sigma) \in \Gamma$}
  \UnaryInfC{$\Gamma \proves x: \sigma$}
  \DisplayProof 
  \qquad
  \AxiomC{ $b \in \mathbb{B}$}
  \UnaryInfC{$\Gamma \proves \underline{b}: \bty$}
  \DisplayProof 
  \qquad
  \AxiomC{ $n \in \mathbb{N}$}
  \UnaryInfC{$\Gamma \proves \underline{n}: \ity$}
  \DisplayProof 
  \qquad
  \AxiomC{$\Gamma \proves \termone: \sigma$}
  \AxiomC{$\Gamma \proves \termtwo: \sigma$}
  \AxiomC{$\Gamma \proves \termthree: \bty$}
  \TrinaryInfC{$\Gamma \proves \ifth{\termthree}{\termone}{\termtwo}$ }
  \DisplayProof 
$$
$$ 
  \AxiomC{$\Gamma \proves \termone: \ity$}
  \AxiomC{$\Gamma \proves \termtwo: \ity$}
  \BinaryInfC{$\Gamma \proves  {\op {\termone} {\termtwo}}: \gamma_{\mathsf{op}}$ }
  \DisplayProof 
  \qquad 
  \AxiomC{$\Gamma, x: \sigma \proves \termone: \tau$}
  \UnaryInfC{$\Gamma \proves \abstrac \termone: \arr{\sigma}{\tau}$ }
  \DisplayProof 
  \qquad
  \AxiomC{$\Gamma \proves \termone: \sigma$}
  \AxiomC{$\Gamma \proves \termtwo: \sigma$}
  \BinaryInfC{$\Gamma \proves \termone\oplus\termtwo: \sigma$ }
  \DisplayProof 
$$
$$
  \AxiomC{$\Gamma \proves\termone: \arr{\sigma}{\tau}$}
  \AxiomC{$\Gamma \proves\termtwo: \sigma$}
  \BinaryInfC{$\Gamma \proves\termone\termtwo: \tau$ }
  \DisplayProof 
  \qquad
  \AxiomC{$\Gamma, x:\arr{\sigma}{\tau} \proves\termone:\arr{\sigma}{\tau}$}
  \UnaryInfC{$\Gamma \proves \fix\termone:\arr{\sigma}{\tau}$ }
  \DisplayProof 
  \qquad
  \AxiomC{$\Gamma \proves \termone: \sigma$}
  \AxiomC{$\Gamma \proves \termtwo: \tau$}
  \BinaryInfC{$\Gamma \proves \pair {\termone} {\termtwo}: \sigma \times \tau$ }
  \DisplayProof 
$$
$$
  \AxiomC{$\Gamma \proves \termone: \sigma \times \tau$}
  \UnaryInfC{$\Gamma \proves \fst \termone: \sigma$ }
  \DisplayProof 
  \qquad
  \AxiomC{$\Gamma \proves \termone: \sigma \times \tau$}
  \UnaryInfC{$\Gamma \proves \snd \termone: \tau$ }
  \DisplayProof 
  \qquad
  \AxiomC{}
  \UnaryInfC{$\Gamma \proves nil: [\sigma]$ }
  \DisplayProof \quad
  \AxiomC{$\Gamma \proves T: [\sigma]$}
  \AxiomC{$\Gamma \proves H: \sigma$}
  \BinaryInfC{$\Gamma \proves H:: T: [\sigma]$ }
  \DisplayProof
$$
$$
  \AxiomC{$\Gamma \proves L: [\sigma]$}
  \AxiomC{$\Gamma \proves \termone_1: \tau$}
  \AxiomC{$\Gamma, h:\sigma, t:[\sigma]  \proves \termone_2: \tau$}
  \TrinaryInfC{$\Gamma \proves \caset L {\termone_1} {\termone_2}: \tau$ }
  \DisplayProof
$$}
\end{minipage}}
\caption{Type Assignment in \PCFLP}
\label{fig1}
\end{figure}
\end{center}
Please notice that any term of which we want to form the fixpoint needs to be a function.
\begin{definition}
If $\sigma$ is a type and $\Gamma$ is a typing context, then 
$\termty{\sigma} = \{t\,|\,\emptyset \proves t:\sigma  \}$,  
$\termty{} = \{t\,|\,\exists \sigma, t \in \termty \sigma\}$, 
$\termtyg{\Gamma}{\sigma} = \{t\, |\, \Gamma \proves t:\sigma  \}$.
\end{definition}
In other words, $\termty{\sigma}$ is the set of \emph{closed terms} (also called \emph{programs}) 
of type $\sigma$, while $\termty{}$ is the set of closed terms which have a valid typing 
derivation, and $\termtyg{\Gamma}{\sigma}$ is the set of terms which have type 
$\sigma$ under the context $\Gamma$. We can observe that $\termty\sigma=\termtyg\emptyset \sigma$.  
\begin{example}
The following type assignments are valid:
\begin{varitemize}
\item $\forall \Gamma \text{ a context , and }  \sigma \text{ a type: }$ $\Gamma \proves I: \sigma \rightarrow \sigma$;
\item For every function type $\tau$, and all typing context $\Gamma$,  $\Gamma \proves \fix x: \tau$;
\item The previous point allow us to see that for all type $\sigma$, ${\emptyset \proves \fix x:\ity \rightarrow \sigma}$.So $\forall \Gamma\text{ a context , and }  \sigma\text{ a type: }$ $\Gamma \proves \Omega: \sigma$;
\item $\forall \Gamma\text{ a context , and }  \sigma\text{ a type: }$ $\Gamma \proves I \oplus \Omega: \sigma \rightarrow \sigma$;
\item $\emptyset \proves \fix \left((\lambda z. \underline{0}) \oplus \lambda z. ((x\, \underline{0})+\underline{1})\right): \ity \rightarrow \ity$. 
\end{varitemize}
\end{example}
\subsection{Operational Semantics}
Because of the probabilistic nature of choice in \PCFLP, a program 
doesn't evaluate to a value, but to a probability distribution of values. 
Therefore, we need the following notions 
to define an evaluation relation.
\begin{definition}
\emph{Values} are terms of the following form:
$$
\valone\bnf\cnst{n}\midd\cnst{b}\midd\nil\midd\abstrac{\termone}\midd\fix{\termone}\midd\lst{\termone}{\termone}\midd\pair{\termone}{\termone}.
$$
We will call \values\ the set of values, and we note 
$\val{\sigma}=\values\cap\termty{\sigma}$. A \emph{value distribution} 
is a function $\distrone:$ \values$\rightarrow [0,1]$, such that $\sum_{\valone\in\values}\distrone(\valone) \leq 1$.
Given a value distribution $\distrone$, we will note $\supp{\distrone}$ the set of those values $\valone$ such that 
$\distrone(\valone) > 0$. A value distribution $\distrone$ is said \emph{finite} whenever $\supp{\distrone}$ 
has finite cardinality. If $\valone$ is a value, we note $\distrv{\valone}$ the value distribution 
$\distrone$ such that $\distrone(\valtwo)=1$ if $\valtwo=\valone$ and $\distrone(\valone)=0$ 
otherwise. Value distributions can be ordered pointwise.
\end{definition}
We first give an \emph{approximation} semantics, which attributes \emph{finite} probability distributions to terms, 
and only later define the actual semantics, which will be the least upper bound of all distributions obtained through
the approximation semantics. Big-step semantics is given by way of a binary relation $\Downarrow$ 
between closed terms and value distributions, which is defined by 
the set of rules from Figure \ref{fig:cbvevalrules}.
\begin{figure}[htpb]
\fbox{
\begin{minipage}{.99\textwidth}
{\center
$$
\AxiomC{}	
\RightLabel{\footnotesize $b_e$}
\UnaryInfC{\small $ \evav{\termone}{\emptyset} $ }
\DisplayProof 
\qquad
\AxiomC{}	
\RightLabel{\footnotesize $b_v$}
\UnaryInfC{\small $ \evav{\valone}{\distrv{\valone}} $ }
\DisplayProof
\qquad
\AxiomC{\small $\evalv {\termone}{\distrone}$}	
\AxiomC{\small$\evalv {\termtwo}{\distrtwo}$}
\RightLabel{$\footnotesize b_{op}$}
\BinaryInfC{\small$ \evav {\op{\termone}{\termtwo}} {\sum\limits_{\underline{n} \in S(\distrone),\underline{m} \in S(\distrtwo)}
    \distrone(\underline{n})\distrtwo(\underline{m})\distrv{\underline{\opn{n}{m}  }}} $ }
\DisplayProof 
$$
$$
\AxiomC{\small$\evalv{\termone}{K}$}
\AxiomC{\small$\evalv{\termtwo}{F}$}	
\AxiomC{\small$\{ {\evalv {\subst P v x} {E}} _{P,v}\}_{\abstrac P \in S(\distr K), \, v \in S(\distr F)}$}
\noLine
\TrinaryInfC{\small$\{{\evalv {\subst Q {\fix Q}{x}v } {G}} _{Q,v} \}_{\fix Q \in S(\distr{K}), \, v \in S(\distr F)}$}
\RightLabel{$\footnotesize b_a$}
\UnaryInfC{\small$ 
  \evav {\termone\termtwo}  {\sum_{v \in S(\distr F)} \distr{F} (v) \left( \sum_{\abstrac P \in S(\distr{K})} \distr{K}(\abstrac P).\distr{E}_{P,v}  + 
  \sum_{\fix Q \in S(\distr{K})} \distr{K}(\fix Q) .\distr{G}_{Q,v}\right)}$}
\DisplayProof 
$$
$$
\AxiomC{\small $\evalv {L} D$}
\AxiomC{\small$\evalv {M_1} {E_1}$}	
\AxiomC{\small$\evalv {M_2} {E_2}$}
  \RightLabel{$\footnotesize b_{if}$}
  \TrinaryInfC{\small$ \evav {\ifth L {M_1} {M_2} } \distr{D} (\underline{\ttrue})\distr{E_1} + \distr{D}(\underline{\ffalse}) \distr{E_2} $ }
\DisplayProof 
$$
{\scriptsize
$$
\AxiomC{\small$\evalv {L} D$}
\AxiomC{\small$\evalv {M_1} E$}
\AxiomC{\small$(\evalv H G{_{(H,T)}}, \, \evalv T K
{_{(H,T)}})_{H::T \in S(\distrone) }$}	
\AxiomC{\small$\{\evalv {M_2\{V/h,W/t\}} F {_{V,W}}\}_{\substack{H::T \in S(\distrone) , \,V \in S(\distr G_{(H,T)}),\\  W \in S(\distr {K}_{(H,T)})}}$}
  \RightLabel{\small$\footnotesize b_{case}$}
  \QuaternaryInfC{\small$ \evav {\caset L {M_1} {M_2}} {\distr{D}(nil)\distr{E} + \sum\limits_{H::T \in S(\distrone)}\sum\limits_{\substack{V \in S(\distr G_{(H,T)})\\ W \in S(\distr {K}_{(H,T)}) }}\distr{D}(H::T) \distr {G}_{(H,T)}(V) \distr {K}_{(H,T)}(W) \distr{F}_{V,W}} $ }
\DisplayProof
$$}
$$
\AxiomC{\small$\evav {M} {\distr{D}}$}
\AxiomC{\small$\{\evav {P} {\distr{E}_P}\}_{\pair P N \in S(\distr{D})}$}
  \RightLabel{$\footnotesize b_{fst}$}
  \BinaryInfC{\small$ \evav {\fst M} {\sum_{\pair P N \in S(\distr{D})} \distr{D}(\pair P N). \distr{E}_P } $}
  \DisplayProof \quad
\AxiomC{\small$\evalv {M} D$}
\AxiomC{\small$\{\evav {N} {\distr{E}_N}\}_{\pair P N \in S(\distr{D})}$}
  \RightLabel{$\footnotesize b_{snd}$}
  \BinaryInfC{\small$ \evav {\snd M} {\sum_{\pair P N \in S(\distr{D})} \distr{D}(\pair P N). \distr{E}_N } $}
  \DisplayProof
$$
$$
\AxiomC{\small$\evalv {M_1} D{_1}$}	
\AxiomC{\small$\evalv {M_2} D{_2}$}
  \RightLabel{$\footnotesize b_s$}
  \BinaryInfC{\small$ \evav {M_1 \oplus M_2} {\frac{1}{2}\distr{D}_1} + \frac{1}{2}\distr{D}_2 $ }
\DisplayProof
$$
}
\end{minipage}}
\caption{Evaluation --- Rule Selection}
\label{fig:cbvevalrules}
\end{figure}
This evaluation relation, by the way, is the natural extension to \PCFLP\ of the evaluation relation given 
in \cite{DalLagoZorzi} for the untyped probabilistic $\lambda$-calculus. 
Please observe how function arguments are evaluated before being passed to functions. Moreover, 
$\lst{\termone}{\termtwo}$ is a value even if $\termone$ or $\termtwo$ are not, which means that 
lists are \emph{lazy} and potentially infinite.
\begin{proposition}
Call-by-value evaluation preserves typing, that is:
if $\evalv{\termone}{\distrone}$, and $\termone\in\termty{\sigma}$, then for every 
$\valone\in\supp{\distrone}$, $\valone\in\val{\sigma}$. 
\end{proposition}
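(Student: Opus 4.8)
The plan is to proceed by induction on the derivation of $\termone \Downarrow \distrone$, i.e.\ on the last rule of Figure~\ref{fig:cbvevalrules} used. Two standard facts about the type system of Figure~\ref{fig1} are needed, and I would record them first. The first is a \emph{generation} (inversion) lemma: the type of a compound term forces the types of its immediate subterms, e.g.\ $\emptyset \proves \abstr{\varone}{P} : \arr{\typone}{\typtwo}$ gives $\varone{:}\typone \proves P : \typtwo$; $\emptyset \proves \fix{Q} : \arr{\typone}{\typtwo}$ gives $\varone{:}\arr{\typone}{\typtwo} \proves Q : \arr{\typone}{\typtwo}$; $\emptyset \proves \pair{P}{\termtwo} : \cprod{\typone}{\typtwo}$ gives $\emptyset \proves P : \typone$ and $\emptyset \proves \termtwo : \typtwo$; and $\emptyset \proves \lst{H}{T} : \lstty{\typone}$ gives $\emptyset \proves H : \typone$ and $\emptyset \proves T : \lstty{\typone}$. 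The second is a \emph{substitution} lemma: if $\Gamma, \varone{:}\typone \proves \termone : \typtwo$ and $\emptyset \proves \termtwo : \typone$, then $\Gamma \proves \subst{\termone}{\termtwo}{\varone} : \typtwo$, with the evident two-variable version used for the list destructor.

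With these in place the induction is routine in most cases. Rule $b_e$ yields the empty distribution, so the claim is vacuous; rule $b_v$ requires $\termone$ to be a value $\valone$, and since $\valone \in \termty{\typone}$ is a value we get $\valone \in \val{\typone}$ at once. For $b_s$ and $b_{if}$ the support of the conclusion is contained in the union of the supports of the premises' distributions, and the relevant subterms (the two operands of $\oplus$, the two branches of an $\ifth{\cdot}{\cdot}{\cdot}$) all carry the type of the whole term, so the induction hypothesis applies directly. For $b_{op}$ every value in the support is a constant of the result type $\resty{\gamma}$, which is exactly the type assigned to $\op{\termone}{\termtwo}$. For $b_{fst}$ and $b_{snd}$ the generation lemma tells us that the first (resp.\ second) component of a pair value of type $\cprod{\typone}{\typtwo}$ has type $\typone$ (resp.\ $\typtwo$), and applying the induction hypothesis to the evaluation of that component closes the case.

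The genuinely delicate rules are $b_a$ and $b_{case}$, and I expect the application rule to be the main obstacle, since it is where both auxiliary lemmas are exercised and where the two shapes of functional value must be treated separately. Applying the induction hypothesis to the function premise $\termone \Downarrow \distr{K}$ shows that each value of $\supp{\distr{K}}$ has type $\arr{\typone}{\typtwo}$ and hence is either $\abstr{\varone}{P}$ or $\fix{Q}$, while each $v \in \supp{\distr{F}}$ has type $\typone$. In the abstraction subcase, generation gives $\varone{:}\typone \proves P : \typtwo$, so substitution gives $\emptyset \proves \subst{P}{v}{\varone} : \typtwo$, and the induction hypothesis on $\subst{P}{v}{\varone} \Downarrow \distr{E}_{P,v}$ yields the claim for $\distr{E}_{P,v}$. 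In the fixpoint subcase, generation gives $\varone{:}\arr{\typone}{\typtwo} \proves Q : \arr{\typone}{\typtwo}$; since $\fix{Q}$ itself has type $\arr{\typone}{\typtwo}$, substitution gives $\emptyset \proves \subst{Q}{\fix{Q}}{\varone} : \arr{\typone}{\typtwo}$, so $\subst{Q}{\fix{Q}}{\varone}\,v$ has type $\typtwo$ and the induction hypothesis applies to $\distr{G}_{Q,v}$. As the support of the conclusion is contained in $\bigcup \supp{\distr{E}_{P,v}} \cup \bigcup \supp{\distr{G}_{Q,v}}$, all its values have type $\typtwo$, as required.

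The rule $b_{case}$ is handled in the same spirit. The $\nil$ branch is immediate from the induction hypothesis applied to $M_1$, whose type is that of the whole destructor. For a cons value $\lst{H}{T} \in \supp{\distr{D}}$ of type $\lstty{\typone}$, the generation lemma supplies $\emptyset \proves H : \typone$ and $\emptyset \proves T : \lstty{\typone}$; the induction hypothesis applied to the evaluations of $H$ and $T$ shows that every substituted value $V$ has type $\typone$ and every $W$ has type $\lstty{\typone}$. Since the typing rule for $\caset{\cdot}{\cdot}{\cdot}$ gives $h{:}\typone, t{:}\lstty{\typone} \proves M_2 : \typtwo$, the two-variable substitution lemma yields $\emptyset \proves M_2\{V/h,W/t\} : \typtwo$, and the induction hypothesis on $M_2\{V/h,W/t\} \Downarrow \distr{F}_{V,W}$ closes the argument, every value in the support of the conclusion having type $\typtwo$.
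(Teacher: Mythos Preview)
Your proof is correct and is the standard subject-reduction argument one expects here: induction on the evaluation derivation, with a generation lemma to invert the typing of values and a substitution lemma to handle the binding constructs. The paper itself states this proposition without proof, treating it as routine; your write-up supplies exactly the argument one would fill in, and the case analysis (in particular the split on $\abstr{\varone}{P}$ versus $\fix{Q}$ in rule $b_a$, and the two-variable substitution in $b_{case}$) is handled properly.
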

\begin{lemma}\label{lem:directed}
  For every term $\termone$, if $\evalv{\termone}{\distrone}$, 
  and $\evalv{\termone}{\distrtwo}$, then there exists a distribution 
  $\distrthree$ such that $\evalv{\termone}{\distrthree}$ with 
  $\distrone\leq\distrthree$, and $\distrtwo\leq\distrthree$.  
\end{lemma}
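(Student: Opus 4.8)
The plan is to argue by induction on the sum of the sizes of the two derivations, $\pfone$ of $\evalv{\termone}{D}$ and $\pftwo$ of $\evalv{\termone}{E}$. The structural fact that makes this work is that the rules of Figure~\ref{fig:cbvevalrules} are \emph{syntax-directed} modulo the rule $b_e$: apart from $b_e$, which derives the empty distribution $\emptyset$ for \emph{any} term, the last rule of a derivation of $\evalv{\termone}{D}$ is uniquely determined by the outermost constructor of $\termone$. Consequently, given $\pfone$ and $\pftwo$, either at least one of them ends with an instance of $b_e$, or the two of them end with the same rule. If, say, $\pfone$ ends with $b_e$, then $\distrone=\emptyset$ and it suffices to take $\distrthree=\distrtwo$, since $\evalv{\termone}{F}$ holds through $\pftwo$ and $\distrone=\emptyset\leq\distrtwo=\distrthree$; the case in which $\pftwo$ ends with $b_e$ is symmetric. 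When $\termone$ is a value and neither derivation ends with $b_e$, both end with $b_v$ and produce the very same distribution $\distrv{\valone}$, so we take $\distrthree=\distrv{\valone}$.

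In every remaining case the two derivations end with the \emph{same} structural rule, and I would apply the induction hypothesis to each matching pair of immediate premises and then reassemble the result with that rule. The delicate case—representative of all the others—is application, $\termone=\termtwo\,\termthree$ (the rule $b_{case}$ is handled in exactly the same fashion). Here $\pfone$ has immediate premises evaluating $\termtwo$ and $\termthree$, together with, indexed by $\abstrac P$ (resp.\ $\fix Q$) in the support of the distribution for $\termtwo$ and by $v$ in the support of the distribution for $\termthree$, evaluations of the redexes $\subst P v x$ (resp.\ of the unfoldings of $\fix Q$); $\pftwo$ has premises of the same shape but with possibly different sub-distributions. Applying the induction hypothesis to the two sub-derivations concerning $\termtwo$, and separately to those concerning $\termthree$, yields distributions $\distr K$ and $\distr J$ with $\evalv{\termtwo}{K}$ and $\evalv{\termthree}{J}$ that dominate the corresponding sub-distributions of both $\pfone$ and $\pftwo$; in particular the supports used in $\pfone$ and in $\pftwo$ are all contained in $\supp{\distr K}$ and $\supp{\distr J}$.

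It then remains to supply, for each $\abstrac P\in\supp{\distr K}$ and $v\in\supp{\distr J}$, an evaluation of $\subst P v x$ dominating those used by $\pfone$ and $\pftwo$ (and similarly for the $\fix Q$ premises). For an index occurring in both derivations I would invoke the induction hypothesis on the two sub-derivations; for an index occurring in only one I would keep that derivation; and for an index occurring in neither I would use $b_e$ to derive $\emptyset$. Feeding $\distr K$, $\distr J$ and these chosen sub-derivations into rule $b_a$ produces a distribution $\distrthree$ with $\evalv{\termone}{F}$. Finally, since every coefficient in the conclusion of $b_a$ is a nonnegative real and the operation of scaling and summing value distributions is monotone, and since each chosen sub-distribution dominates the one used in $\pfone$ (resp.\ $\pftwo$) while the additional indices only add nonnegative mass, we obtain $\distrone\leq\distrthree$ and $\distrtwo\leq\distrthree$. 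The remaining rules $b_{op}$, $b_{if}$, $b_{fst}$, $b_{snd}$ and $b_s$ are treated in the same, but simpler, way. I expect the main obstacle to be precisely this bookkeeping in the $b_a$ and $b_{case}$ cases: the premises are indexed by the supports of sub-distributions that differ between the two derivations, and one must align those index sets (exploiting that $\emptyset$ is always derivable) and check that pointwise domination is preserved by the nonnegative combination performed by the rule.
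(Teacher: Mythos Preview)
Your proposal is correct and follows the same approach as the paper, namely induction on the derivation(s) of $\evalv{\termone}{\distrone}$; the paper's own proof is a single sentence (``by induction on the structure of derivations for $\evalv{\termone}{\distrone}$''), so your write-up is considerably more explicit, in particular in the handling of the $b_a$ case and the monotonicity of the rule conclusions, but the underlying argument is the same.
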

\begin{proof}
The proof is by induction on the structure of derivations for $\evalv \termone {\distrone}$.
\end{proof}

\begin{definition}
  For any closed term $\termone$, we define the \emph{big-steps semantics} 
  $\diS{\termone}$ of $\termone$ as $\sup_{\evalv{\termone}{\distrone}}\distrone$.
\end{definition}
Since distributions form an $\omega$-complete partial order, and for every $\termone$
the set of those distributions $\distrone$ such that $\evalv{\termone}\distrone$ is a countable  
directed set (by Lemma \ref{lem:directed}), this definition is well-posed,
and associates a unique value distribution to every term.
In \cite{DalLagoZorzi} various ways to define coinductively call-by-value approximation 
semantics on probabilistic untyped $\lambda$-calculus were introduced, and it was proved that the semantics obtained by taking the 
greatest lower bound of this coinductive approximation semantics was equivalent to the inductively characterized semantics. 
Although it is possible to extend similarly those definitions for \PCFLP\, we do not do it, and only limit our
attention to inductively defined probabilistic semantics.

The distribution $\diS{\termone}$ can be obtained equivalently by taking the least upper bound
of all finite distributions $\distrone$ for which $\evalvs{\termone}{\distrone}$, where $\Rightarrow$ is a
binary relation capturing \emph{small-step} evaluation of terms. 
This proceeds as follows. The first step consists in defining the notion of an
\emph{evaluation context}, which in the case of \PCFLP\ is the following one
\begin{align*}
  \ectxone::=&\emptycon\midd \ectxone \termone \midd \valone \ectxone 
     \midd \op{\ectxone}{\termone} \midd \op{\valone}{\ectxone} \midd 
     \midd \fst \ectxone \midd \snd \ectxone \midd \ifth{\ectxone}{\termone}{\termone}\\
     &\midd\caset{\ectxone}{\termone}{\termone}
      \midd\caset{\ectxone::\termone}{\termone}{\termone}\\
     &\midd\caset{\valone::\ectxone}{\termone}{\termone}
\end{align*}
The next step consists in giving a relation modelling one-step reduction. In our step
this takes the form of a relation $\SSred$ between closed terms and \emph{sequences}
of closed terms, which is defined as the smallest such relation including satisfying
the following rules:
\begin{align*}
  (\abstr{\varone}{\termone})\valone&\SSred\subst{\termtwo}{\valone}{\varone};\\
  (\fix{\termone})\valone&\SSred(\subst{\termone}{(\fix{\termone})}{\varone})\valone;\\
  \op{\underline{n}}{\underline{m}}&\SSred\underline{\mathsf{op}(m,n)};\\
  \fst{\pair{\valone}{\valtwo}}&\SSred\valone;\\
  \snd{\pair{\valone}{\valtwo}}&\SSred\valtwo;\\
  \ifth{\ttrue}{\termone}{\termtwo}&\SSred\termone;\\
  \ifth{\ffalse}{\termone}{\termtwo}&\SSred\termtwo;\\
  \caset{\nil}{\termone}{\termtwo}&\SSred\termone;\\
  \caset{\valone::\valtwo}{\termone}{\termtwo}&\SSred\subst{\termtwo}{\valone,\valtwo}{h,t}.
\end{align*}
and closed under all evaluation contexts, i.e., if $\termone\SSred\termtwo_1,\ldots,\termtwo_n$,
then we also have that $\ectxone[\termone]\SSred\ectxone[\termtwo_1],\ldots,\ectxone[\termtwo_n]$.
Proper probabilistic computation enters the playground as soon as we define the relation $\Rightarrow$
between closed terms and value distributions, which is defined by inductively interpreting the
following three rules:
$$
\AxiomC{}	
\UnaryInfC{\small $\evalvs{\termone}{\emptyset} $ }
\DisplayProof 
\qquad
\AxiomC{}	
\UnaryInfC{\small $\evalvs{\valone}{\{\valone^1\}} $ }
\DisplayProof
\qquad
\AxiomC{\small $\termone\SSred\termtwo_1,\ldots,\termtwo_n$}
\AxiomC{\small $\evalvs{\termtwo_i}{\distrone_i}$}
\BinaryInfC{\small $\evalvs{\termone}{\sum_{1\leq i\leq n}}\frac{1}{n}\distrone_i$}
\DisplayProof
$$
\begin{theorem}[Big-step is Equivalent to Small-step~\cite{DalLagoZorzi}]
$\diS{\termone}=\sup_{\evalvs{\termone}{\distrone}}\distrone$.
\end{theorem}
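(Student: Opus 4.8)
The plan is to prove the two inequalities $\diS{\termone} \leq \sup_{\evalvs{\termone}{D}} \distr{D}$ and $\sup_{\evalvs{\termone}{D}}\distr{D} \leq \diS{\termone}$ separately, each reduced to a statement about individual derivations. Since by Lemma~\ref{lem:directed} the set of distributions $\distr{D}$ with $\evalv{\termone}{D}$ is directed, and the analogous directedness holds for $\Rightarrow$, it suffices to establish mutual \emph{cofinality}: (a) for every $\distr{D}$ with $\evalv{\termone}{D}$ there is $\distr{E}$ with $\evalvs{\termone}{E}$ and $\distr{D} \leq \distr{E}$; and (b) for every $\distr{E}$ with $\evalvs{\termone}{E}$ there is $\distr{D}$ with $\evalv{\termone}{D}$ and $\distr{E}\leq\distr{D}$. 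Taking suprema on both sides then yields the claimed equality, since two directed sets that are cofinal in one another have the same least upper bound.

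The key technical device for both directions is an \emph{evaluation-context lemma}, capturing the fact that evaluating $\ectxone[\termone]$ amounts to first evaluating $\termone$ and then continuing inside $\ectxone$. Concretely, I would prove that if $\evalv{\termone}{D}$ and, for each $\valone \in \supp{\distr{D}}$, $\evalv{\ectxone[\valone]}{E_\valone}$, then $\evalv{\ectxone[\termone]}{F}$ for $\distr{F} = \sum_{\valone} \distr{D}(\valone)\cdot \distr{E}_\valone$, together with the analogous statement for $\Rightarrow$ (which follows more directly from closure of $\SSred$ under evaluation contexts). This lemma is what reconciles the two rather different shapes of the systems: the big-step rules decompose a term according to its outermost constructor, whereas the small-step rules decompose it into an evaluation context filled with a redex.

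For direction (a) I would induct on the derivation of $\evalv{\termone}{D}$. In each case the big-step rule evaluates one or more immediate subterms and then combines the results; using the evaluation-context lemma for $\Rightarrow$ I simulate this by first driving the subterms to values via small-step reductions inside the appropriate evaluation contexts, then firing the corresponding head reduction from the $\SSred$ rules, and finally appealing to the induction hypotheses on the continuations. The rule $b_s$ for $\oplus$ matches exactly the averaging in the third $\Rightarrow$-rule. For direction (b) I would induct on the derivation of $\evalvs{\termone}{E}$; the only nontrivial rule is the third one, where $\termone \SSred \termtwo_1, \ldots, \termtwo_n$ and $\evalvs{\termtwo_i}{E_i}$. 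By the induction hypothesis each $\termtwo_i$ admits a big-step distribution dominating $\distr{E}_i$, so it suffices to show that a single head step is absorbed by the big-step semantics: if $\termone \SSred \termtwo_1,\ldots,\termtwo_n$ and $\evalv{\termtwo_i}{D_i}$, then $\evalv{\termone}{D}$ with $\frac{1}{n}\sum_i \distr{D}_i \leq \distr{D}$. This I would prove by case analysis on the head redex, checking that each $\SSred$ rule corresponds to unfolding exactly one layer of the matching big-step rule --- e.g. $(\abstrac{P})\valone \SSred \subst{P}{\valone}{\varone}$ is a premise of $b_a$, and $\termone \oplus \termtwo \SSred \termone, \termtwo$ is exactly $b_s$.

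The hard part will be the evaluation-context lemma together with the application and $\mathsf{case}$ cases, where laziness of pairs and lists (so that $\pair{\termone}{\termtwo}$ and $\lst{\termone}{\termtwo}$ are already values) and the self-substitution in the $\fix{Q}$ reduction make the bookkeeping of distributions heavy; in particular the $b_a$ case must simultaneously handle the $\abstrac{P}$ and the $\fix{Q}$ values in the support of the function distribution. Once that lemma is in place, the two inductions are routine. Finally, I note that the argument is insensitive to the constants and operators that \PCFLP\ adds over the untyped calculus of~\cite{DalLagoZorzi}, so the equivalence transfers essentially verbatim.
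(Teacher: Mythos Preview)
The paper does not give a proof of this theorem: it is stated with a citation to~\cite{DalLagoZorzi} and no argument is supplied here. So there is nothing to compare your proposal against within this paper.

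That said, your plan is the standard and correct strategy for this kind of result, and it matches the shape of the argument in~\cite{DalLagoZorzi} for the untyped calculus. The reduction to mutual cofinality is the right move; the evaluation-context lemma is exactly the tool that bridges the outermost-constructor decomposition of $\Downarrow$ with the redex-in-context decomposition of $\SSred$; and the two inductions you describe go through. Two small remarks. First, you invoke directedness of $\{\distr{D}\mid\evalvs{\termone}{D}\}$ without proof: this is true and easy (induction on the derivation, same pattern as Lemma~\ref{lem:directed}), but you should state it as a separate lemma rather than assume it. Second, in direction~(b) your head-step absorption lemma must be stated for an \emph{arbitrary} $\SSred$-step, i.e.\ for $\ectxone[\termfive]\SSred\ectxone[\termfive_1],\ldots,\ectxone[\termfive_n]$ with a basic redex $\termfive$; the case analysis is then on the shape of $\ectxone$ together with the basic rule fired, and you need the big-step evaluation-context lemma already at this point, not only in direction~(a). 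You have essentially said this, but make the dependency explicit. The heavy cases you flag ($b_a$ with the $\abstrac{P}$/$\fix{Q}$ split, and $b_{\mathit{case}}$ with lazy lists) are indeed the only places where the bookkeeping is nontrivial; nothing conceptually new is needed beyond careful index-chasing.
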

\begin{example}
Approximation semantics does not allow to derive any assertion about $\Omega$, and indeed
$\diS{\Omega}=\emptyset$. Similarly, $\diS{\id}=\{\id^1\}$.
Recursion allows to define much more interesting programs, e.g.
$\termone=\left(\fix{(\abstr{\vartwo}{\vartwo})\oplus\abstr{\vartwo}{x(\vartwo+\underline{1})}}\right)\underline {0}$. 
Indeed, $\diS{\termone}(\underline{n})=\frac{1}{2^{n+1}}$ for every $n\in\NN$, even if
$\termone\not\Downarrow\diS{\termone}$.
As another example, $\diS{(\abstrac{I} \oplus \abstrac{\Omega}) \underline{0}} = \frac{1}{2}\distrv{I}$.
Finally, $\diS{(\fix{I\oplus x}){\underline 0}} = \distrv {\underline 0}$, but please observe that we don't have 
$\evalv {\left(\fix{\left(I \oplus x\right)}\right) {\underline 0}}{\{\underline{0}^1\}}$.
\end{example}
\subsection{Relations}
The notion of \emph{typed relation} corresponds to a family of
relations $(\relone_\sigma^\Gamma)_{\sigma,\Gamma}$, each of them
a binary relation on $\termtyg{\sigma}{\Gamma}$. We extend the usual notion of symmetry, reflexivity and transitivity to 
typed relations in the following way:
\begin{definition}
A \emph{typed relation} is a family $\relone=(\relone_\sigma^\Gamma)_{\sigma,\Gamma}$, where each 
$\mathcal{R}_\sigma^\Gamma$ is a binary relation on $\termtyg \sigma \Gamma$. Sometime,
$\rela{\termone}{\relone_\sigma^\Gamma}{\termtwo}$ will be noted as $\Gamma \proves\rela{\termone}{\relone_\sigma}{\termtwo}$ 
(or as $\Gamma \proves\relonea{\termone}{\termtwo}:\sigma$). A typed relation $\relone$ is said to be:
\begin{varitemize}
\item 
  \emph{reflexive} if 
  $\forall \termone \in \termtyg{\sigma}{\Gamma}$ it holds that $\Gamma \proves\relonea{\termone}{\termone}:\sigma$;
 \item 
   \emph{symmetric} if  
   $\forall \sigma,\Gamma.\,\forall \termone, N \in \termtyg \sigma \Gamma.\,\Gamma \proves 
   \termone {\relone} N:\sigma \Rightarrow \Gamma \proves \termtwo {\relone} \termone: \sigma$;
 \item 
   \emph{transitive} if 
   $\forall \sigma,\Gamma .\, \forall \termone, N,L \in {\termtyg \sigma \Gamma} . \,\left(\Gamma \proves 
   \termone {\relone} N: \sigma \wedge \Gamma \proves N {\relone} L: \sigma \right) \Rightarrow 
   \Gamma \proves \termone {\relone} L: \sigma$.
\end{varitemize}
\end{definition}
\begin{definition}
Let $\relone$ be a typed relation. We define the \emph{compatibility} of 
$\relone$  in the expected way. For instance, if $\relone$ is compatible, the following properties should hold:  
\begin{varitemize}
\item 
  $\Gamma \proves \underline{k}\,{\relone}\, \underline{k}: \ity$ for every $k \in \mathbb{N}$;
\item 
  $x:\tau,\Gamma \proves x\, {\relone} \,x\,:\,\tau$ for every $\varone$ and for every $\tau$;
\item 
  $\Gamma \proves \termone {\relone}\termtwo:\sigma$ and 
  $\Gamma \proves \termthree {\relone}\termfour:\sigma$ implies
  $\Gamma \proves  ({\termone} \oplus {\termthree})\,{\relone}\,({\termtwo} \oplus {\termfour})  \,:\, \sigma $.
\end{varitemize}
\end{definition}
Please observe that a compatible typed relation $\relone$ is always reflexive, since $\mathcal{R}$ is reflexive 
for terms of ground form, and that $\relone$ is stable by the constructors of the language:
\begin{proposition}
Let ${\relone}$ be a typed relation. If ${\relone}$ is compatible, then ${\relone}$ is reflexive. 
\end{proposition}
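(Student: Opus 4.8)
The plan is to prove reflexivity by induction on the derivation of the typing judgement $\Gamma \proves \termone : \sigma$ (equivalently, on the structure of the term $\termone$), establishing that for every such $\termone$ we have $\Gamma \proves \termone\,{\relone}\,\termone : \sigma$. The engine of the argument is that compatibility, being defined ``in the expected way,'' provides exactly one closure clause per term constructor, each mirroring the corresponding typing rule of Figure~\ref{fig1}. The proof therefore amounts to pairing each typing rule with its matching compatibility clause at every node of the derivation.

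For the base cases I would treat the leaves of the grammar. If $\termone$ is a variable $\varone$, then $\Gamma \proves \varone : \sigma$ forces $(\varone,\sigma) \in \Gamma$, so we may write $\Gamma$ in the form $\varone:\sigma,\Gamma'$ and invoke the variable clause $\varone:\sigma,\Gamma' \proves \varone\,{\relone}\,\varone : \sigma$ directly. If $\termone$ is a constant $\cnst{n}$, $\cnst{b}$, or $\nil$, the corresponding constant clause of compatibility (e.g. $\Gamma \proves \cnst{k}\,{\relone}\,\cnst{k} : \ity$) gives the result immediately.

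For the inductive cases I would go through the constructors one at a time, in each case applying the induction hypothesis to the immediate subterms and feeding the resulting relational judgements into the matching compatibility clause. For the non-binding constructors --- pairing, $\fst{\termone}$, $\snd{\termone}$, application, $\op{\termone}{\termtwo}$, $\ifth{\termthree}{\termone}{\termtwo}$, consing, and $\oplus$ --- this is immediate. For instance, from $\Gamma \proves \termone : \sigma$ and $\Gamma \proves \termtwo : \sigma$ the induction hypothesis yields $\Gamma \proves \termone\,{\relone}\,\termone : \sigma$ and $\Gamma \proves \termtwo\,{\relone}\,\termtwo : \sigma$, whence the $\oplus$-clause closes $\Gamma \proves (\termone \oplus \termtwo)\,{\relone}\,(\termone \oplus \termtwo) : \sigma$.

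The only point requiring a little care is the binding constructors $\abstrac{\termone}$, $\fix{\termone}$, and $\caset{\termthree}{\termone}{\termtwo}$, where the premise of the typing rule lives in an enlarged context; there the induction hypothesis must be applied in that enlarged context. For example, for $\abstrac{\termone}$ with $\Gamma, \varone:\sigma \proves \termone : \tau$, the induction hypothesis gives $\Gamma, \varone:\sigma \proves \termone\,{\relone}\,\termone : \tau$, and the abstraction clause of compatibility then yields $\Gamma \proves \abstrac{\termone}\,{\relone}\,\abstrac{\termone} : \arr{\sigma}{\tau}$; the $\fix{\termone}$ and $\caset{\termthree}{\termone}{\termtwo}$ cases are analogous, the latter extending $\Gamma$ with $h:\sigma,\,t:\lstty{\sigma}$ for its third subterm. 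Since there is no deep content here, the only genuine ``hard part'' is bookkeeping: checking that the informal phrase ``in the expected way'' really does supply a reflexive closure clause for every constructor appearing in the typing rules, so that the induction has a clause to invoke at each node.
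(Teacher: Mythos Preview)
Your proposal is correct and matches the paper's approach: the paper gives no formal proof at all, only the one-sentence remark preceding the proposition that $\relone$ ``is reflexive for terms of ground form'' and ``is stable by the constructors of the language,'' which is precisely the induction on typing derivations you spell out. Your version is simply a careful unfolding of that sketch, with nothing missing.
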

Any typed relation capturing a notion of equivalence should be a congruence, this way being applicable
at any point in the program, possibly many times:
\begin{definition}
Let ${\relone}$ be a typed relation.
Then ${\relone}$ is said to be a \emph{precongruence relation} if ${\relone}$ is transitive and compatible, and
${\relone}$ is said to be a \emph{congruence relation} if ${\relone}$ is symmetric, transitive  and compatible. 
\end{definition}
We write $\relct$ for the set of type-indexed families $\relone = (\mathcal{R}_\sigma)_\sigma$ of binary relations
$\mathcal{R}_\sigma$ between the terms in $\termty{\sigma}$.
\subsection{Context Equivalence}
The general idea of context equivalence is the following: two terms $\termone$ and $\termtwo$ are equivalent if any 
occurrence  of $\termone$ in any program $\termthree$ can be replaced with $\termtwo$  without changing 
the observable behaviour of $\termthree$. The notion of a context allows us to formalize this idea.
\begin{definition}
A \emph{context} is a syntax tree with a unique hole:
\begin{align*}
  \ctxone::=&\emptycon \midd \abstrac \ctxone \midd \ctxone \termone \midd \termone \ctxone 
     \midd \ctxone \oplus \termone \midd\termone \oplus \ctxone\\
  \midd &\fix \ctxone \midd {\op \termone \ctxone} \midd {\op \ctxone \termone} \midd 
     \pair \ctxone \termone \midd \pair \termone \ctxone \midd \fst \ctxone \midd \snd \ctxone \\
  \midd &\ifth \termone \termone \ctxone \midd \ifth \termone \ctxone \termone \midd 
     \ifth \ctxone \termone \termone \midd \termone::\ctxone \midd \ctxone::\termone \\
  \midd &\caset \ctxone \termone \termone \midd \caset \termone \termone \ctxone \midd 
     \caset \termone \ctxone \termone   
\end{align*}
Given a context $\ctxone$ and a term $\termone$, $\act{\ctxone}{\termone}$ is the term obtained by
substituting the unique hole in $\ctxone$ with $\termone$.
\end{definition}
When defining context equivalence, we work with \emph{closing} contexts, namely those contexts 
$\ctxone$ such that $\ctxone[\termone]$, and $\ctxone[\termtwo]$ are closed terms (where $\termone$
and $\termtwo$ are the possibly open terms being compared).
We are now going to define a notion of typing for contexts. Judgments have the shape
$\Gamma \proves \tycon{\ctxone}{\Delta}{\sigma}{\tau}$, which informally means that if $\termone$ is a term of type 
$\sigma$ under the typing context  $\Delta$, then the hole of $\ctxone$ can be filled by $\termone$, obtaining 
a term of type $\tau$ in the context $\Gamma$. So in order to define well-typed substitutions 
for a term in a context, we extend, in a natural way, the notion of typing to contexts:
\begin{definition}
  A \emph{typing judgement for contexts} is an assertion of the form:
  $\Gamma \proves\tycon{\ctxone}{\Delta}{\sigma}{\tau}$, where $\Gamma$, $\Delta$ are typing contexts, 
  $\ctxone$ is a context, and $A$, $B$ are types.  A judgement is valid if it can be derived by the rules of the 
  formal system given in Figure \ref{fig:typcon}.
\end{definition}
\begin{figure}[htpb]
\fbox{
\begin{minipage}{.97\textwidth}
{\small
\center
\begin{prooftree}
\AxiomC{$\dom{\Gamma}\cap\dom{\Delta}=\emptyset$}
\UnaryInfC{$\Gamma \proves \emptycon(\Delta;A): A$}
\end{prooftree}
\begin{prooftree}
\AxiomC{$\Gamma,x:\sigma \proves  C \,(\Delta;B): \tau$}
\AxiomC{$x \not \in dom(\Gamma), x \not \in dom(\Delta)$}
\BinaryInfC{$\Gamma \proves {\abstrac C}\,(x:\sigma\, ,\, \Delta\,;\,B): \sigma \rightarrow \tau$}
\end{prooftree}
\begin{prooftree}
\AxiomC{$\Gamma,x:\sigma \proves  C \,(\Delta;B): \tau$}
\RightLabel{$x \not \in dom (\Gamma), x \not \in dom(\Delta) \atop \tau\text{ a function type }$}
\RightLabel{{\small $fix$}}
\UnaryInfC{$\Gamma \proves {\fix C}\,(x:\tau\, ,\, \Delta\,;\,B): \tau$}
\end{prooftree}
\begin{align*}
\AxiomC{$\Gamma \proves  C \,(\Delta;B):\sigma \rightarrow \tau$}
\AxiomC{$\Gamma \proves M: \sigma $}
\BinaryInfC{$\Gamma \proves  C M \, (\Delta\,;\,B): \tau$}
\DisplayProof & &
\AxiomC{$\Gamma \proves  C \,(\Delta;B):\sigma $}
\AxiomC{$\Gamma \proves M: \sigma \rightarrow \tau$}
\BinaryInfC{$\Gamma \proves  M C \, (\Delta\,;\,B): \tau$}
\DisplayProof
\end{align*}
\begin{align*}
\AxiomC{$\Gamma \proves  C \,(\Delta;B):\sigma \times \tau$}
\UnaryInfC{$\Gamma \proves  \fst C  \, (\Delta\,;\,B): \sigma$}
\DisplayProof & &
\AxiomC{$\Gamma \proves  C \,(\Delta;B):\sigma \times \tau$}
\UnaryInfC{$\Gamma \proves  \snd C  \, (\Delta\,;\,B): \tau$}
\DisplayProof
\end{align*}
\begin{align*}
\AxiomC{$\Gamma \proves  C \,(\Delta;B):\sigma$}
\AxiomC{$\Gamma \proves M: \tau $}
\BinaryInfC{$\Gamma \proves   \pair C M  \, (\Delta\,;\,B): \sigma \times \tau $}
\DisplayProof & &
\AxiomC{$\Gamma \proves  C \,(\Delta;B):\sigma$}
\AxiomC{$\Gamma \proves M: \tau $}
\BinaryInfC{$\Gamma \proves   \pair M C \, (\Delta\,;\,B): \tau \times \sigma$}
\DisplayProof
\end{align*}
\begin{align*}
\AxiomC{$\Gamma \proves  C \,(\Delta;B):\sigma$}
\AxiomC{$\Gamma \proves M: [\sigma] $}
\BinaryInfC{$\Gamma \proves   C:: M  \, (\Delta\,;\,B): [\sigma]$}
\DisplayProof
& &
\AxiomC{$\Gamma \proves  C \,(\Delta;B):[\sigma]$}
\AxiomC{$\Gamma \proves M: \sigma $}
\BinaryInfC{$\Gamma \proves   M:: C  \, (\Delta\,;\,B): [\sigma]$}
\DisplayProof
\end{align*}
\begin{align*}
\AxiomC{$\Gamma \proves  C \,(\Delta;B):\sigma$}
\AxiomC{$\Gamma \proves M: \sigma $}
\BinaryInfC{$\Gamma \proves   M \oplus C  \, (\Delta\,;\,B): \sigma$}
\DisplayProof & &
\AxiomC{$\Gamma \proves  C \,(\Delta;B):\sigma$}
\AxiomC{$\Gamma \proves M: \sigma $}
\BinaryInfC{$\Gamma \proves   C \oplus M  \, (\Delta\,;\,B): \sigma$}
\DisplayProof
\end{align*}
\begin{align*}
\AxiomC{$\Gamma \proves  C \,(\Delta;B):\ity$}
\AxiomC{$\Gamma \proves M: \ity $}
\BinaryInfC{$\Gamma \proves {\op M C}  \, (\Delta\,;\,B): \gamma_\mathsf{op}$}
\DisplayProof & &
\AxiomC{$\Gamma \proves  C \,(\Delta;B):\ity$}
\AxiomC{$\Gamma \proves M: \ity $}
\BinaryInfC{$\Gamma \proves {\op{C}{\termone}}\, (\Delta\,;\,B): \gamma_\mathsf{op} $}
\DisplayProof
\end{align*}
\begin{prooftree}
\AxiomC{$\Gamma \proves  C \,(\Delta;B):\bty$}
\AxiomC{$\Gamma \proves M_1: \sigma $}
\AxiomC{$\Gamma \proves M_2: \sigma $}
\TrinaryInfC{$\Gamma \proves  \ifth  C {M_1} {M_2}  \, (\Delta\,;\,B): \sigma $}
\end{prooftree}
\begin{prooftree}
\AxiomC{$\Gamma \proves  C \,(\Delta;B):\sigma$}
\AxiomC{$\Gamma \proves L: \bty $}
\AxiomC{$\Gamma \proves M: \sigma $}
\TrinaryInfC{$\Gamma \proves  \ifth  L C M   \, (\Delta\,;\,B): \sigma $}
\end{prooftree}
\begin{prooftree}
\AxiomC{$\Gamma \proves  C \,(\Delta;B):\sigma$}
\AxiomC{$\Gamma \proves L: \bty $}
\AxiomC{$\Gamma \proves M: \sigma $}
\TrinaryInfC{$\Gamma \proves  \ifth  L M C  \, (\Delta\,;\,B): \sigma $}
\end{prooftree}
\begin{prooftree}
\AxiomC{$\Gamma \proves  C \,(\Delta;B):[\sigma]$}
\AxiomC{$\Gamma \proves M_1: \tau$}
\AxiomC{$\Gamma,h:\sigma, \, t: [\sigma] \proves M_2: \tau $}
\RightLabel{$h,t \not \in dom(\Gamma) $}
\TrinaryInfC{$\Gamma \proves  \caset C {M_1} {M_2} (\Delta;B): \tau $}
\end{prooftree}
\begin{prooftree}
\AxiomC{$\Gamma \proves  C \,(\Delta;B):\tau$}
\AxiomC{$\Gamma \proves M_1: [\sigma]$}
\AxiomC{$\Gamma,h:\sigma, \, t: [\sigma] \proves M_2: \tau $}
\RightLabel{$h,t \not \in dom(\Gamma) $}
\TrinaryInfC{$\Gamma \proves  \caset {M_1} C {M_2} (\Delta;B): \tau$}
\end{prooftree}
\begin{prooftree}
\AxiomC{$\Gamma, h:\sigma, \, t: [\sigma] \proves  C \,(\Delta;B):\tau$}
\AxiomC{$\Gamma \proves M_1: [\sigma]$}
\AxiomC{$\Gamma \proves M_2: \tau $}
\RightLabel{$h,t \not \in dom(\Gamma) \cup dom( \Delta) $}
\TrinaryInfC{$\Gamma \proves  \caset {M_1} {M_2} C (\Delta,h:\sigma,t:[\sigma]\,;\,B): \tau$}
\end{prooftree}}
\end{minipage}}
\caption{Context Type Assignment}
\label{fig:typcon}
\end{figure}
The operation $\termone \mapsto C[\termone]$ of substituting a \PCFLP\ term for a parameter 
in a context to obtain a new \PCFLP\ term respect typing in the following sense:
\begin{proposition}
Let be $\Gamma$, $\Delta$, such that $\dom \Gamma \cap \dom \Delta = \emptyset$.
Let be $\termone$ such that $\Gamma,\Delta\proves \termone: \sigma$, and $C$ a context such that: 
$\Gamma \proves C (\Delta; \sigma): \tau$.  Then $\Gamma \proves C[\termone]: \tau$.
\end{proposition}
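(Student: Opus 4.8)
The plan is to prove the statement by induction on the derivation of the context typing judgement $\Gamma \proves \tycon{\ctxone}{\Delta}{\typone}{\typtwo}$, which amounts to a structural induction on the context $\ctxone$ following the rules of Figure~\ref{fig:typcon}. For each rule I will rebuild a typing derivation of $\Gamma \proves \act{\ctxone}{\termone}: \typtwo$ using the corresponding term-formation rule of Figure~\ref{fig1}, feeding it either the (unchanged) typing premises for the hole-free immediate subterms or the output of the induction hypothesis for the unique immediate constituent that contains the hole.

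In the base case $\ctxone = \emptycon$, the only applicable rule forces $\typtwo = \typone$ and gives $\act{\ctxone}{\termone} = \termone$; since the hole lies under no binder, the component $\Delta$ contributes nothing here, so the assumption $\Gamma,\Delta \proves \termone : \typone$ is exactly the desired $\Gamma \proves \termone : \typtwo$, and there is nothing more to do. In each inductive case the context is one of the term constructors with exactly one immediate constituent being a sub-context $\ctxtwo$ carrying the hole and the remaining constituents being ordinary terms $\termtwo_1,\ldots$. Substitution commutes with the constructor, e.g.\ $\act{(\ctxtwo\,\termtwo)}{\termone} = (\act{\ctxtwo}{\termone})\,\termtwo$, and the hole-free constituents satisfy their typing premises verbatim, since they contain no hole and are untouched by the substitution. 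To the constituent $\ctxtwo$ I apply the induction hypothesis, obtaining $\Gamma \proves \act{\ctxtwo}{\termone} : \typthree$ for the appropriate $\typthree$, and then close with the matching rule of Figure~\ref{fig1}. The non-binding cases---application, pairing, projections, arithmetic, choice, conditional, cons, and the first two of the case-rules---are all of this immediate shape and require no manipulation of the typing context of $\termone$, since there the sub-context premise carries the same pair $(\Gamma,\Delta)$ as the conclusion.

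The delicate point, and the one I expect to be the main obstacle, is the treatment of the \emph{binding} rules, namely those for $\abstrac\ctxtwo$, $\fix\ctxtwo$, and the third case-rule, where the bound variables are transferred between the two context components. In the rule for $\abstrac\ctxtwo$, for instance, the premise types the sub-context as $\Gamma, x:\typone' \proves \tycon{\ctxtwo}{\Delta}{\typone}{\typtwo'}$, whereas the conclusion places $x:\typone'$ inside the $\Delta$-component; to invoke the induction hypothesis I must present $\termone$ as typed in $(\Gamma, x:\typone'),\Delta$, while the proposition only supplies $\Gamma,(x:\typone',\Delta) \proves \termone : \typone$. These two contexts coincide precisely because typing contexts are finite partial functions, hence unordered, and because the side conditions $x \notin \dom{\Gamma}$ and $x \notin \dom{\Delta}$, together with the standing hypothesis $\dom{\Gamma} \cap \dom{\Delta} = \emptyset$, guarantee that their merge is well defined and domain-disjoint. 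The same re-bracketing, applied to the two fresh variables $h,t$, handles the cons-branch of the case-rule, and the fixpoint rule is identical to the abstraction rule in this respect. Once this reconciliation of contexts is made, the induction hypothesis applies and the abstraction (respectively, fixpoint, case) rule of Figure~\ref{fig1} closes the case; beyond this bookkeeping with variable domains the argument is entirely routine.
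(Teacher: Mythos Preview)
Your overall strategy---induction on the derivation of $\Gamma \proves \tycon{\ctxone}{\Delta}{\sigma}{\tau}$---is exactly the paper's one-line proof, and your treatment of the inductive steps, including the re-association of typing contexts in the binder cases, is correct.

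The base case, however, contains a real gap. You assert that for $\ctxone=\emptycon$ ``the component $\Delta$ contributes nothing here, so the assumption $\Gamma,\Delta \proves \termone:\sigma$ is exactly the desired $\Gamma \proves \termone:\tau$''. But the axiom in Figure~\ref{fig:typcon} derives $\Gamma \proves \tycon{\emptycon}{\Delta}{A}{A}$ for \emph{any} $\Delta$ disjoint from $\Gamma$; it does not force $\Delta=\emptyset$. Hence from $\Gamma,\Delta\proves \termone:\sigma$ you cannot in general conclude $\Gamma\proves \termone:\sigma$: take $\Gamma=\emptyset$, $\Delta=(\varone{:}\ity)$, $\termone=\varone$, $\ctxone=\emptycon$, which satisfies every hypothesis of the proposition yet yields the false conclusion $\emptyset\proves \varone:\ity$. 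This is arguably a glitch in the paper's axiom rule (the intended reading is that at $\emptycon$ one has $\Delta=\emptyset$, so that $\Delta$ records exactly the binders crossed on the path to the hole), but as the rules are written your step here does not go through; you should make the needed restriction explicit rather than claim the two judgments coincide.
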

\begin{proof}
The proof is by induction on the structure of the derivation of $\Gamma \proves C (\Delta, \sigma): \tau$.
\end{proof}
\begin{example}
Example of derivable judgments of the just described form are
$\emptyset \proves \abstrac {\emptycon} \, (x:\sigma;\tau): (\sigma \rightarrow \tau)$
and $\emptyset \proves \left(\left(\abstrac {\underline{\ttrue}} \right)\emptycon\right)\, (\emptyset;\sigma): \bty$.
\end{example}
Here, following \cite{DanosHarmer,ADLS13,EPT13}, we consider that the observable behaviour of a program $\termone$ is its \emph{probability of convergence} 
$\sum\diS{\termone}=\sum_{\valone} \diS{\termone}(\valone)$. We now have all the ingredients necessary to define what context
equivalence is:
\begin{definition}
  The \emph{ contextual preorder } is the typed relation $\leq$ given by: 
  for every $M, N \in \termtyg \tau \Gamma$,  $\Gamma \proves M \leq N: \tau $ if for every context 
  $C$ such that $\emptyset \proves C \,(\Gamma;\tau): \sigma$, it holds
  that $\sum\diS{\act{C}{\termone}}\leq\sum\diS{\act{C}{\termtwo}}$.
  \emph{Context equivalence} is the typed relation $\equiv$ given by stipulating that  
  $\Gamma \proves M \equiv  N: \sigma$ iff $\Gamma \proves M \leq N: \sigma$ 
  and $\Gamma \proves N \leq M: \sigma$.
\end{definition}
Another way to define context equivalence would be to restrain ourselves to contexts of $\bty$ and $\ity$ type in 
the definition of context equivalence: this is the so-called \emph{ground} context equivalence. In a call-by-value
setting, however, this gives exactly the same relation, since any non-ground context can be turned into a ground context inducing
the same probability of convergence. A similar argument holds for a notion of equivalence in which one observes the obtained
(ground) \emph{distribution} rather than merely its sum. The following can be proved in a standard
way: 
\begin{proposition}
$\leq$ is a typed relation, which is reflexive, transitive and compatible.
\end{proposition}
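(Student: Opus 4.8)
The plan is to reduce every clause to elementary order-theoretic facts about $\leq$ on $[0,1]$ together with a single structural lemma about composing contexts. Reflexivity and transitivity are immediate. For reflexivity, given any $\termone \in \termtyg{\Gamma}{\typone}$ and any closing context $\ctxone$ with $\emptyset \proves \tycon{\ctxone}{\Gamma}{\typone}{\typtwo}$, we trivially have $\sum\diS{\act{\ctxone}{\termone}} \leq \sum\diS{\act{\ctxone}{\termone}}$, so $\Gamma \proves \termone \leq \termone : \typone$. For transitivity, if $\Gamma \proves \termone \leq \termtwo : \typone$ and $\Gamma \proves \termtwo \leq \termthree : \typone$, then for every such $\ctxone$ we chain $\sum\diS{\act{\ctxone}{\termone}} \leq \sum\diS{\act{\ctxone}{\termtwo}} \leq \sum\diS{\act{\ctxone}{\termthree}}$ using transitivity of $\leq$ on the reals, whence $\Gamma \proves \termone \leq \termthree : \typone$.

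The real work is compatibility, and the tool I would isolate first is a \emph{context composition} lemma: if $\emptyset \proves \tycon{\ectxone}{\Gamma}{\typtwo}{\typthree}$ and $\Gamma \proves \tycon{\ctxone}{\Delta}{\typone}{\typtwo}$ with $\dom{\Gamma} \cap \dom{\Delta} = \emptyset$, then the context $\act{\ectxone}{\ctxone}$ obtained by plugging $\ctxone$ into the hole of $\ectxone$ satisfies $\emptyset \proves \tycon{\act{\ectxone}{\ctxone}}{\Gamma,\Delta}{\typone}{\typthree}$, and moreover $\act{(\act{\ectxone}{\ctxone})}{\termone} = \act{\ectxone}{\act{\ctxone}{\termone}}$ for every $\termone$. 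I would prove this by induction on the derivation of the typing of $\ectxone$, mirroring the already-established substitution proposition for $\termone \mapsto \act{\ctxone}{\termone}$; the only delicate cases are the binder formers $\lambda\varone.\emptycon$, $\fixt\,\varone.\emptycon$, and the branches of $\mathsf{case}$ that descend under $h,t$, where the freshness side-conditions in Figure~\ref{fig:typcon} must be used to let the hole-context $\Delta$ merge correctly with the newly bound variables.

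Granting this, I would establish a \emph{context closure} property: whenever $\Gamma,\Delta \proves \termone \leq \termtwo : \typone$ and $\Gamma \proves \tycon{\ctxone}{\Delta}{\typone}{\typtwo}$, one has $\Gamma \proves \act{\ctxone}{\termone} \leq \act{\ctxone}{\termtwo} : \typtwo$. Indeed, given any closing $\ectxone$ with $\emptyset \proves \tycon{\ectxone}{\Gamma}{\typtwo}{\typthree}$, composition yields a closing context $\act{\ectxone}{\ctxone}$ whose hole is $(\Gamma,\Delta;\typone)$, so the hypothesis $\Gamma,\Delta \proves \termone \leq \termtwo : \typone$ gives $\sum\diS{\act{\ectxone}{\act{\ctxone}{\termone}}} = \sum\diS{\act{(\act{\ectxone}{\ctxone})}{\termone}} \leq \sum\diS{\act{(\act{\ectxone}{\ctxone})}{\termtwo}} = \sum\diS{\act{\ectxone}{\act{\ctxone}{\termtwo}}}$; since $\ectxone$ is arbitrary, the claim follows.

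Context closure already delivers every single-hole compatibility clause: for each constructor I take the corresponding one-level context (for instance $\lambda\varone.\emptycon$, $\fst{\emptycon}$, $\fixt\,\varone.\emptycon$, or $\emptycon :: \termone$), whose context-typing derivation is a one-step instance of the rules of Figure~\ref{fig:typcon}, and apply closure. For the constructors with several argument positions (application, pairing, $\oplus$, the arithmetic operators, $\mathsf{if}$, $::$, and $\mathsf{case}$) I change one argument at a time: e.g.\ from $\Gamma \proves \termone \leq \termtwo : \arr{\typone}{\typtwo}$ and $\Gamma \proves \termthree \leq \termfour : \typone$, closure through $\emptycon\,\termthree$ gives $\Gamma \proves \termone\,\termthree \leq \termtwo\,\termthree : \typtwo$, closure through $\termtwo\,\emptycon$ gives $\Gamma \proves \termtwo\,\termthree \leq \termtwo\,\termfour : \typtwo$, and transitivity yields $\Gamma \proves \termone\,\termthree \leq \termtwo\,\termfour : \typtwo$. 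The base clauses for variables and constants are instances of reflexivity. The main obstacle is thus entirely concentrated in the context composition lemma and its bookkeeping of bound variables; once that is settled, the remaining cases are mechanical.
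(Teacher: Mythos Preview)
Your proposal is correct and is precisely the standard argument the paper alludes to; the paper itself gives no proof beyond the remark ``The following can be proved in a standard way.'' Your decomposition into a context-composition lemma, a derived context-closure property, and then one-argument-at-a-time compatibility via transitivity is exactly the route taken in Pitts' treatment of \PCFL, which is the paper's stated model.
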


Because of the quantification over all contexts, it is usually difficult to show that $\termone$ and $\termtwo$ are two context 
equivalent terms. In the next sections, we will introduce another notion of equivalence, and we show that it is 
included in context equivalence. 
\section{Applicative Bisimulation}\label{sect:appbis}
In this section, we introduce the notions of similarity and bisimilarity for \PCFLP.
We proceed by instantiating probabilistic bisimulation as developed by Larsen and Skou for a generic labelled
Markov chain in \cite{LS91}. A similar use was done for a call-by-name untyped probabilistic $\lambda$-calculus 
$\Lambda_\oplus$ in \cite{ADLS13}. 
\subsection{Larsen and Skou's Probabilistic Bisimulation}
Preliminary to the notion of (bi)simulation, is the notion of a \emph{labelled Markov chain} (LMC in the following), which is a triple
$\mcone=(\mathcal{S},\mathcal{L},\mathcal{P})$, where $\mathcal{S}$ is a countable set of \emph{states}, $\mathcal{L}$ is a set of 
\emph{labels}, and $\mathcal{P}$ is a \emph{transition probability matrix}, i.e., a function $\mathcal{P}:\mathcal{S}\times\mathcal{L}\times\mathcal{S}\rightarrow\RR$
such that for every state $\stateone\in\mathcal{S}$ and for every label $l\in\mathcal{L}$, $\sum_{\statetwo\in\mathcal{S}}\mathcal{P}(\stateone,l,\statetwo)\leq 1$.
Following~\cite{DEP02}, we allow the sum above to be smaller than $1$, modelling divergence this way.
The following is due to Larsen and Skou~\cite{LS91}:
\begin{definition}\label{def:probasim}
  Given $(\mathcal S,\mathcal L, \mathcal P)$ a labelled Markov Chain, a \emph{ probabilistic simulation} is a pre-order relation $\vrelone$ 
  on $\mathcal{S}$ such that $(s,t)\in \vrelone$ implies that for every $X \subseteq\mathcal{S}$ and for every $l\in\mathcal{L}$, 
  $\mathcal{P}(s,l,X) \leq \mathcal{P}(t,l,\vrelone(X))$, with 
  $\vrelone(X) = \{y\mid\exists x \in X \text{ such that } x\;\vrelone\;y\}$. Similarly, a \emph{ probabilistic bisimulation } is 
  an equivalence relation $\vrelone$ on $\mathcal{S}$ such that $(s,t)\in \vrelone$ implies that for every equivalence class $E$ modulo 
  $\vrelone$, and for every $l\in\mathcal{L}$, $\mathcal{P}(s,l,E) = \mathcal{P}(t,l,E)$.
\end{definition}
Insisting on bisimulations to be equivalence relations has the potential effect of not allowing them to be formed
by just taking unions of other bisimulations. The same can be said about simulations, which are assumed to be partial orders.
Nevertheless:
\begin{proposition}
If $(\vrelone_i)_{i\in I}$ is a collection of probabilistic (bi)simulations, then the reflexive and transitive 
closure of their union, $(\cup_{i \in I} \vrelone_i )^*$, is a (bi)simulation.
\end{proposition}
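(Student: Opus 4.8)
The plan is to show that $\vrelone=(\cup_{i\in I}\vrelone_i)^*$ satisfies the defining transfer condition of Definition~\ref{def:probasim}, treating the simulation and bisimulation cases in parallel since the reflexive-transitive closure is exactly what is needed to make $\vrelone$ a preorder (in the simulation case) or an equivalence relation (in the bisimulation case, where each $\vrelone_i$ is already symmetric, so the closure adds reflexivity and transitivity while preserving symmetry). So the genuinely relational part of the claim is immediate; the content lies entirely in the transfer condition. First I would reduce the transitive closure to finite chains: any pair $(s,t)\in\vrelone$ arises from a finite sequence $s=u_0,u_1,\ldots,u_k=t$ where each consecutive pair $(u_{j},u_{j+1})$ lies in some $\vrelone_{i_j}$ (or is a reflexivity step, which is harmless). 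The strategy is then to propagate the inequality one link at a time and compose.

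For the \emph{simulation} case, suppose $(s,t)\in\vrelone_i$ for a single $i$. I first observe the monotonicity fact that for any relation $\mathcal{Q}$ and sets $X\subseteq Y$ one has $\mathcal{Q}(X)\subseteq\mathcal{Q}(Y)$, and hence $\mathcal{P}(r,l,\mathcal{Q}(X))\leq\mathcal{P}(r,l,\mathcal{Q}(Y))$ for every state $r$ and label $l$, because $\mathcal{P}(r,l,\cdot)$ is monotone in its set argument. The single-link transfer gives, for every $X\subseteq\mathcal{S}$ and $l\in\mathcal{L}$, the inequality $\mathcal{P}(s,l,X)\leq\mathcal{P}(t,l,\vrelone_i(X))$. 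The key step is now a chaining argument along $u_0,\ldots,u_k$: applying the single-link property at $(u_0,u_1)$ gives $\mathcal{P}(u_0,l,X)\leq\mathcal{P}(u_1,l,\vrelone_{i_0}(X))$; applying it at $(u_1,u_2)$ with the set $\vrelone_{i_0}(X)$ in place of $X$ gives $\mathcal{P}(u_1,l,\vrelone_{i_0}(X))\leq\mathcal{P}(u_2,l,\vrelone_{i_1}(\vrelone_{i_0}(X)))$, and so on. Composing these yields $\mathcal{P}(s,l,X)\leq\mathcal{P}(t,l,\vrelone_{i_{k-1}}(\cdots\vrelone_{i_0}(X)\cdots))$. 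Finally I would note that each $\vrelone_{i_j}(\cdot)\subseteq\vrelone(\cdot)$ pointwise (since $\vrelone_{i_j}\subseteq\vrelone$), so the iterated image on the right is contained in $\vrelone(X)$; one more application of monotonicity of $\mathcal{P}(t,l,\cdot)$ delivers $\mathcal{P}(s,l,X)\leq\mathcal{P}(t,l,\vrelone(X))$, as required.

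For the \emph{bisimulation} case the transfer condition is phrased in terms of \emph{equivalence classes} $E$ modulo $\vrelone$ and asserts an \emph{equality} $\mathcal{P}(s,l,E)=\mathcal{P}(t,l,E)$, so the argument must be adapted. The clean route is to check that any equivalence class $E$ of the coarser relation $\vrelone$ is a union of equivalence classes of each finer $\vrelone_i$; more precisely, if $(u_j,u_{j+1})\in\vrelone_i$ then $u_j$ and $u_{j+1}$ lie in the same $\vrelone_i$-class, and $E$ is $\vrelone_i$-closed in the sense that it is a disjoint union of $\vrelone_i$-classes. Consequently $\mathcal{P}(u_j,l,E)=\mathcal{P}(u_{j+1},l,E)$ follows by summing the bisimulation equalities of $\vrelone_i$ over the constituent classes. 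Chaining these equalities along $u_0,\ldots,u_k$ gives $\mathcal{P}(s,l,E)=\mathcal{P}(t,l,E)$ directly, with no need for monotonicity. I expect the main obstacle to be precisely this bookkeeping about class refinement: one must verify carefully that a $\vrelone$-class decomposes into $\vrelone_i$-classes for \emph{each} $i$ simultaneously, and that the relevant sums converge and may be reindexed freely (which is unproblematic here since all summands are nonnegative and $\mathcal{P}(r,l,\cdot)$ is a subprobability measure, hence countably additive over disjoint unions). Once that refinement lemma is in place, both cases close by the same one-link-at-a-time composition.
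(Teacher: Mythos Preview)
The paper states this proposition without proof, so there is nothing to compare against directly; your argument stands on its own and is essentially correct in both cases.

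One small point deserves tightening in the simulation case. From $\vrelone_{i_j}\subseteq\vrelone$ you correctly get $\vrelone_{i_j}(Y)\subseteq\vrelone(Y)$ for every $Y$, but this alone only yields
\[
\vrelone_{i_{k-1}}(\cdots\vrelone_{i_0}(X)\cdots)\;\subseteq\;\vrelone(\vrelone(\cdots\vrelone(X)\cdots))=\vrelone^{k}(X),
\]
not $\vrelone(X)$. What closes the gap is that $\vrelone$, being the reflexive--transitive closure, satisfies $\vrelone\circ\vrelone\subseteq\vrelone$, hence $\vrelone^{k}(X)\subseteq\vrelone(X)$. You clearly have this in mind, but it should be said explicitly rather than folded into ``so the iterated image on the right is contained in $\vrelone(X)$''.

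The bisimulation case is clean: since each $\vrelone_i$ is an equivalence contained in $\vrelone$, every $\vrelone$-class $E$ is $\vrelone_i$-saturated and hence a (countable) disjoint union of $\vrelone_i$-classes; summing the per-class equalities and chaining along the finite path is exactly right. Your remark that symmetry of $\cup_i\vrelone_i$ is preserved under reflexive--transitive closure (so that $\vrelone$ is indeed an equivalence) is also the correct justification for the relational side condition.
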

A nice consequence of the result above is that we can define \emph{probabilistic similarity} (noted $\precsim$)
simply as the relation $\precsim\;=\bigcup\{\vrelone\mid\vrelone \text{ is a probabilistic simulation} \}$. Analogously for the
largest probabilistic bisimulation, that we call \emph{ probabilistic bisimilarity} (noted $\backsim$),
defined as $\backsim\;=\bigcup\{\vrelone\mid\vrelone \text{ is a probabilistic bisimulation} \}$.
\begin{proposition}
Any symmetric probabilistic simulation is a probabilistic bisimulation.
\end{proposition}
A property of probabilistic bisimulation which does not hold in the usual, nondeterministic, setting, is the following:
\begin{proposition}\label{prop:simvsbisim}
$\backsim = \precsim \cap \precsim^{op}$.
\end{proposition}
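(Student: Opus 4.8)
The plan is to prove the two inclusions separately, exploiting that $\backsim$ is by construction the largest probabilistic bisimulation (and is itself a bisimulation, by the preceding closure proposition), while $\precsim$ is the largest probabilistic simulation. For the inclusion $\backsim \subseteq \precsim \cap \precsim^{op}$, I would first check that $\backsim$ is a probabilistic simulation: it is a preorder, being an equivalence. For the quantitative clause, fix $(s,t) \in \backsim$, a label $l$, and a set $X \subseteq \mathcal{S}$. Since $\backsim$ is an equivalence, $\backsim(X)$ contains $X$ and is a union of $\backsim$-equivalence classes; the bisimulation condition gives $\mathcal{P}(s,l,E) = \mathcal{P}(t,l,E)$ on each such class $E$, and summing over the (countably many, nonnegative, bounded) classes yields $\mathcal{P}(s,l,\backsim(X)) = \mathcal{P}(t,l,\backsim(X))$. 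Monotonicity then gives $\mathcal{P}(s,l,X) \le \mathcal{P}(s,l,\backsim(X)) = \mathcal{P}(t,l,\backsim(X))$, which is exactly the simulation clause. Hence $\backsim \subseteq \precsim$, and since $\backsim$ is symmetric, $\backsim = \backsim^{op} \subseteq \precsim^{op}$ as well.

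For the converse inclusion, I would set $\vrelone = \precsim \cap \precsim^{op}$ and show it is a bisimulation; the largest-bisimulation property of $\backsim$ then gives $\vrelone \subseteq \backsim$. First I would verify that $\vrelone$ is an equivalence: reflexivity and transitivity are inherited from $\precsim$ and $\precsim^{op}$, and symmetry is immediate since $\vrelone^{op} = \precsim^{op} \cap \precsim = \vrelone$. The core observation is a lemma on \emph{upward-closed} sets: if $U$ satisfies $\precsim(U) = U$ and $(s,t) \in \vrelone$, then $\mathcal{P}(s,l,U) = \mathcal{P}(t,l,U)$. This follows by applying the simulation clause in both directions: $s \precsim t$ gives $\mathcal{P}(s,l,U) \le \mathcal{P}(t,l,\precsim(U)) = \mathcal{P}(t,l,U)$, and $t \precsim s$ gives the reverse inequality.

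It then remains to pass from upward-closed sets to genuine $\vrelone$-equivalence classes, and this is the step I expect to be the crux — and the place where the probabilistic setting genuinely differs from the nondeterministic one. Given a $\vrelone$-class $E$, I would consider its upward closure $U = \precsim(E)$, which is upward-closed by transitivity, together with the ``shell'' $U' = \precsim(E) \setminus E$. A short order-theoretic argument shows $U'$ is also upward-closed: if $y \in \precsim(E)\setminus E$ and $y \precsim z$, then $z \in U$ since $U$ is upward-closed, and were $z \in E$ one would chain $x \precsim y \precsim z \precsim x$ for some $x \in E$, forcing $y \,\vrelone\, x$ and hence $y \in E$, a contradiction; so $z \in U'$.

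Since $E = U \setminus U'$ is a disjoint decomposition of $U$ and $\mathcal{P}(s,l,\cdot)$ is a countably additive sub-probability measure on $\mathcal{S}$, I can \emph{subtract}: $\mathcal{P}(s,l,E) = \mathcal{P}(s,l,U) - \mathcal{P}(s,l,U')$, and likewise for $t$. Applying the upward-closed lemma to both $U$ and $U'$ then yields $\mathcal{P}(s,l,E) = \mathcal{P}(t,l,E)$, so $\vrelone$ satisfies the bisimulation condition. Combining the two inclusions gives $\backsim = \precsim \cap \precsim^{op}$. The essential use of subtraction of measures is precisely what is unavailable in the purely nondeterministic setting, which explains why the analogous equality fails there.
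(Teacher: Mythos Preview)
Your proof is correct. The paper itself states Proposition~\ref{prop:simvsbisim} without proof, treating it as a standard fact about Larsen--Skou probabilistic (bi)simulation, so there is no argument to compare against; your write-up supplies exactly the missing details. The decomposition of an $\vrelone$-equivalence class $E$ as the difference $\precsim(E)\setminus\bigl(\precsim(E)\setminus E\bigr)$ of two $\precsim$-upward-closed sets, followed by subtraction of measures, is the standard route and your order-theoretic verification that the ``shell'' $\precsim(E)\setminus E$ is itself upward-closed is clean. One small remark: you could alternatively have tried to invoke the preceding proposition (symmetric simulations are bisimulations) after showing that $\precsim\cap\precsim^{op}$ is a simulation, but that direct check is not obviously easier than what you did, since the simulation clause for $\vrelone$ requires controlling $\vrelone(X)$ rather than the larger set $\precsim(X)$; your upward-closed-set argument sidesteps this neatly.
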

\subsection{A Concrete Labelled Markov Chain}
Applicative bisimulation will be defined by instantiating Definition~\ref{def:probasim} on
a specific LMC, namely the one modelling evaluation of \PCFLP\ programs.
\begin{definition}
The labelled Markov Chain $\mccbv=(\mccbvstates,\mccbvlabels,\mccbvmatrix)$ is given by:
\begin{varitemize}
\item 
  A set of states $\mccbvstates$ defined as follows:
  $$
  \mccbvstates=\left\{(\termone,\sigma)\mid\termone\in\termty{\sigma}\right\}\uplus\{(\hat{\valone},\sigma)\mid\valone\in\val{\sigma}\},
  $$
  where terms and values are taken modulo $\alpha$-equivalence. A value $\valone$ in the second component
  of $\mccbvstates$ is distinguished from one in the first by using the notation $\hat\valone$.
\item 
  A set of labels $\mccbvlabels$ defined as follows:
  $$
  \values \uplus\types\uplus\mathbb{N}\uplus\mathbb{B}\uplus\{\nilact,\hd,\tl\}
    \uplus\{\fstact,\sndact\}\uplus\{\evact\}, 
  $$ 
  where, again, terms are taken modulo $\alpha$-equivalence, and $\types$ is the set of types.
\item 
  A transition probability matrix $\mccbvmatrix$ such that:
  \begin{varitemize} 
  \item
    For every 
    $\termone\in\termty{\sigma}$, 
    $\mccbvmatrix\left((\termone,\sigma),\sigma,(\termone,\sigma)\right)=1$, and similarly
    for values.
  \item
    For every 
    $\termone\in\termty{\sigma}$, and any value 
    $\valone\in\supp{\diS{\termone}}$, 
    $\mccbvmatrix\left((\termone,\sigma),\evact,(\hat{\valone},\sigma)\right)= \diS{\termone}(\valone)$.
  \item 
    If $\valone\in\val{\sigma}$, then: 
    \begin{varitemize}
    \item 
      If $\sigma = \tau\rightarrow\typthree$, then 
	\begin{varitemize}      
        \item Either there is $\termone$ such that
          $\valone=\abstrac{\termone}$, and for each $\valtwo\in\val{\tau}$, 
          $$
          \mccbvmatrix\left((\hat\valone,\tau\rightarrow \typthree),\valtwo,(\subst{\termone}{\valtwo}{\varone},\typthree)\right)=1.
          $$
        \item Or there is $\termone$ such that $\valone = \fix {\termone}$,  and for each $\valtwo\in\val{\tau}$, 
          $$
          \mccbvmatrix\left((\hat\valone,\tau\rightarrow \typthree),\valtwo,(\subst{\termone}{\fix \termone}{\varone}\, \valtwo,\typthree)\right)=1.
          $$
        \end{varitemize}
      \item 
        If $\sigma = \tau\times\typthree$, then there are
        $\termone,\termtwo$ such that $\valone=\pair{\termone}{\termtwo}$, and 
        we define:
        \begin{align*}
          \mccbvmatrix\left((\hat\valone,\tau\times\typthree),\fstact,(\termone,\tau)\right) &= 1\\
          \mccbvmatrix\left((\hat\valone,\tau\times\typthree),\sndact,(\termtwo,\typthree)\right) &= 1
        \end{align*}
      \item 
        If $\sigma=\ity$, then there is $k\in\mathbb {N}$ such that $\valone=\cnst{k}$ and
        $\mccbvmatrix\left((\hat\valone,\ity),k,(\hat\valone,\ity)\right) = 1$.
      \item 
        If $\sigma=\bty$, then there is $b\in \mathbb{B}$ such that $\valone=\underline{b}$. Then 
        $\mccbvmatrix\left((\hat\valone,\bty),b,(\hat\valone,\bty)\right)=1$.
      \item 
        If $\sigma=[\tau]$, then there are two possible cases:
        \begin{varitemize}
        \item 
          If $\valone=\nil$, then $\mccbvmatrix\left((\hat{\valone},[\tau]),\nilact,(\hat{\valone},[\tau])\right) = 1$.
        \item 
          If $\valone=\lst{\termone}{\termtwo}$, then $\mccbvmatrix\left((\hat{\valone},[\tau]),\hd,(\termone,\tau)\right) = 1$.
          and  $\mccbvmatrix\left((\hat{\valone},[\tau]),\tl,(\termtwo,[\tau])\right) = 1$.
        \end{varitemize} 
      \end{varitemize}
    \end{varitemize} 
    For all $\stateone,l,\statetwo$ such that $\mccbvmatrix(\stateone,l,\statetwo)$ isn't defined above, we have 
    $\mccbvmatrix(\stateone,l,\statetwo)=0$. 
  \end{varitemize} 
\end{definition}
Please observe that if $\valone\in \val \sigma$, both $(\valone, \sigma)$ and $(\hat\valone, \sigma)$ are states 
of the Markov Chain $\mccbv$. For example, the following are all states of $\mccbv$:
\begin{align*}
(\abstrac x, (\ity \rightarrow \ity));\\
(\widehat{\abstrac x},(\ity \rightarrow \ity));\\
(\abstrac x, \left((\ity \rightarrow \ity)\rightarrow (\ity \rightarrow \ity)\right));\\
(\widehat{\abstrac x},\left((\ity \rightarrow \ity)\rightarrow (\ity \rightarrow \ity)\right)).
\end{align*}
A similar Markov Chain was used in \cite{ADLS13} to define bisimilarity for the untyped probabilistic 
$\lambda$-calculus  $\Lambda_\oplus$. We use here in the same way actions which apply a term to a value,
and an action which models term evaluation, namely $\evact$. 
\subsection{The Definition}
We would like to see any simulation (or bisimulation) on the LMC $\mccbv$ as 
a family in $\relct$. As can be easily realized, indeed, any (bi)simulation on $\mccbv$ cannot put
in correspondence states $(\termone,\sigma)$ and $(\termtwo,\tau)$ where $\sigma\neq\tau$, since
each such pair exposes its second component as an action. This then justifies the following:
\begin{definition}
  A \emph{probabilistic applicative simulation} (a PAS in the following), is a family
  $(\relone_\sigma)\in\relct$ such that there exists a probabilistic simulation $\vrelone$
  on the LMC $\mccbv$ such that for every type $\sigma$, and for every
  $\termone,\termtwo\in\termty{\sigma}$ it holds that
  $\rela{\termone}{\relone_\sigma}{\termtwo} \Leftrightarrow \rela{(\termone,\sigma)}{\vrelone}{(\termtwo,\sigma)}$.  
  A \emph{probabilistic applicative bisimulation} (PAB in the following) is defined similarly, requiring
  $\vrelone$ to be a bisimulation rather than a simulation.
\end{definition}
The greatest simulation and the greatest bisimulation on $\mccbv$ are indicated with $\precsim$, and $\backsim$, respectively.
In other words, $\pasv{\sigma}$ is the relation $\{(\termone,\termtwo)\mid(\termone,\sigma) \precsim (\termtwo,\sigma) \}$, while 
$\pabv{\sigma}$ the relation $\{(\termone,\termtwo)\mid(\termone,\sigma) \backsim (\termtwo,\sigma) \}$.

Please notice that $(\pasv{\sigma})$ is the biggest PAS, and that $(\pabv{\sigma})$ is the biggest PAB. We can also see that 
if $(\relone_\sigma)$ is a PAS, and we define the relation $\vrelone$ by: if  $M\relone_\sigma N$ then $(M,\sigma) \vrelone (N, \sigma)$, 
and if $\valone,\valtwo$ are values, and  $\valone\relone_\sigma\valtwo$, then $(\hat\valone,\sigma) \relone (\hat\valtwo, \sigma)$, then 
$\relone$ is a simulation on $\mccbv$. Similarly if we start from an PAB.
\begin{lemma}
For every $(\valone,\valtwo)\in\val{\sigma}\times\val{\sigma}$, 
$(\hat{\valone},\sigma) \precsim (\hat{\valtwo},\sigma)$ if and only if 
$(\valone,\sigma) \precsim (\valtwo,\sigma)$.
\end{lemma}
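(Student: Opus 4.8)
The plan is to prove the two implications separately, treating the ``only if'' direction (from hat-states to term-states) as the genuinely coinductive one.

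For the easy direction, assume $(\valone,\sigma)\precsim(\valtwo,\sigma)$. Since $\valone$ is a value, $\diS{\valone}=\distrv{\valone}$, so the only $\evact$-transition out of $(\valone,\sigma)$ is $\mccbvmatrix((\valone,\sigma),\evact,(\hat\valone,\sigma))=1$. Instantiating the simulation transfer condition at label $\evact$ with $X=\{(\hat\valone,\sigma)\}$ yields $\mccbvmatrix((\valtwo,\sigma),\evact,\precsim(X))\geq 1$. As the sole $\evact$-successor of $(\valtwo,\sigma)$ is $(\hat\valtwo,\sigma)$ (again with mass $1$, as $\valtwo$ is a value), this forces $(\hat\valtwo,\sigma)\in\precsim(\{(\hat\valone,\sigma)\})$, i.e. $(\hat\valone,\sigma)\precsim(\hat\valtwo,\sigma)$.

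For the converse, assume $(\hat\valone,\sigma)\precsim(\hat\valtwo,\sigma)$. I would exhibit a simulation relating the term-states. Set $\mathcal{N}=\{((\valone,\sigma),(\valtwo,\sigma))\mid \valone,\valtwo\in\val{\sigma},\ (\hat\valone,\sigma)\precsim(\hat\valtwo,\sigma)\}$ and let $\vrelone=(\precsim\cup\mathcal{N})^{*}$ be the reflexive and transitive closure, which is a preorder by construction. The key observations are: (i) a term-state $(\valone,\sigma)$ has exactly two kinds of outgoing transitions, the type self-loop (label $\sigma$, to itself) and $\evact$ (to $(\hat\valone,\sigma)$, mass $1$); (ii) the type self-loop condition holds automatically for every same-type pair in $\vrelone$, because the target also self-loops to itself and the pair already lies in $\vrelone$; and (iii) for an $\mathcal{N}$-pair the $\evact$ condition holds since $(\hat\valone,\sigma)\precsim(\hat\valtwo,\sigma)$ and $\precsim\subseteq\vrelone$ put $(\hat\valtwo,\sigma)\in\vrelone(\{(\hat\valone,\sigma)\})$. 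Thus each generator in $\precsim\cup\mathcal{N}$ satisfies the transfer condition with respect to $\vrelone$.

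It then remains to check that the transfer condition is inherited by $\vrelone$ as a whole, i.e. survives the closure; this is the main obstacle. One cannot simply invoke the earlier proposition on unions of simulations, because $\mathcal{N}$ on its own is not a simulation. The subtlety is visible already in composing an $\mathcal{N}$-pair with a $\precsim$-pair $(\valtwo,\sigma)\precsim(\termthree,\sigma)$: the composite $((\valone,\sigma),(\termthree,\sigma))$ need not be an $\mathcal{N}$-pair (as $\termthree$ may fail to be a value), and trying to verify it lies in $\precsim$ directly is circular, since the type self-loop of the term-state demands the pair itself. The closure resolves this once we show the transfer condition propagates through composition: if $s\mathrel{\vrelone}t\mathrel{\vrelone}u$ are composable, then $\mccbvmatrix(s,l,X)\leq\mccbvmatrix(t,l,\vrelone(X))\leq\mccbvmatrix(u,l,\vrelone(\vrelone(X)))\leq\mccbvmatrix(u,l,\vrelone(X))$, the last inequality using transitivity of $\vrelone$ (so $\vrelone(\vrelone(X))\subseteq\vrelone(X)$) and monotonicity of $\mccbvmatrix(u,l,-)$; reflexive pairs are trivial since $X\subseteq\vrelone(X)$. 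By induction on the length of the composition chain, every pair of $\vrelone$ satisfies the condition, so $\vrelone$ is a simulation. Maximality of $\precsim$ then gives $\vrelone\subseteq\precsim$, hence $\mathcal{N}\subseteq\precsim$, which is exactly $(\valone,\sigma)\precsim(\valtwo,\sigma)$.
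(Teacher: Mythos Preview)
Your proof is correct and follows essentially the same approach as the paper: for the $\Leftarrow$ direction both of you instantiate the simulation condition at label $\evact$ with $X=\{(\hat\valone,\sigma)\}$, and for the $\Rightarrow$ direction both of you enlarge $\precsim$ by the set of ``new'' pairs $\{((\valone,\sigma),(\valtwo,\sigma))\mid(\hat\valone,\sigma)\precsim(\hat\valtwo,\sigma)\}$ and argue the result is a simulation containing the desired pair. The paper works directly with the union $\precsim\cup R$ rather than its reflexive--transitive closure; your version, which takes the closure and then verifies that the transfer condition propagates along compositions via $\vrelone(\vrelone(X))\subseteq\vrelone(X)$, is more careful about the preorder requirement in the definition of simulation (and also explicitly handles the type self-loop at label $\sigma$, which the paper glosses over).
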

\begin{proof}
\begin{varitemize}
\item[$\Leftarrow$] 
  If $(\valone,\sigma) \precsim (\valtwo,\sigma)$,  we have: $$\mccbvmatrix \left((\valone,\sigma) ,\evact, X\right)= \diS {\valone} (X) = 
  \left\{
    \begin{array}{ll}
      1&\text{ if }(\hat{\valone},\sigma) \in X \\
      0 &\text{otherwise}
    \end{array} 
  \right.
  $$
  and 
  $$
  \mccbvmatrix \left( (\valtwo,\sigma), \evact ,(\precsim(X)) \right)=
  \left\{
    \begin{array}{ll}
      1&\text{ if }(\hat{\valtwo},\sigma) \in \precsim(X) \\
      0 &\text{otherwise}
    \end{array} 
  \right.
  $$
  As $\precsim$ is a simulation, $\mccbvmatrix \left((\valone,\sigma) ,\evact,X \right)\leq 
  \mccbvmatrix \left( (\valtwo,\sigma),\evact,(\precsim(X))\right)$. We take $X = \{(\hat{\valone},\sigma)\}$, 
  and we can see that we must have $(\hat{\valtwo},\sigma) \in \precsim(X)$, and it follows that 
  $(\hat{\valone},\sigma) \precsim (\hat{\valtwo},\sigma)$.
\item[$\Rightarrow$] 
  Let $\sigma$ be a fixed type.
  Let be ${\vrelone = \{\left((\valone,\sigma),(\valtwo,\sigma)\right)|((\hat{\valone},\sigma)
  \precsim (\hat{\valtwo},\sigma)\}}$. We are going to show: 
  ${\vrelone \subseteq \precsim}$. Let $\vreltwo = \precsim \cup \vrelone$.
  We can see that $\vreltwo$ is a simulation: Let $\stateone$, $\statetwo$ be such that 
 $\stateone \vreltwo \statetwo$. Then
  \begin{varitemize}
  \item 
    Either $\stateone \precsim \statetwo$, and for every action $l$ and subset $X$ of $\mccbvstates$ , 
    $\mccbvmatrix \left(\stateone,l,X \right) \leq \mccbvmatrix \left(\statetwo,l,(\precsim(X))\right) \leq 
    \mccbvmatrix \left(\statetwo, l,(\vreltwo (X))\right)$.
  \item 
    Or there exist $\valone$ and $\valtwo$ such that $\stateone =(\valone,\sigma)$, $\statetwo =(\valtwo,\sigma)$, and 
    ${(\hat{\valone},\sigma) \precsim (\hat{\valtwo},\sigma)}$,and so ${\valone,\valtwo \in \val{\sigma}}$ and, for 
    every action $l$: 
    \begin{varitemize}
    \item 
      either $l = \evact$, and for every ${X \subseteq \mccbvstates}$, $$\mccbvmatrix \left(\stateone, \evact,X \right)=
      \left\{
        \begin{array}{ll}
          1&\text{ if }(\hat{\valone},\sigma) \in X \\
          0 &\text{otherwise}
        \end{array} 
      \right.$$
      If $(\hat{\valone},\sigma)\not \in X$,  ${\mccbvmatrix\left(\stateone,\evact,X\right) = 0 \leq  \mccbvmatrix 
      \left(\statetwo,\evact,(\vreltwo(X))\right)}$. If  $(\hat{\valone},\sigma) \in X$, then ${(\hat{\valtwo},\sigma) 
      \in \precsim(X) \subseteq \vreltwo(X)}$ and so ${\mccbvmatrix\left(\stateone, \evact,X\right)=1=\mccbvmatrix\left(\statetwo, 
      \evact,(\vreltwo(X))\right)}$. 
    \item 
      either $l \neq \evact$, and for every subset $X$ of $\mccbvstates$  : ${\mccbvmatrix \left(\stateone, l,X\right)=
      \mccbvmatrix \left(\statetwo,l, \vreltwo(X) \right) = 0}$
    \end{varitemize}
  \end{varitemize}
  Since $\vreltwo$ is a simulation, $\vreltwo \subseteq \precsim$, and so we have 
  $\{(\valone,\valtwo)|\hat{\valone}\precsim \hat{\valtwo}\} \subseteq \precsim$.  
\end{varitemize}
This concludes the proof.
\end{proof}
Terms having the same semantics need to be bisimilar:
\begin{lemma}\label{lemma:kleene}
Let $(\relone_\sigma)\in\relct$ be defined as follows: 
$\termone\;\relone_\sigma\;\termtwo\Leftrightarrow \termone,\termtwo\in \termty{\sigma}\wedge\diS{\termone}=\diS{\termtwo}$.
Then $(\relone_\sigma)$ is a PAB.
\end{lemma}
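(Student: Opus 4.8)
The plan is to exhibit directly a probabilistic bisimulation $\vrelone$ on the labelled Markov chain $\mccbv$ whose restriction to term states coincides with $(\relone_\sigma)$, which is exactly what the definition of PAB requires. The natural candidate extends $\relone_\sigma$ to the value states of $\mccbv$ by taking the \emph{identity} there: I would set
$$
\vrelone = \{((\termone,\sigma),(\termtwo,\sigma)) \mid \termone,\termtwo \in \termty{\sigma},\ \diS{\termone}=\diS{\termtwo}\} \cup \{((\hat\valone,\sigma),(\hat\valone,\sigma)) \mid \valone \in \val{\sigma}\}.
$$
The intuition is that two values have equal semantics only when they are syntactically equal, since $\diS{\valone}=\distrv{\valone}$ is a point distribution; so nothing is lost by leaving value states unrelated to one another, and this keeps the verification as light as possible.

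First I would check that $\vrelone$ is an equivalence relation on $\mccbvstates$. It is the disjoint union of the relation ``equal big-step semantics'' on term states (an equivalence, being the kernel of $\termone \mapsto \diS{\termone}$) with the diagonal on value states, and no term state is related to any value state; hence reflexivity, symmetry and transitivity all hold componentwise. By construction its restriction to term states of type $\sigma$ is precisely $\relone_\sigma$, so once $\vrelone$ is shown to be a bisimulation the lemma follows at once.

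The core is the transfer condition: whenever $s \vrelone t$, for every label $l$ and every $\vrelone$-equivalence class $E$ one must have $\mccbvmatrix(s,l,E)=\mccbvmatrix(t,l,E)$. For a pair of value states this is vacuous, since $\vrelone$ relates each value state only to itself, so $s=t$. The interesting case is $s=(\termone,\sigma)$, $t=(\termtwo,\sigma)$ with $\diS{\termone}=\diS{\termtwo}$, where only two families of labels produce nonzero transitions. For the type-label $\sigma$ both states loop to themselves with probability $1$; since $s$ and $t$ lie in the same class $E_0$, the probability of landing in a class $E$ is $1$ if $E=E_0$ and $0$ otherwise, alike for both states. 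For the label $\evact$, the key observation is that every value state sits in a \emph{singleton} class $\{(\hat\valone,\sigma)\}$, whence $\mccbvmatrix((\termone,\sigma),\evact,\{(\hat\valone,\sigma)\})=\diS{\termone}(\valone)=\diS{\termtwo}(\valone)=\mccbvmatrix((\termtwo,\sigma),\evact,\{(\hat\valone,\sigma)\})$ straight from the hypothesis $\diS{\termone}=\diS{\termtwo}$, while classes made of term states receive probability $0$ under $\evact$ from either side. All remaining labels give probability $0$ from term states.

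The argument is essentially routine once the right $\vrelone$ is fixed; the only point needing care, and the place I expect the mild obstacle to be, is recognizing that leaving value states unrelated (rather than attempting to quotient them by behaviour) is both legitimate and sufficient. This works precisely because the definition of PAB quantifies existentially over bisimulations and constrains only the term states, and because putting each value state in its own class makes the $\evact$-transfer condition collapse to the pointwise equality of $\diS{\termone}$ and $\diS{\termtwo}$, which is exactly the assumption.
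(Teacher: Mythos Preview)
Your proposal is correct and follows essentially the same approach as the paper: you define the identical relation $\vrelone$ (identity on value states, equal-semantics on term states), verify it is an equivalence, and check the bisimulation transfer condition case-by-case on labels. If anything, your treatment is slightly more careful, since you explicitly handle the type-label $\sigma$ (where both term states self-loop into their common class), a case the paper's proof glosses over.
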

\begin{proof}
Let $\vrelone=\bigcup_{\sigma}\left(\{((\termone,\sigma),(\termtwo,\sigma))\mid \termone\relone_\sigma\termtwo\} 
\cup\{((\hat{\valone},\sigma),(\hat{\valone},\sigma))\mid\valone\in\val{\sigma}\}\right)$.
We proceed by showing that $\vrelone$ is a bisimulation. Now:
\begin{varitemize}
\item 
  For every $\sigma$, $\relone_\sigma$ is an equivalence relation, so $\vrelone$ is an equivalence relation too. The equivalence classes of $\vrelone$ are : 
  the $(\{\sigma\} \times E_\sigma)$ when $E_\sigma$ is an equivalence class of $\relone_\sigma$, and the $\{(\hat{\valone},\sigma)\}$ when $\valone \in \val \sigma$. 
\item 
  For every $\stateone, \statetwo \in \mccbvstates$ such that $\stateone \vrelone \statetwo$, for every $E$ equivalence class of $\vrelone$, for all action $l$: $\mccbvmatrix \left( \stateone, l,E\right) = \mccbvmatrix \left(\statetwo,l,E \right)$.
  Indeed, let $\stateone,\statetwo$ be such that $\stateone \vrelone \statetwo$. There are two possibles cases:
  \begin{varitemize}
  \item 
    There are $\sigma$ a type, and $ \termone, \termtwo \in \termty{\sigma}$, such that $\stateone = (\termone,\sigma), \, \statetwo =(\termtwo,\sigma)$, and $\diS \termone = \diS \termtwo$.
    Let $l$ be an action:
    \begin{varitemize}
    \item 
      either $l = \evact$. Then for every $\statethree \in \mccbvstates$, $\mccbvmatrix\left(\stateone, \evact,\statethree \right) >0 \Leftrightarrow (\statethree = (\hat{\valthree},\sigma)\text{ and } \valthree \in \supp{\diS \statethree}$.
      Let $E$ be an equivalence class of $\vrelone$. By construction of $\vrelone$, we can see that:
      \begin{varitemize}
      \item 
        Or $\exists \typtwo$, such that $E = \{\typtwo\}\times E_\typtwo$, and since the element of $E_\typtwo$ are not distinguished values,  
        $\mccbvmatrix \left(\stateone,\evact,E \right) = 0=\mccbvmatrix \left(\statetwo,\evact,E \right)$.
      \item 
        Or $\exists \typtwo \neq \sigma$,$\valone \in \val \typtwo$ such that $E = \{( \hat{\valone},\typtwo) \}$, 
        and  $\mccbvmatrix \left(\stateone,\evact,E \right) = 0=\mccbvmatrix \left(\statetwo,\evact,E \right)$.
      \item 
        Or $\exists \valone \in \val \sigma $, such that $E = \{(\hat{\valone},\sigma) \}$, 
        and $\mccbvmatrix \left(\stateone,\evact,E \right) = \diS \termone (\valone) = \diS \termtwo (\valone) = \mccbvmatrix \left(\statetwo,\evact,E \right)$. 
      \end{varitemize}  
    \item 
      Or $l \neq \evact$, and for all equivalence class $E$ of $\vrelone$:   $\mccbvmatrix \left(\stateone,\evact,E \right) = 0 =\mccbvmatrix \left(\statetwo,\evact,E \right)$.
    \end{varitemize}
  \item 
    $\exists \sigma$, and $\valone \in \val \sigma$ such that $\termone = (\hat{\valone},\sigma) = \termtwo$, and we have: for every $E$ equivalence class of $\vrelone$, 
    for every action $l$, $\mccbvmatrix \left(\stateone,l,E \right) =  \mccbvmatrix \left(\statetwo,l,E)\right)$.   
  \end{varitemize}
\end{varitemize}
\end{proof}
As a consequence of the previous lemma, if $\termone,\termtwo\in\termty{\sigma}$ are such that 
$\diS{\termone}=\diS{\termtwo}$, then $\termone\pabv{\sigma}\termtwo$.
\begin{example}
  For all $\sigma$, $\termone$, $\termtwo$ such that $\emptyset \proves\termone,\termtwo:\sigma$ 
  and $\diS{N}= \emptyset$, we have that $\termone\pasv{\sigma}\termtwo$ implies $\diS{M}=\emptyset$.
  For every terms $\termone,\termtwo$ such that $x:\tau \proves \termone: \sigma$, and $\emptyset \proves \termtwo: \tau$, we have,
  as a consequence of  Lemma \ref{lemma:kleene}, that $(\abstrac\termone)\termtwo\pabv{\sigma}\subst{\termone}{\termtwo}{x}$.
\end{example}
We have just defined applicative (bi)simulation as a family $({\relone}_\sigma)_\sigma$, each 
${\relone}_\sigma$ being a relation on closed terms of type $\sigma$. We can extend it
to a \emph{typed} relation, by the usual open extension:
\begin{definition}
\begin{varenumerate}
\item
  If $\Gamma = \varone_1: \tau_1,\ldots,\varone_n:\tau_n$ is a context, a 
  \emph{$\Gamma$-closure} makes each variable $\varone_i$ to
  correspond to a value $\valone_i\in\val{\tau_i}$ (where $1\leq i \leq n$).
  The set of $\Gamma$-closures is $\ngc{\Gamma}$. For every term $\Gamma\proves\termone:\sigma$
  and for every $\Gamma$-closure $\ccone$, $\termone\ccone$ is the term in $\termty{\sigma}$ obtained
  by substituting the variables in $\Gamma$ with the corresponding values from $\ccone$.
\item
  Let be $\relone=(\relone_\sigma)\in\relct$. We define the \emph{open extension} of 
  $(\relone_\sigma)$ as the typed relation $\relone_\circ=(\reltwo_\sigma^\Gamma)$ where 
  $\reltwo_\sigma^\Gamma\subseteq\termtyg{\sigma}{\Gamma}\times\termtyg{\sigma}{\Gamma}$ is defined
  by stipulating that 
  $\termone\reltwo_\sigma^\Gamma\termtwo$ iff for every $\ccone\in\ngc{\Gamma}$, $(\termone\ccone)\;\relone_\sigma\;(\termtwo\ccone)$.
\end{varenumerate}
\end{definition}
The following proposition say that $\Ev$ is exactly the intersection of $\Rv$ and of the opposite of $\Rv$.
\begin{proposition}\label{prop:appsimvsappbisim} 
  $\Gamma \proves\termone\Ev\termtwo:\sigma$ iff $\Gamma \proves\termone\Rv\termtwo:\sigma$ and $\Gamma \proves \termtwo \Rv \termone: \sigma$.
\end{proposition}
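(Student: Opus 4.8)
The plan is to reduce this statement about the open (typed) relations to the closed-term statement already isolated as Proposition~\ref{prop:simvsbisim}, and then commute a universal quantifier past a conjunction. First I would unfold the definition of the open extension: by definition $\Gamma \proves \termone \Ev \termtwo : \sigma$ holds exactly when, for every $\Gamma$-closure $\ccone \in \ngc{\Gamma}$, the closed terms $\termone\ccone$ and $\termtwo\ccone$ (both in $\termty{\sigma}$) satisfy $(\termone\ccone)\,\pabv{\sigma}\,(\termtwo\ccone)$, that is $(\termone\ccone, \sigma) \backsim (\termtwo\ccone, \sigma)$ as states of $\mccbv$. Likewise $\Gamma \proves \termone \Rv \termtwo : \sigma$ unfolds to: for every $\ccone$, $(\termone\ccone, \sigma) \precsim (\termtwo\ccone, \sigma)$.

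The key ingredient is Proposition~\ref{prop:simvsbisim}, applied to the LMC $\mccbv$: for any two states, in particular for the term-states $(\termone\ccone, \sigma)$ and $(\termtwo\ccone, \sigma)$, we have $(\termone\ccone, \sigma) \backsim (\termtwo\ccone, \sigma)$ if and only if both $(\termone\ccone, \sigma) \precsim (\termtwo\ccone, \sigma)$ and $(\termtwo\ccone, \sigma) \precsim (\termone\ccone, \sigma)$. Thus for each fixed closure $\ccone$ the two sides of the desired equivalence agree pointwise.

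For the forward direction I would assume $\Gamma \proves \termone \Ev \termtwo : \sigma$, fix an arbitrary $\ccone$, and apply the pointwise equivalence to obtain both $(\termone\ccone, \sigma) \precsim (\termtwo\ccone, \sigma)$ and $(\termtwo\ccone, \sigma) \precsim (\termone\ccone, \sigma)$; since $\ccone$ was arbitrary this yields $\Gamma \proves \termone \Rv \termtwo : \sigma$ and $\Gamma \proves \termtwo \Rv \termone : \sigma$ separately. For the backward direction I would assume both $\Gamma \proves \termone \Rv \termtwo : \sigma$ and $\Gamma \proves \termtwo \Rv \termone : \sigma$, fix a single $\ccone$, extract the two corresponding similarities between $(\termone\ccone, \sigma)$ and $(\termtwo\ccone, \sigma)$, and feed them back into Proposition~\ref{prop:simvsbisim} to conclude $(\termone\ccone, \sigma) \backsim (\termtwo\ccone, \sigma)$; as $\ccone$ ranges over $\ngc{\Gamma}$ this is exactly $\Gamma \proves \termone \Ev \termtwo : \sigma$.

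There is no genuine obstacle here beyond bookkeeping: all the substance has already been isolated in Proposition~\ref{prop:simvsbisim} (the genuinely probabilistic fact $\backsim = \precsim \cap \precsim^{op}$, which notably fails in the nondeterministic setting). The only point deserving a little care is that the equivalence must hold for every $\Gamma$-closure simultaneously, which works because universal quantification distributes over conjunction, $\forall \ccone\,(P(\ccone) \wedge Q(\ccone))$ being equivalent to $(\forall \ccone\, P(\ccone)) \wedge (\forall \ccone\, Q(\ccone))$; and that $\precsim^{op}$ on the pair $(\termone\ccone, \termtwo\ccone)$ corresponds precisely to swapping $\termone$ and $\termtwo$, i.e.\ to the relation $\Gamma \proves \termtwo \Rv \termone : \sigma$.
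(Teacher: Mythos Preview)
Your proposal is correct and follows exactly the approach the paper takes: the paper's proof is the one-line remark ``It follows from Proposition~\ref{prop:simvsbisim},'' and you have simply spelled out the routine unfolding of the open extension and the distribution of the universal quantifier over the conjunction that this entails.
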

\begin{proof}
It follows from Proposition \ref{prop:simvsbisim}.
\end{proof}
\begin{lemma}
$\Rv$ is a transitive and reflexive typed relation, and
$\Ev$ is a transitive, reflexive and symmetric typed relation. 
\end{lemma}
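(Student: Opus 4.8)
The plan is to reduce the statement to two ingredients: that the \emph{closed} families $(\pasv{\sigma})$ and $(\pabv{\sigma})$ are, respectively, a preorder and an equivalence relation, and that the open extension $(-)_\circ$ preserves reflexivity, transitivity and symmetry. In fact I would prove directly only that $\Rv$ is reflexive and transitive, and then deduce all three properties of $\Ev$ at once from Proposition~\ref{prop:appsimvsappbisim}: since that proposition says $\Ev$ coincides with $\Rv \cap {\Rv}^{op}$, and the intersection of a preorder with its converse is automatically reflexive, symmetric and transitive, nothing further is needed for $\Ev$.

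The first step is to show that the state relation $\precsim$ on $\mccbvstates$ is a preorder. Reflexivity is immediate, because the identity relation on $\mccbvstates$ satisfies Definition~\ref{def:probasim} trivially (for it $\vrelone(X)=X$), hence is a probabilistic simulation and so is contained in the largest one, $\precsim$. For transitivity I would appeal to the proposition asserting that the reflexive and transitive closure of a union of probabilistic simulations is again a probabilistic simulation. Applying it to the collection of \emph{all} probabilistic simulations, whose union is $\precsim$ by definition, gives that $(\precsim)^{*}$ is a simulation; by maximality $(\precsim)^{*}\subseteq\precsim$, and since trivially $\precsim\subseteq(\precsim)^{*}$ we obtain $\precsim=(\precsim)^{*}$, which is transitive. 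Restricting $\precsim$ to the states of shape $(\termone,\sigma)$ for a fixed $\sigma$ then makes each $\pasv{\sigma}$ reflexive and transitive on $\termty{\sigma}$.

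It remains to verify that the open extension turns a reflexive (resp.\ transitive) closed family into a reflexive (resp.\ transitive) typed relation; this is routine. For reflexivity, given $\termone$ and any $\Gamma$-closure $\ccone\in\ngc{\Gamma}$ we have $(\termone\ccone)\,\pasv{\sigma}\,(\termone\ccone)$, so $\termone\,\Rv\,\termone$. For transitivity, from $\termone\,\Rv\,\termtwo$ and $\termtwo\,\Rv\,\termthree$ and any $\ccone$ we get $(\termone\ccone)\,\pasv{\sigma}\,(\termtwo\ccone)\,\pasv{\sigma}\,(\termthree\ccone)$, and transitivity of $\pasv{\sigma}$ yields $(\termone\ccone)\,\pasv{\sigma}\,(\termthree\ccone)$; as $\ccone$ was arbitrary, $\termone\,\Rv\,\termthree$. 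This establishes that $\Rv$ is a transitive and reflexive typed relation, and, via Proposition~\ref{prop:appsimvsappbisim} as above, that $\Ev$ is transitive, reflexive and symmetric.

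I expect the only genuinely delicate point to be the transitivity of $\precsim$ at the closed level: a union of preorders need not be a preorder, so one cannot argue that $\precsim$ is transitive merely because it is a union of (transitive) simulations. The crux is precisely the interplay between the reflexive--transitive-closure proposition and the maximality of $\precsim$, which together force $\precsim$ to be closed under composition. Everything downstream --- the restriction to a fixed type, the preservation argument for $(-)_\circ$, and the derivation of the $\Ev$ case --- is then purely mechanical.
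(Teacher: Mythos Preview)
Your proposal is correct. For $\Rv$ it follows exactly the paper's route, only spelling out in detail what the paper compresses into the single clause ``since $\precsim$ is a preorder on $\mccbvstates$'': your use of the closure proposition to force $\precsim=(\precsim)^{*}$ is precisely the intended justification, and the passage through restriction and open extension is the same.

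For $\Ev$ you take a mildly different path. The paper argues symmetrically to the $\Rv$ case, invoking directly that $\backsim$ is an equivalence relation on $\mccbvstates$ (again via the closure proposition, now for bisimulations), then restricting and open-extending. You instead bypass $\backsim$ entirely and derive reflexivity, symmetry and transitivity of $\Ev$ from the already-established preorder structure of $\Rv$ together with Proposition~\ref{prop:appsimvsappbisim}. Both arguments are one-liners once $\Rv$ is handled; yours has the small advantage of not repeating the closure-and-maximality manoeuvre a second time, while the paper's keeps the two halves of the lemma logically independent of Proposition~\ref{prop:appsimvsappbisim}.
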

\begin{proof}
\begin{varitemize}
\item
  Since $\precsim$ is a preorder on $\mccbvstates$, $\pasv{\sigma}$ is a preorder on $\termty{\sigma}$, too. 
  By definition of $\Rv$, we have the thesis.
\item
  Since $\backsim$ is an equivalence relation on $\mccbvstates$, $\pabv{\sigma}$ is an equivalence relation on 
  $\termty{\sigma}$, too. By definition of $\Ev$, we have the thesis.
\end{varitemize}
\end{proof}
\begin{definition}[Simulation Preorder and Bisimulation Equivalence]
The typed relation $\Rv$ is said to be the \emph{simulation preorder}.
The typed relation $\Ev$ is said to be \emph{bisimulation equivalence}.
\end{definition}
\subsection{Bisimulation Equivalence is a Congruence}
In this section, we want to show that $\Ev$ is actually a congruence, and
that $\Rv$ is a precongruence. In view of Proposition \ref{prop:appsimvsappbisim}, 
it is enough to show that the typed relation $\Rv$ is a precongruence, since $\Ev$ is the intersection of $\Rv$ 
and the opposite relation of $\Rv$. The key step consists in showing that $\Rv$ is compatible. This will be 
carried out by the Howe's Method, which is a general method for establishing such congruence properties~\cite{Howe96}.

The main idea of Howe's method consists in defining an auxiliary relation $\Rvh$, such that it is easy to see that 
it is compatible, and then prove that $\Rv=(\Rvh)^+$. 
\begin{definition}
Let $\relone$ be a typed relation. 
We define inductively the typed relation $\relone^H$ 
by the rules of Figure \ref{fig:howerules}.
\end{definition}
\begin{figure}[htpb]
\fbox{
\begin{minipage}{1.10\textwidth}
{\center
\scriptsize
\def\defaultHypSeparation{\hskip .1in}
\AxiomC{$\Gamma,\varone:\sigma\proves\rela{\varone}{\vrelone}{\termone}:\sigma$}
\UnaryInfC{$\Gamma,\varone:\sigma\proves\rela{\varone}{\vrelone^H}{\termone}:\sigma$}
\DisplayProof \quad
\AxiomC{$\Gamma \proves \underline{n} R \termone: \ity$}
\UnaryInfC{$\Gamma \proves \underline{n} R^H \termone: \ity$}
\DisplayProof \quad
\AxiomC{$\Gamma \proves \underline{bv} R \termone: \bty$}
\UnaryInfC{$\Gamma \proves \underline{bv} R^H \termone: \bty$}
\DisplayProof
\quad
\AxiomC{$\Gamma \proves \termone R^H \termtwo: \ity$}
\AxiomC{$\Gamma \proves \termthree R^H \termfour: \ity$}
\AxiomC{$\Gamma \proves \termtwo \, op\, \termfour R \termfive: \gamma_{\mathsf {op}}$}
\TrinaryInfC{$\Gamma \proves \termone\, op\, \termthree R^H \termfive: \gamma$}
\DisplayProof \quad
\AxiomC{$\Gamma,x:A \proves \termone\,  R^H\,  \termtwo: B$}
\AxiomC{$\Gamma \proves \abstrac \termtwo \, R \, \termthree: A \rightarrow B$}
\BinaryInfC{$\Gamma \proves \abstrac \termone\,  R^H \, \termthree: A \rightarrow B$}
\DisplayProof \quad
\AxiomC{$\Gamma,x:A \proves \termone \, R^H \,\termtwo: A$}
\AxiomC{$\Gamma \proves \fix \termtwo\, R\,  \termthree: A $}
\BinaryInfC{$\Gamma \proves \fix \termone\, R^H\, \termthree: A$}
\DisplayProof \quad
\AxiomC{$\Gamma\proves \termone \, R^H \,  \termtwo: A\rightarrow B$}
\AxiomC{$\Gamma\proves \termthree \, R^H \, \termfour: A $}
\AxiomC{$\Gamma \proves \termtwo \termfour \, R \, \termfive: B$}
\TrinaryInfC{$\Gamma \proves \termone \termthree \, R^H\, \termfive:B$}
\DisplayProof \quad
\AxiomC{$\Gamma\proves \termone \, R^H \,  \termthree: A$}
\AxiomC{$\Gamma\proves \termtwo \, R^H \,  \termfour: B $}
\AxiomC{$\Gamma \proves \pair {\termthree} {\termfour} \, R \, \termfive: A \times B$}
\TrinaryInfC{$\Gamma \proves \pair {\termone} {\termtwo} \, R^H\, \termfive:A \times B$}
\DisplayProof \quad
\AxiomC{$\Gamma,x:A \proves P \, R^H \, P': A \times B$}
\AxiomC{$\Gamma \proves \fst {P'}\, R\,  N: A $}
\BinaryInfC{$\Gamma \proves \fst {P} \, \, R^H \, N: A$}
\DisplayProof \quad
\AxiomC{$\Gamma,x:A \proves P \, R^H \, P': A \times B$}
\AxiomC{$\Gamma \proves \snd {P'}\, R \, N: B $}
\BinaryInfC{$\Gamma \proves \snd { P} \,  R^H \, N: B$}
\DisplayProof \quad
\AxiomC{$\Gamma\proves \termone \, R^H \, \termtwo: A$}
\AxiomC{$\Gamma\proves \termthree \, R^H \, \termfour: A $}
\AxiomC{$\Gamma \proves \termtwo \oplus \termfour \, R \, \termfive: A$}
\TrinaryInfC{$\Gamma \proves \termone \oplus \termthree \, R^H\,  \termfive:A$}
\DisplayProof \quad
\AxiomC{$\Gamma\proves \termone \, R^H \, \termfour: \bty$}
\AxiomC{$\Gamma\proves \termtwo \, R^H \, \termfive: A $}
\AxiomC{$\Gamma\proves \termthree \, R^H \, \termsix: A$}
\AxiomC{$\Gamma\proves \ifth {\termfour} {\termfive} {\termsix} \, R\, \termseven: A $}
\QuaternaryInfC{$\Gamma \proves \ifth {\termone} {\termtwo} {\termthree} \, R^H \, \termseven:A$}
\DisplayProof \quad
\AxiomC{$\Gamma\proves H \, R^H \, H': A$}
\AxiomC{$\Gamma\proves T \,R^H \, T': [A] $}
\AxiomC{$\Gamma \proves H'::T' \, R\,  N: [A]$}
\TrinaryInfC{$\Gamma \proves H::T \, R^H \, N:[A]$}
\DisplayProof \quad
\AxiomC{$\Gamma\proves M_1 \, R^H \, M_1': A$}
\AxiomC{$\Gamma\proves L \, R^H \, L': [B] $}
\AxiomC{$\Gamma,h::B,t::[B] \proves M_2 \, R^H \, M_2': A$}
\AxiomC{$\Gamma\proves \caset {L'} {M_1'} {M_2'} \,R\, e_4: A $}
\QuaternaryInfC{$\Gamma \proves \caset {L} {M_1} {M_2} \, R^H\, N:A$}
\DisplayProof
\par}
\end{minipage}}
\caption{Howe's Construction}
\label{fig:howerules}
\end{figure}
We are now going to show, that if the relation $\relone$ we start from satisfies minimal requirements, namely that 
it is reflexive and transitive, then the transitive closure $(\relone^H)^+$ of the Howe's lifting is guaranteed 
to be a precongruence which contains $\relone$. This is a direct consequence of the following results, whose
proofs are standard inductions:
\begin{varitemize}
\item
  Let $\relone$ be a reflexive typed relation.
  Then $\relone^H$ is a compatible.
\item
  Let $\relone$ be transitive. Then : 
\begin{equation}  
\label{NonTrans}
  \left(\Gamma\proves\rela{\termone}{\relone^H}{\termtwo}:\sigma\right)\wedge
  \left(\Gamma\proves\rela{\termtwo}{\relone}{\termthree}:\sigma\right) 
  \Rightarrow \left(\Gamma\proves\rela{\termone}{\relone^H}{\termthree}:\sigma\right) 
  \end{equation}
\item
  If $\relone$ is reflexive and $\Gamma\proves\rela{\termone}{\relone}{\termtwo}:\sigma$, then 
  $\Gamma\proves\rela{\termone}{\relone^H}{\termtwo}:\sigma$.
\item
  If $\relone$ is compatible, then so is $\relone^+$.
\end{varitemize}  
We can now apply the Howe's construction to $\Rv$, since it is clearly reflexive and transitive. The points above
then tell us that $\Rvh$, and $(\Rvh)^+$ are both compatible. What we are left with, then, is proving that $(\Rvh)^+$ is also a simulation.
Let be $\valone\in\val{\sigma}$. For $\valtwo\in\val{\sigma \rightarrow \tau}$, we will note 
${g^V(\valtwo)=\subst{\termtwo}{\valone} x}$ if ${\termone=\abstrac\termtwo}$, and  
${g^V(\valtwo)=\left(\subst{\termtwo}{\fix\termtwo}\varone\right)\,\valone}$ 
if ${\termone=\fix\termtwo}$.
\begin{lemma}
\label{lemma:redv}
For every $\termone,\termtwo$,
$\left(\emptyset \proves M\, \Rv\, N: \sigma \rightarrow \tau\right)$ implies
$\left(\emptyset \proves g^V(M)\, \Rv\, g^V(N): \tau \right)$.
\end{lemma}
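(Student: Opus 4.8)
The plan is to reduce everything to the two defining clauses of $\mccbvmatrix$ at states of function type, exploiting that both the evaluation label $\evact$ and the application labels act \emph{deterministically} on values. Since $g^V$ is defined only on values of type $\sigma\rightarrow\tau$, I read the statement with $M,N\in\val{\sigma\rightarrow\tau}$; such a value is either of the form $\abstrac\termone$ or of the form $\fix\termone$, and in both cases $g^V$ names exactly the (unique) target of the application step with argument $\valone$.

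First I would pass from term-states to distinguished-value-states. The hypothesis $\emptyset\proves M\Rv N:\sigma\rightarrow\tau$ unfolds to $(M,\sigma\rightarrow\tau)\precsim(N,\sigma\rightarrow\tau)$. Because $M$ and $N$ are values, I invoke the earlier lemma characterizing $\precsim$ on values (the one stating that $(\valone,\sigma)\precsim(\valtwo,\sigma)$ iff $(\hat\valone,\sigma)\precsim(\hat\valtwo,\sigma)$), in its direction from undistinguished to distinguished states, to obtain $(\hat M,\sigma\rightarrow\tau)\precsim(\hat N,\sigma\rightarrow\tau)$.

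Next I would read off the application transition. By definition of $\mccbvmatrix$, the label $\valone$ (the value fixed in the definition of $g^V$) sends $(\hat M,\sigma\rightarrow\tau)$ to $(g^V(M),\tau)$ with probability $1$, and likewise sends $(\hat N,\sigma\rightarrow\tau)$ to $(g^V(N),\tau)$ with probability $1$; this holds uniformly whether the value is a $\lambda$-abstraction $\abstrac\termone$ or a fixpoint $\fix\termone$, since in each case the target is precisely $g^V$ of that value. I then apply the simulation inequality to the label $\valone$ and the singleton $X=\{(g^V(M),\tau)\}$: from $1=\mccbvmatrix((\hat M,\sigma\rightarrow\tau),\valone,X)\leq\mccbvmatrix((\hat N,\sigma\rightarrow\tau),\valone,\precsim(X))$, together with the fact that the only $\valone$-successor of $(\hat N,\sigma\rightarrow\tau)$ carrying positive mass is $(g^V(N),\tau)$, I conclude $(g^V(N),\tau)\in\precsim(X)$, that is $(g^V(M),\tau)\precsim(g^V(N),\tau)$, which is exactly $\emptyset\proves g^V(M)\Rv g^V(N):\tau$.

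The only genuinely delicate point is the determinism argument in the last step: it is what lets a set-indexed simulation inequality pin down a \emph{single} successor rather than merely bound its mass. One must be sure that $(\hat N,\sigma\rightarrow\tau)$ has exactly one $\valone$-successor of positive probability, namely $(g^V(N),\tau)$, so that $\mccbvmatrix((\hat N,\sigma\rightarrow\tau),\valone,\precsim(X))\geq 1$ forces membership $(g^V(N),\tau)\in\precsim(X)$. The uniform treatment of the $\abstrac\termone$ and $\fix\termone$ shapes---both collapsing to a single probability-$1$ edge to $g^V(\cdot)$---is what keeps this clean and removes any need for a case analysis on the forms of $M$ and $N$.
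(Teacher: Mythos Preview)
Your argument is correct and follows essentially the same route as the paper, which records only the key step---that $(\hat M,\sigma\rightarrow\tau)\precsim(\hat N,\sigma\rightarrow\tau)$---and leaves the rest implicit. You have simply made explicit both how this follows from the earlier value lemma and how the simulation inequality on the deterministic application label $\valone$ then pins down $g^V(N)$ as a $\precsim$-successor of $g^V(M)$; nothing in your expansion deviates from the paper's intended reasoning.
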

\begin{proof}
It follows from the fact that $(\hat M,\sigma \rightarrow \tau) \backsim_v (\hat N,\sigma \rightarrow \tau)$.
\end{proof}
\begin{lemma}
\label{lemma:substitutivev}
$\Rvh$ is value-substitutive: for every typing context $\Gamma$ and for every terms $M,N$
and values $V,W$ such that $\Gamma,x:A \proves M \Rvh N:\sigma$ and 
$\Gamma \proves V \Rvh W:\tau$, it holds that
$\Gamma \proves {\subst{\termone}{\valone}{\varone}}\,\Rvh\,{\subst{\termtwo}{\valtwo}{\varone}}:\sigma$
\end{lemma}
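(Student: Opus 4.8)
The plan is to prove the statement by induction on the derivation of $\Gamma,\varone:A \proves \rela{\termone}{\Rvh}{\termtwo}:\sigma$ following the rules of Figure \ref{fig:howerules} (the type $\tau$ of the statement being the type $A$ of $\varone$). The engine of the whole argument is a \emph{single-value substitutivity property of the base relation} $\Rv$: if $\Gamma,\varone:A \proves \rela{\termone}{\Rv}{\termtwo}:\sigma$ and $\valtwo$ is a value with $\Gamma \proves \valtwo:A$, then $\Gamma \proves \rela{\subst{\termone}{\valtwo}{\varone}}{\Rv}{\subst{\termtwo}{\valtwo}{\varone}}:\sigma$. This is immediate from the open-extension definition of $\Rv$: any $\Gamma$-closure $\ccone\in\ngc{\Gamma}$ of the substituted terms arises by applying the $(\Gamma,\varone:A)$-closure $(\ccone,\valtwo\ccone/\varone)$ to $\termone$ and $\termtwo$, where $\valtwo\ccone$ is again a closed value since values are stable under value substitution. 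I will also use \emph{weakening} for $\Rvh$ (if $\Gamma \proves \rela{\termone}{\Rvh}{\termtwo}:\sigma$ and $\vartwo\notin\dom\Gamma$, then $\Gamma,\vartwo:\typthree \proves \rela{\termone}{\Rvh}{\termtwo}:\sigma$), which is a routine induction on the Howe rules.

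For the base cases $\termone$ is a variable. If $\termone=\varone$, the derivation ends with the variable rule, so $\sigma=A$ and $\Gamma,\varone:A \proves \rela{\varone}{\Rv}{\termtwo}:A$. Single-value substitutivity of $\Rv$ (with $\valtwo$) gives $\Gamma \proves \rela{\valtwo}{\Rv}{\subst{\termtwo}{\valtwo}{\varone}}:A$, using $\subst{\varone}{\valtwo}{\varone}=\valtwo$. Since also $\Gamma \proves \rela{\valone}{\Rvh}{\valtwo}:A$ by hypothesis, pseudo-transitivity (\ref{NonTrans}) yields $\Gamma \proves \rela{\valone}{\Rvh}{\subst{\termtwo}{\valtwo}{\varone}}:A$, which is the goal because $\subst{\varone}{\valone}{\varone}=\valone$. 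If instead $\termone=\vartwo\neq\varone$, then $\subst{\vartwo}{\valone}{\varone}=\vartwo$, and single-value substitutivity of $\Rv$ on the premise $\Gamma,\varone:A \proves \rela{\vartwo}{\Rv}{\termtwo}:\sigma$ gives $\Gamma \proves \rela{\vartwo}{\Rv}{\subst{\termtwo}{\valtwo}{\varone}}:\sigma$; the property that a reflexive $\relone$ relates $\termone\mathrel{\relone^H}\termtwo$ whenever $\termone\mathrel{\relone}\termtwo$ then closes this case.

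The inductive cases are uniform; I illustrate with application, where the derivation ends with
$$\Gamma,\varone:A \proves \rela{\termone_1}{\Rvh}{\termone_2}:\arr{\typthree}{B},\quad \Gamma,\varone:A \proves \rela{\termone_3}{\Rvh}{\termone_4}:\typthree,\quad \Gamma,\varone:A \proves \rela{\termone_2\,\termone_4}{\Rv}{\termtwo}:B,$$
with $\termone=\termone_1\,\termone_3$ and $\sigma=B$. The induction hypothesis on the first two premises gives $\Gamma \proves \rela{\subst{\termone_1}{\valone}{\varone}}{\Rvh}{\subst{\termone_2}{\valtwo}{\varone}}:\arr{\typthree}{B}$ and $\Gamma \proves \rela{\subst{\termone_3}{\valone}{\varone}}{\Rvh}{\subst{\termone_4}{\valtwo}{\varone}}:\typthree$, while single-value substitutivity of $\Rv$ on the third premise gives $\Gamma \proves \rela{\subst{\termone_2}{\valtwo}{\varone}\,\subst{\termone_4}{\valtwo}{\varone}}{\Rv}{\subst{\termtwo}{\valtwo}{\varone}}:B$. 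Re-applying the application rule of Figure \ref{fig:howerules} to these three judgements yields $\Gamma \proves \rela{\subst{(\termone_1\,\termone_3)}{\valone}{\varone}}{\Rvh}{\subst{\termtwo}{\valtwo}{\varone}}:B$. The constructors $\oplus$, pairing, $\fst{\cdot}/\snd{\cdot}$, cons, conditionals, and operators are handled identically.

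I expect the \textbf{main obstacle} to be the binding cases — abstraction, fixpoint, and $\mathsf{case}$ — where the subterm is typed in an enlarged context $\Gamma,\varone:A,\vartwo:\typthree$. By $\alpha$-conversion I first take the bound variable $\vartwo$ fresh for $\varone,\valone,\valtwo$, so substitution commutes with the binder (e.g.\ $\subst{(\abstr{\vartwo}{\termone_1})}{\valone}{\varone}=\abstr{\vartwo}{\subst{\termone_1}{\valone}{\varone}}$). To apply the induction hypothesis to the premise $\Gamma,\vartwo:\typthree,\varone:A \proves \rela{\termone_1}{\Rvh}{\termone_2}:B$ (reordering the context, which is harmless since contexts are finite partial functions), I must supply $\Gamma,\vartwo:\typthree \proves \rela{\valone}{\Rvh}{\valtwo}:A$; this is exactly where \emph{weakening} of the hypothesis $\Gamma \proves \rela{\valone}{\Rvh}{\valtwo}:A$ is needed. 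Single-value substitutivity of $\Rv$ on the second premise and a final application of the relevant Howe rule then close each binding case precisely as in the application case.
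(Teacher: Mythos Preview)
Your proposal is correct and is precisely the standard induction on the Howe derivation that the paper has in mind; the paper itself omits the proof of Lemma~\ref{lemma:substitutivev}, treating it (like the surrounding properties of $\relone^H$) as one of the ``standard inductions''. Your identification of the three ingredients---single-value substitutivity of the base relation $\Rv$ (which indeed drops out of the open-extension definition exactly as you argue), the pseudo-transitivity property~\eqref{NonTrans}, and weakening for $\Rvh$---is exactly what is needed, and your handling of the variable base case and of the binder cases via $\alpha$-renaming plus weakening is the canonical one.
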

\begin{lemma}
\label{lemma:klPourRv}
For all $\termone,\termtwo$ terms of \PCFLP\,
\begin{varitemize}
\item 
  If $\emptyset \proves  \termone \, \Rv \, \termtwo: {\sigma \rightarrow \tau} $, then 
  for every $X\subseteq\val{\sigma\rightarrow\tau}$, it holds that
  $\diS{\termone}(X)\leq\diS{\termtwo}(\Rv(X))$.
\item 
  If $\left(\emptyset \proves \termone\, \Rv \,\termtwo: {\sigma \times \tau}\right)$,
  then for every  $X\subseteq\val{\sigma \times \tau}$ we have:
  $\diS \termone (X) \leq \diS \termtwo (Y)$, when 
  $Y = \{\pair{\termthree}{\termfour}\mid\exists\pair{\termfive}{\termsix}\in X\wedge\emptyset 
  \proves\termthree\Rv\termfive:\sigma\wedge\emptyset \proves\termfour\Rvh\termsix:\tau\}$.  
\item 
  If $\left(\emptyset \proves \termone\, \Rv  \,\termtwo: {[\sigma]}\right)$  then 
  ${\diS \termone (\mathsf{nil}) \leq \diS \termtwo (\mathsf{nil})}$ and 
  for every $ X \subseteq \val{[\sigma]}$, ${\diS \termone (X) \leq \diS \termtwo (Y)}$
  where $Y$ is the set of those terms $K::L$ such that $\exists H , T$ with $H::T \in X$ and  
  $\emptyset\proves \,\Rv \, K: \sigma$, and $\emptyset \proves  T\, \Rv \,L: {[\sigma]}$.  
\item 
  $\emptyset \proves  \termone\, \Rv\, \termtwo:\ity$ $\Rightarrow$ $\forall k \in \mathbb{N}, 
  \diS \termone (\underline{k}) \leq \diS \termtwo (\underline{k})$.
\item 
  $\emptyset \proves \termone \, \Rv \, \termtwo:\bty$ $\Rightarrow$ $\forall b \in \mathbb{B}, 
  \diS \termone (\underline{b}) \leq \diS \termtwo (\underline{b})$.
\end{varitemize}
\end{lemma}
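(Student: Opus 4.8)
The plan is to derive all five items from a single mechanism: the defining inequality of the simulation $\precsim$ read off at the action $\evact$. For closed terms, $\emptyset\proves\termone\Rv\termtwo:\sigma$ means exactly $(\termone,\sigma)\precsim(\termtwo,\sigma)$. The only $\evact$-transitions out of a term state $(\termone,\sigma)$ reach distinguished value states $(\hat\valone,\sigma)$, each with weight $\diS{\termone}(\valone)$; hence, for $X\subseteq\val{\sigma}$ and the associated state set $\hat X=\{(\hat\valone,\sigma)\mid\valone\in X\}$, one has $\mccbvmatrix((\termone,\sigma),\evact,\hat X)=\diS{\termone}(X)$. Instantiating the simulation inequality at label $\evact$ on $(\termone,\sigma)\precsim(\termtwo,\sigma)$ then gives
\[
  \diS{\termone}(X)=\mccbvmatrix((\termone,\sigma),\evact,\hat X)\leq\mccbvmatrix((\termtwo,\sigma),\evact,\precsim(\hat X)).
\]
Because the $\evact$-mass leaving $(\termtwo,\sigma)$ lands only on distinguished $\sigma$-value states, the right-hand side equals $\diS{\termtwo}(Z)$ with $Z=\{\valtwo\in\val{\sigma}\mid(\hat\valtwo,\sigma)\in\precsim(\hat X)\}$. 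Thus every item reduces to \emph{identifying $Z$}, i.e.\ to characterising which distinguished $\sigma$-value states are $\precsim$-above some state of $\hat X$.

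First I would record that $\precsim$ preserves types: the self-loop $\mccbvmatrix((\hat\valone,\sigma),\sigma,(\hat\valone,\sigma))=1$ together with the inequality for label $\sigma$ forces any simulating state to carry the same type, so no state of another type enters the relevant part of $\precsim(\hat X)$. Since $(\hat\valtwo,\sigma)\in\precsim(\hat X)$ means $(\hat\valone,\sigma)\precsim(\hat\valtwo,\sigma)$ for some $\valone\in X$, I would then unfold this according to the shape of $\sigma$ using the type-specific observation labels:
\begin{varitemize}
\item $\sigma=\arr{\sigma'}{\tau'}$: the unlabelled lemma preceding Lemma~\ref{lemma:kleene} gives $(\hat\valone,\sigma)\precsim(\hat\valtwo,\sigma)\iff(\valone,\sigma)\precsim(\valtwo,\sigma)\iff\valone\Rv\valtwo$, so $Z=\Rv(X)$ and the first item follows directly, with no need to unfold the application labels.
\item $\sigma=\cprod{\sigma'}{\tau'}$: every value of this type is a pair, so write $\valone=\pair{\termfive}{\termsix}$ (from $X$) and $\valtwo=\pair{\termthree}{\termfour}$; the deterministic labels $\fstact,\sndact$ force $(\termfive,\sigma')\precsim(\termthree,\sigma')$ and $(\termsix,\tau')\precsim(\termfour,\tau')$, which assemble into the componentwise image set $Y$ of the second item.
\item $\sigma=\lstty{\sigma'}$: the label $\nilact$ separates $\nil$ from a cons (a cons has no $\nilact$-edge, $\nil$ no $\hd$-edge), yielding $\diS{\termone}(\nil)\leq\diS{\termtwo}(\nil)$; on a cons $\lst{H}{T}$ the labels $\hd,\tl$ give $H\Rv K$ and $T\Rv L$ exactly as for products, producing the set $Y$ of the third item.
\item $\sigma\in\{\ity,\bty\}$: the observation label attached to $\cnst{k}$ (resp.\ $\cnst{b}$) is \emph{rigid}, since $\cnst{m}$ has no $k$-labelled edge for $m\neq k$; hence $(\hat{\cnst{k}},\ity)\precsim(\hat{\cnst{m}},\ity)$ forces $m=k$, and taking $X=\{\cnst{k}\}$ gives $Z=\{\cnst{k}\}$, i.e.\ the per-value inequalities $\diS{\termone}(\cnst{k})\leq\diS{\termtwo}(\cnst{k})$; booleans are identical.
\end{varitemize}

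Assembled, each statement is just the instance $\diS{\termone}(X)\leq\diS{\termtwo}(Z)$ of the master inequality with $Z$ computed above; the conceptual content is entirely in the two-line argument at the top, and the rest is bookkeeping. The points I expect to be delicate are: (i) checking that $\precsim(\hat X)$ contributes no spurious $\evact$-mass from $\termtwo$, handled by type preservation and the fact that $\evact$ targets only distinguished $\sigma$-values; and (ii) tracking the \emph{orientation} of the componentwise relations—the value drawn from $X$ is the one that is simulated, hence sits relationally below the matching value counted in $Z$—which follows by unfolding $\precsim(\hat X)=\{y\mid\exists x\in\hat X,\,x\precsim y\}$ carefully. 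The one place needing extra care is the second component in the product item, phrased with the Howe lifting $\Rvh$: there the $\sndact$-observation delivers only an $\Rv$-relation between the components, which must be weakened to $\Rvh$ using the inclusion $\Rv\subseteq\Rvh$ (valid because $\Rv$ is reflexive, as noted among the properties of Howe's construction preceding Lemma~\ref{lemma:redv}).
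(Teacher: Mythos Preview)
Your argument is correct and is precisely the detailed unfolding of the paper's one-line proof, which reads in its entirety ``It follows from the definition of $\Rv$''. One caveat worth noting: for the product (and list) case the set $Z$ you actually compute has the $X$-component relationally \emph{below} the matching component (i.e.\ $R\,\Rv\,L$ and $T\,\Rv\,P$ when $\langle R,T\rangle\in X$), and this is indeed the orientation used when the lemma is invoked inside the Key Lemma; the reversed orientation and the stray $\Rvh$ printed in the lemma statement are typos in the paper, so your closing weakening step $\Rv\subseteq\Rvh$ is neither needed nor sufficient to match the statement \emph{as literally written}---your $Z$ is already the right set.
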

\begin{proof}
It follows from the definition of $\Rv$.\cqed
\end{proof}
We also need an auxiliary, technical, lemma about probability assignments:
\begin{definition}
$\mathbb{P} = \left(\{ p_i \}_{1 \leq i \leq n},\, \{r_I \}_{I \subseteq \{1,...,n\}} \right)$ is said to be a 
\emph{ probability assignment } if for each ${I \subseteq \{1,..,n\}}$, it holds that $\sum_{i \in I}p_i \leq \sum_{J \cap I \neq \emptyset} r_J$. 
\end{definition}
\begin{lemma}[Disentangling Sets]\label{lemma:disentangling}
Let $P = \left(\{p_i\}_{1\leq i\leq n},\{r_I \}_{I \subseteq \{1,...,n\}}  \right)$ be a probability assignment.
Then for every non-empty $I \subseteq \{1,...,n\}$, and for every $k \in I$, there is $s_{k,I} \in [0,1]$ 
satisfying the following conditions:
\begin{varitemize}
\item for every $I$, it holds that $\sum_{k \in I} s_{k,I} \leq 1$;
\item for every $k \in {1,...,n}$, it holds that $p_k \leq \sum_{k \in I} s_{k,I}\cdot r_I$. 
\end{varitemize}
\end{lemma}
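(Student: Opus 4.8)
The plan is to read the two desired families of inequalities as the feasibility conditions of a finite transportation (max-flow) problem, and to deduce feasibility from the probability-assignment hypothesis by the max-flow min-cut theorem. Concretely, I would build a capacitated network with a source $\mathit{src}$, a sink $\mathit{snk}$, one node $a_I$ for each non-empty $I\subseteq\{1,\dots,n\}$, and one node $b_k$ for each $k\in\{1,\dots,n\}$, with edges and capacities: $\mathit{src}\to a_I$ of capacity $r_I$; $a_I\to b_k$ of capacity $+\infty$ whenever $k\in I$; and $b_k\to\mathit{snk}$ of capacity $p_k$. A flow $f$ that saturates every edge $b_k\to\mathit{snk}$ yields exactly the numbers we want: put $s_{k,I}=f(a_I\to b_k)/r_I$ when $r_I>0$ and $s_{k,I}=0$ when $r_I=0$. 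Then $0\le f(a_I\to b_k)\le r_I$ gives $s_{k,I}\in[0,1]$; conservation at $a_I$ gives $\sum_{k\in I}s_{k,I}\le 1$; and saturation of $b_k\to\mathit{snk}$ together with conservation at $b_k$ gives $p_k=\sum_{I\ni k}s_{k,I}r_I$, which is the second condition (in fact with equality).

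So the real content is to show that the maximum flow equals $\sum_k p_k$. Since the network is finite and the capacities are reals (the $+\infty$ edges may be replaced by the finite capacity $\sum_I r_I+1$, which can never belong to a cheapest cut), the max-flow min-cut theorem applies, and it suffices to show that every cut has capacity at least $\sum_k p_k$. I would take a cut $(S,T)$ with $\mathit{src}\in S$, $\mathit{snk}\in T$. If some edge $a_I\to b_k$ crosses the cut, its large capacity already exceeds $\sum_k p_k$, so I may assume no such edge is cut; then $a_I\in S$ and $k\in I$ force $b_k\in S$. Setting $K=\{k\mid b_k\in T\}$, the cut capacity is $\sum_{a_I\in T}r_I+\sum_{b_k\in S}p_k$, and the no-crossing assumption shows that $I\cap K\neq\emptyset$ implies $a_I\in T$. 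Hence $\sum_{a_I\in T}r_I\ge\sum_{I\cap K\neq\emptyset}r_I\ge\sum_{k\in K}p_k$, the last inequality being the probability-assignment hypothesis applied to the index set $K$ (trivial when $K=\emptyset$). Adding $\sum_{b_k\in S}p_k$ to both ends gives cut capacity $\ge\sum_k p_k$, as needed; consequently the maximum flow is $\sum_k p_k$, so every $b_k\to\mathit{snk}$ is saturated.

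The main obstacle is precisely this cut computation: one must check that the combinatorial implication ``$a_I$ lies on the sink side whenever $I$ meets $K$'' is forced by discarding cuts that sever an infinite-capacity edge, and that the resulting supply term $\sum_{I\cap K\neq\emptyset}r_I$ is literally the right-hand side $\sum_{J\cap I\neq\emptyset}r_J$ of the hypothesis with $I:=K$. Everything else is routine: translating an optimal flow back into the weights $s_{k,I}$, and the boundary bookkeeping for $r_I=0$ (where the node $a_I$ carries no incoming capacity, so all its outgoing flows vanish and $s_{k,I}=0$ causes no trouble). I would close by observing that only non-empty $I$ receive variables $s_{k,I}$, matching the statement, and that both displayed conditions then hold, the second as an equality.
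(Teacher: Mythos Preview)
Your argument is correct and, like the paper, reduces the lemma to the max-flow min-cut theorem; the difference lies only in the shape of the network. The paper builds its flow network on the lattice of nonempty subsets of $\{1,\dots,n\}$: edges go from the source to each singleton $\{i\}$ with capacity $p_i$, from each $I$ to each $I\cup\{j\}$ with unit capacity, and from each $I$ to the sink with capacity $r_I$; it then asserts (without giving the cut computation) that the max flow equals $\sum_i p_i$. Your construction is the more standard bipartite one, with the $r_I$'s as source capacities, the $p_k$'s as sink capacities, and infinite-capacity edges $a_I\to b_k$ for $k\in I$; the extraction of the $s_{k,I}$ from an optimal flow is immediate, and your cut analysis makes explicit exactly where the probability-assignment hypothesis (applied to the set $K$ of sink-side indices) is invoked. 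In short, the two proofs share the same idea, but your network is simpler and your write-up is considerably more detailed than the paper's sketch.
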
  
\begin{proof}
Any probability assignment $\mathbb{P}$ can be seen as a flow network, where nodes are the nonempty subsets of
$\{1,\ldots,n\}$, plus a distinguished source $s$ and a distinguished target $t$. Edges, then, go from $s$ to
each singleton $\{i\}$ (with capacity $p_i$), from every nonempty $I$ to $I\cup\{i\}\subseteq\{1,\ldots,n\}$
whenever $i\notin I$ (with capacity $1$) and from every such $I$ to $t$ (with capacity $r_I$). The
thesis, then, is easily proved equivalent to showing that such a net supports a flow
equals to $\sum p_i$. And, indeed, the fact that this is the maximum flow from $s$ to $t$ can be proved
by way of the Max-Flow Min-Cut Theorem.
\end{proof}
\begin{lemma}[Key Lemma]
For every terms $\termone,\termtwo$,
\begin{varitemize}
\item 
  If $\emptyset \proves  \termone \, \Rvh \, \termtwo: {\sigma \rightarrow \tau} $, then 
  for every $X_1\subseteq\termtyg{\varone:\sigma}{\tau}$ and $X_2\subseteq\termtyg{\varone:\sigma \rightarrow \tau}{\sigma \rightarrow \tau}$, it holds that
  ${\diS{\termone}\left(\abstrac{X_1} \bigcup \fix{X_2}\right)}\leq{\diS{\termtwo}(\Rv\left(\abstrac{Y_1}\, \bigcup \, {\fix{Y_2}}\right))}$, where 
  $Y_1=\{\termthree\in\termtyg{\varone:\sigma}{\tau}\mid \exists\termfour\in X_1.x:\sigma
  \proves \termfour \Rvh \termthree:\tau\}$ and
  $Y_2=\{\termthree\in\termtyg{\varone:\sigma \rightarrow \tau}{\sigma \rightarrow \tau}\mid \exists\termfour\in X_2.x:\sigma \rightarrow \tau
  \proves \termfour \Rvh \termthree:\sigma \rightarrow \tau\}$.
\item 
  If $\emptyset \proves \termone\, \Rvh \,\termtwo:\sigma \times \tau$,
  then for every  $X\subseteq\val{\sigma \times \tau}$ we have:
  $\diS \termone (X) \leq \diS \termtwo (\Rv(Y))$, where 
  $Y = \{\pair{\termthree}{\termfour}\mid\exists\pair{\termfive}{\termsix}\in X\wedge\emptyset 
  \proves\termfive\Rvh\termthree:\sigma\wedge\emptyset \proves\termsix\Rvh\termfour:\tau\}$.  
\item 
  If $\left(\emptyset \proves \termone\, \Rvh  \,\termtwo: {[\sigma]}\right)$  then it holds that
  ${\diS \termone (\mathsf{nil}) \leq \diS \termtwo (\mathsf{nil})}$ and 
  for every $ X \subseteq \val{[\sigma]}$, ${\diS \termone (X) \leq \diS \termtwo (\Rv(Y))}$
  where $Y$ is the set of those $K::L$ such that there are $H,T$ with $H::T\in X$,
  $\emptyset\proves  H \Rvh K: \sigma$, and $\emptyset\proves  T\, \Rvh \,L: {[\sigma]}$.  
\item 
  $\emptyset \proves  \termone\, \Rvh\, \termtwo:\ity$ $\Rightarrow$ $\forall k \in \mathbb{N}, 
  \diS \termone (\underline{k}) \leq \diS \termtwo (\underline{k})$.
\item 
  $\emptyset \proves \termone \, \Rvh \, \termtwo:\bty$ $\Rightarrow$ $\forall b \in \mathbb{B}, 
  \diS \termone (\underline{b}) \leq \diS \termtwo (\underline{b})$.
\end{varitemize}
\end{lemma}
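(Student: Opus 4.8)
The plan is to pass to the approximation semantics and then induct on the derivation of the big-step judgement. Because $\diS{\termone}=\sup_{\evalv{\termone}{\distrone}}\distrone$ and the distributions $\distrone$ with $\evalv{\termone}{\distrone}$ form a directed set (Lemma~\ref{lem:directed}), the mass that $\diS{\termone}$ assigns to any fixed set of values is the supremum of the masses assigned by the finite approximants. Since the right-hand side of each of the five inequalities mentions only $\diS{\termtwo}$, which is itself already a supremum, it suffices to prove every clause with an arbitrary approximant $\distrone$ (where $\evalv{\termone}{\distrone}$) in place of $\diS{\termone}$ on the left. This reduces the statement to an induction on the structure of the derivation of $\evalv{\termone}{\distrone}$.

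First I would handle the base cases. Rule $b_e$ gives $\distrone=\emptyset$, and every inequality holds trivially. Rule $b_v$ applies when $\termone$ is a value $\valone$; here I would invert the Howe rule (Figure~\ref{fig:howerules}) that derived $\emptyset\proves\valone\,\Rvh\,\termtwo$, producing an intermediate value $\valthree$ whose immediate subterms are $\Rvh$-related to those of $\valone$ and with $\valthree\,\Rv\,\termtwo$. By construction this $\valthree$ lies in the relevant target set ($Y_1$, $Y_2$, or $Y$), so the required bound follows from the matching clause of Lemma~\ref{lemma:klPourRv} applied to $\valthree\,\Rv\,\termtwo$, using~(\ref{NonTrans}) to fold the residual $\Rvh$-relations on the subcomponents together with the final $\Rv$-step into the stated $\Rv$-closed sets.

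The core of the argument, and the step I expect to be the main obstacle, is the inductive case for application (rule $b_a$), which is where the Disentangling Lemma is needed. Writing $\termone=\termone_1\,\termone_2$ and inverting the Howe rule for application, I get $\termone_1',\termone_2'$ with $\termone_1\,\Rvh\,\termone_1'$, $\termone_2\,\Rvh\,\termone_2'$ and $\termone_1'\,\termone_2'\,\Rv\,\termtwo$. The premises of $b_a$ are subderivations, so the induction hypothesis controls the function-value distribution of $\termone_1$ (first clause) and the argument-value distribution of $\termone_2$ in terms of those of $\termone_1',\termone_2'$, and a further appeal to the induction hypothesis on each body-evaluation premise $\evalv{\subst{P}{v}{\varone}}{E}$ handles the continuation. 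Value-substitutivity of $\Rvh$ (Lemma~\ref{lemma:substitutivev}) lets me pass from $P\,\Rvh\,P'$ and $v\,\Rvh\,v'$ to $\subst{P}{v}{\varone}\,\Rvh\,\subst{P'}{v'}{\varone}$; Lemma~\ref{lemma:redv} accounts for the beta/unfolding of a function value applied to an argument on the right; and Lemma~\ref{lemma:klPourRv} applied to $\termone_1'\,\termone_2'\,\Rv\,\termtwo$ yields the final domination by $\diS{\termtwo}$. The genuine difficulty is purely quantitative: $b_a$ spreads $\distrone$ over a doubly-indexed family of branches (one index per function value of $\termone_1$, one per argument value of $\termone_2$), whereas a single function or argument value on the right can dominate several branches on the left. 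I would therefore record the branch probabilities as the $p_i$ and the available right-hand masses as the $r_I$ of a probability assignment, verify the defining inequality $\sum_{i\in I}p_i\leq\sum_{J\cap I\neq\emptyset}r_J$ from the induction hypotheses, and invoke Lemma~\ref{lemma:disentangling} to obtain splitting coefficients $s_{k,I}$ that reassemble the right-hand mass into a single coherent domination. The $\fix{\termfive}$-subcase of a function value is treated exactly like the $\abstrac{\termfive}$-subcase, using the second clause of $g^V$ and the fact that unfolding a fixpoint is an $\Rv$-step (Lemma~\ref{lemma:redv}).

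The remaining inductive cases follow the same template but are lighter. For $b_{if}$, $b_{fst}$, $b_{snd}$, $b_{op}$ and $b_{case}$, I would invert the corresponding Howe rule, apply the induction hypothesis to the scrutinee/guard subderivation at its ground, product or list type (using the appropriate clause of the very statement being proved), and then close with the matching clause of Lemma~\ref{lemma:klPourRv} together with~(\ref{NonTrans}); the $b_{case}$ case additionally invokes value-substitutivity when the $h::t$ branch fires. For $b_s$ the distribution is the average of the two subterm distributions, so the bound is simply the convex combination of the two induction hypotheses, with the target sets merged via~(\ref{NonTrans}).
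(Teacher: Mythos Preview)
Your proposal is correct and follows essentially the same approach as the paper: reduce to approximants, induct on the big-step derivation, invert the Howe rule in each case, and in the application case combine the induction hypotheses via the Disentangling Lemma (Lemma~\ref{lemma:disentangling}) together with value-substitutivity (Lemma~\ref{lemma:substitutivev}), Lemma~\ref{lemma:redv}, and~(\ref{NonTrans}). The paper's written proof only spells out the $b_e$, $b_v$ and $b_a$ cases in detail and leaves the remaining rules implicit, whereas you also sketch $b_{if}$, $b_{op}$, $b_{fst}$, $b_{snd}$, $b_{case}$ and $b_s$; your sketches are consistent with how those cases would go.
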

\begin{proof}
We are going to show the following result: 
Let be $\termone \in \termty{}$, and $\distrone$ such that $\evalv \termone {\distrone}$. Then $\distrone$ verifies:
\begin{varitemize}
\item 
  If $\emptyset \proves  \termone \, \Rvh \, \termtwo: {\sigma \rightarrow \tau} $,
  then for every $X_1\subseteq\termtyg{\varone:\sigma}{\tau}$ and $X_2\subseteq\termtyg{\varone:\sigma \rightarrow \tau}{\sigma \rightarrow \tau}$, 
  it holds that $\distrone\left(\abstrac{X_1} \bigcup \fix{X_2}\right) \leq \diS \termtwo (\Rv\left(\abstrac{Y_1}\, \bigcup \, {\fix{Y_2}}\right))$, 
  where $Y_1=\{\termthree\in\termtyg{\varone:\sigma}{\tau}\mid \exists\termfour\in X_1.x:\sigma
  \proves \termfour \Rvh \termthree:\tau\}$ and
  $Y_2=\{\termthree\in\termtyg{\varone:\sigma \rightarrow \tau}{\sigma \rightarrow \tau}\mid \exists\termfour\in X_2.x:\sigma \rightarrow \tau
  \proves \termfour \Rvh \termthree:\sigma \rightarrow \tau\}$.
\item 
  If $\emptyset \proves \termone\, \Rvh \,\termtwo: {\sigma \times \tau}$  then  for every $ X \subseteq \val{\sigma \times \tau}$ it holds that 
  $\distrone (X) \leq \diS \termtwo (\Rv(Y))$,  when 
  $Y = \{\pair{\termthree}{\termfour}\mid\exists\pair{\termfive}{\termsix}\in X\wedge\emptyset\proves\termthree\Rvh\termfive:\sigma\wedge\emptyset
  \proves\termfour\Rvh\termsix:\tau\}$.   
\item 
  If $\emptyset \proves \termone\, \Rvh  \,\termtwo: {[\sigma]}$  then it holds that
  $\distrone (Nil) \leq \diS \termtwo (Nil)$  and that for every  
  $ X \subseteq \val{[\sigma]} \setminus \{Nil \}$, $\distrone (X) \leq \diS N (\Rv(E))$
  when $${E = \{K::L \text{ such that } \exists H , T \text{ with } H::T \in X \text{ and }\\ 
    \left(\emptyset \proves  H \,\Rvh \, K: \sigma\right) \text{ and }\left(\emptyset \proves  T\, \Rvh \,L: {[\sigma]}\right) \}}$$  
\item 
  $\emptyset \proves  \termone\, \Rvh\, \termtwo: {\ity}$  $\Rightarrow$ $\forall k \in \mathbb{N}, \distrone (\underline{k}) \leq \diS \termtwo (\underline{k})$.
\item 
  $\emptyset \proves M \, \Rvh \, N: \bty $  $\Rightarrow$ $\forall b \in \mathbb{B}, \distrone (\underline{b}) \leq \diS \termtwo (\underline{b})  $.
\end{varitemize}
We are going to show this thesis by induction on the structure of the derivation of $\evav M \distrone$.
\begin{varitemize}
\item 
  If the last rule of the derivation is: 
  \begin{prooftree}
    \AxiomC{}	
    \RightLabel{$bv$}
    \UnaryInfC{$ \evav \termone {\emptyset} $ }
  \end{prooftree}
  Then $\distrone$ = $\emptyset$, and for all X, $\distrone (X) = 0$, and it concludes the proof.
\item 
  If the last rule of the derivation is:  
  \begin{prooftree}
    \AxiomC{}	
    \RightLabel{$b_v$}
    \UnaryInfC{$ \evav \valone {\distrv \valone} $ }
  \end{prooftree}
  Then $\termone$ is a value. Some interesting cases:
  \begin{varitemize}
  \item 
    If $\emptyset \proves  \termone\, \Rvh\, \termtwo:\ity $, then $\termone$ is a value of type $\ity$, so it exists $k$ such that $\valone = \termone = \underline{k}$.
    The only possible way to show $\left( \emptyset \proves  \underline{k}\, \Rvh \, N: \ity \right)$ is :
    \begin{prooftree}
      \AxiomC{$\emptyset \proves \underline{k} \Rv \termtwo: \ity$}
      \UnaryInfC{$\emptyset \proves \underline{k} \Rvh \termtwo: \ity$}
    \end{prooftree}
    So $\emptyset \proves \underline{k} \Rv \termtwo: \ity$.
    By Lemma \ref{lemma:klPourRv}, it implies that $\diS \termtwo = \distrv {\underline{k}}$.  
  \item 
    If $\emptyset \proves  \termone\, \Rvh \,\termtwo: \bty $, then $\termone$ is a value of type $\bty$, so it exists 
    $b$ such that $\valone = \termone = \underline{b}$, and it's similar to the previous case.
  \item 
    If $\emptyset \proves \termone\, \Rvh \,\termtwo:{\sigma \rightarrow \tau}$.
    Let $X_1\subseteq\termtyg{\varone:\sigma}{\tau}$ and $X_2\subseteq\termtyg{\varone:\sigma \rightarrow \tau}{\sigma \rightarrow \tau}$. .
    We define $ Y = \left(\abstrac{Y_1}\, \bigcup \, {\fix{Y_2}}\right)$, 
    where  $Y_1=\{\termthree\in\termtyg{\varone:\sigma}{\tau}\mid \exists\termfour\in X_1.x:\sigma \proves \termfour \Rvh \termthree:\tau\}$ 
    and $Y_2=\{\termthree\in\termtyg{\varone:\sigma \rightarrow \tau}{\sigma \rightarrow \tau}\mid 
    \exists\termfour\in X_2.x:\sigma \rightarrow \tau \proves \termfour \Rvh \termthree:\sigma \rightarrow \tau\}$.
    There are two possible cases:
    \begin{varitemize}
    \item 
      Either $\termone = \abstrac{\termthree}$. The only possible way to show $\left(\emptyset \proves \abstrac \termthree \, \Rvh\,  
        \termtwo: {\sigma \rightarrow \tau}\right)$ is : 
      \begin{prooftree}
        \AxiomC{$x:\sigma \proves \termthree \Rvh \termfour: \tau$}
        \AxiomC{$\emptyset \proves \abstrac \termfour \Rv \termtwo: \sigma \rightarrow \tau$}
        \BinaryInfC{$\emptyset \proves \abstrac \termthree \Rvh \termtwo: \sigma \rightarrow \tau$}
      \end{prooftree}
      As $\left(\emptyset \proves \abstrac \termfour\, \Rv \, N: \sigma \rightarrow \tau \right)$, we can see by 
      Lemma \ref{lemma:klPourRv} : $1 =\diS N (\Rv\, \{\abstrac \termfour \})$. Besides,
      $$
      \distrone ( \abstrac {X_1} \bigcup \fix {X_2}) =
      \left\{
        \begin{array}{ll}
          0&\text{ if }  \termthree \not \in X_1 \\
          1 &\text{otherwise}
        \end{array} 
      \right.
      $$
      If $\termthree \not \in X_1$, then 
      $\distrone (\abstrac {X_1} \bigcup \fix {X_2} ) = 0 \leq \diS \termtwo (\Rv(Y))$, and the thesis holds.
      If $ \termthree \in X_1$, then:
      $\distrone (\abstrac {X_1} \bigcup \fix {X_2}) = 1 = 
      \diS \termtwo (\Rv(\{ \abstrac \termfour \})) $.
      To conclude, we need to have: $\left(\Rv (\{\abstrac \termfour\}) \subseteq (\Rv (Y))\right)$. In fact, it is enough to show 
      that $\abstrac \termfour \in Y$, and this is true since $\termfour \in Y_1$.
    \item 
      Or $\termone = \fix {\termthree}$: the proof is similar.
    \end{varitemize}
  \item  
    If $\emptyset \proves  \termone \, \Rvh \, \termtwo: {\sigma \times {\tau}}$, then $\termone = \pair{\termthree_1}{\termthree_2}$.
    We should have: 
    \begin{prooftree}
      \AxiomC{$\emptyset\proves \termthree_1 \Rvh \termfour_1: \sigma$}
      \AxiomC{$\emptyset\proves \termthree_2 \Rvh \termfour_2: \tau $}
      \AxiomC{$\emptyset \proves \pair {\termfour_1} {\termfour_2} \Rv \termtwo:\sigma  \times \tau$}
      \TrinaryInfC{$\emptyset \proves \pair {\termthree_1} {\termthree_2} \Rvh \termtwo:\sigma \times \tau$}
    \end{prooftree}
    And, since $\left(\emptyset \proves \pair {\termfour_1} {\termfour_2} \Rv \termtwo: \sigma \times \tau \right)$, we can see by 
    Lemma \ref{lemma:klPourRv} that:
    $$
    1 =\diS {\termtwo} \left(\{\pair {\termseven_1} {\termseven_2} \text{ s.t. } \emptyset \proves {\termfour_1}\, {\Rv}\, {\termseven_1}: {\sigma}  \text{ and } 
      \emptyset \proves {\termfour_2}\, {\Rv}\, {\termseven_2}:{\tau} \}\right).$$  
    Moreover, by \eqref{NonTrans}, 
    for every $\termsix$ such that $\termsix = \pair {\termfive_1} {\termfive_2} \in \left(\{\pair {\termseven_1} {\termseven_2} \text{ s.t. } \emptyset \proves {\termfour_1}\, {\Rv}\, {\termseven_1}: {\sigma}  \text{ and } \emptyset \proves {\termfour_2}\, {\Rv}\, {\termseven_2}:{\tau} \}\right)$, since $\left(\emptyset \proves \termthree_1 \Rvh \termfour_1:\sigma\right)$, and  $\left( \emptyset \proves {\termfour_1} {\Rv} {\termfive_1} : \sigma \right)$ , we have : $\left(\emptyset \proves {\termthree_1} \, \Rvh\, {\termfive_1}: {\sigma}\right)$.
    Similarly, $\left(\emptyset \proves  {\termthree_2}\, \Rvh \,{\termfive_2}: \tau\right)$.
    So, we have: 
    $$\termsix \in Z = \{\pair {\termseven_1} {\termseven_2} \text{ s.t. } \emptyset \proves  {\termthree_1}\,\Rvh \, {\termseven_1}: \sigma \text{ and } \emptyset \proves {\termthree_2}\, \Rvh \, {\termseven_2}: \tau \}.$$
    And so ${\{\pair {\termseven_1} {\termseven_2} \text{ s.t. } \emptyset \proves {\termfour_1}\, {\Rv}\, {\termseven_1}: {\sigma}  \text{ and } \emptyset \proves {\termfour_2}\, {\Rv}\, {\termseven_2}:{\tau} \}}\subseteq Z$, and consequently : $$
    1 \leq \diS {\termtwo} \left(\{\pair {\termseven_1} {\termseven_2} \text{ s.t. } \emptyset \proves {\termfour_1}\, {\Rv}\, {\termseven_1}: {\sigma}  \text{ and } \emptyset \proves {\termfour_2}\, {\Rv}\, {\termseven_2}:{\tau} \}\right) \leq \diS {\termtwo} (Z).$$ 
    Let be $
    X \subseteq \val{\sigma \times \sigma'}$. If $\pair {\termthree_1}{\termthree_2} \not \in X$,
    $\distrone (X) = 0$, and it concludes the proof.
    If $\pair{\termthree_1}{\termthree_2} \in X$ then:
    $$\distrone (X) =
    1 = \diS {\termtwo} (Z)  \leq \diS {\termtwo} \left(\{\pair {\termseven_1} {\termseven_2} | \exists \pair{\termfive_1}{\termfive_2}\in X \text{s.t.} \emptyset \proves  {\termfive_1} \Rh {\termseven_1}: {\sigma}  \text{ and }\emptyset \proves  {\termfive_2} \Rh {\termseven_2}: {\tau} \}\right)$$, 
    which is the thesis.
  \end{varitemize}
\item 
  If the derivation of $\evalv M D$ is of the following form: 
  \begin{prooftree}
    \AxiomC{$\evalv {M_1} {K}$}
    \AxiomC{$\evalv {M_2} {F}$}	
    \AxiomC{$\{ {\evalv {\subst P v x} {E}} _{P,v}\}_{\abstrac P \in S(\distr K), \, v \in S(\distr F)}$}
    \AxiomC{$\{{\evalv {\subst Q {\fix Q}{x}v } {G}} _{Q,v} \}_{\fix Q \in S(\distr{K}), \, v \in S(\distr F)}$}
    \RightLabel{$b_a$}
    \QuaternaryInfC{$ \evav {M_1M_2}  {\sum_{v \in S(\distr F)} \distr{F} (v) \left( \sum_{\abstrac P \in S(\distr{K})} \distr{K}(\abstrac P).\distr{E}_{P,v}  + \sum_{\fix Q \in S(\distr{K})} \distr{K}(\fix Q) .\distr{G}_{Q,v}\right)}$ }
  \end{prooftree}
  Then $M=M_1 M_2$. Let us suppose that: $\emptyset \proves {M} \, \Rvh \, {N}: B$.  
  The last rule used to prove this 
  should be: 
  \begin{prooftree}
    \AxiomC{$\emptyset\proves M_1 \Rvh M_1': A\rightarrow B$}
    \AxiomC{$\emptyset\proves M_2 \Rvh M_2': A $}
    \AxiomC{$\emptyset \proves M_1' M_2' \R N: B$}
    \TrinaryInfC{$\Gamma \proves M_1 M_2 \Rh N:B$}
  \end{prooftree}
  $S(\distr K)$ is a finite set.
  Let $P_1,..;P_n$ and $Q_1,...Q_m$ such that $S(\distr K) = \abstrac {P_1}...\abstrac {P_1}, \fix {Q_1},... \fix {Q_m}$. Let us consider 
  the sets $\left(K_i = \{\abstrac t\, |\, x:A \proves P_i \, \Rh\,  t:B \}\right)_{1\leq i \leq n}$, and 
  $\left(J_j = \{\fix t\, |\, x:A \rightarrow B \proves Q_i \, \Rh\,  t:A \rightarrow B \}\right)_{1\leq j \leq m}$.
  We have, by induction hypothesis: 
  $\forall I \subseteq \{1,..,n\}, \forall J \subseteq \{1,..,m\}$
  \begin{equation}
    \label{eqDisSet}
    \distr K\left(\bigcup_{i\in I}\{\abstrac P_i\} \cup \bigcup_{j\in J}\{\fix Q_i\}\right)  \leq \diS {M'_1}{\left(\bigcup_{i \in I}\left(\Rv K_i\right)\cup \bigcup_{j \in J}\left(\Rv J_j\right)\right)}
  \end{equation}
  \eqref{eqDisSet} 
  allows us to apply Lemma \ref{lemma:disentangling}: for every $U \in  \bigcup\limits_{1\leq i\leq n}(\Rv K_i) \cup \bigcup\limits_{1\leq j\leq m}(\Rv J_j)  $, 
  there exist $n$ real numbers $r_1^U,...,r_n^U$, and m real numbers $q_1^U,...q_m^U$, such that:
  \begin{align*} 
    \diS{M_1'}( U) &\geq \sum\limits_{1\leq i \leq n} r_i^U +  \sum\limits_{1\leq j \leq m} q_j^U && \forall U \in \bigcup\limits_{1\leq i \leq n} K_i \cup \bigcup\limits_{1\leq j \leq n} J_j  \\
    \distr {K}(\abstrac {P_i}) &\leq \sum\limits_{U \in K_i} r_i^U && \forall \,1 \leq i \leq n  \\
    \distr {K}(\fix {Q_j}) &\leq \sum\limits_{U \in J_j} q_j^U && \forall \,1 \leq j \leq n \\
  \end{align*}
  In the same way, we can apply the induction hypothesis to $M_2$: Let be $S(\distr F) = \{v_1,...,v_l\}$. Let be $X_i = \Rvh(v_i)$. We have by induction hypothesis: $\forall I \subseteq \{1,..,l\}$, $\distr F (\{v_k \,|\, k \in I\}) \leq \diS {M_2} \left( \bigcup_{k \in I} X_k \right)$. So for all $W \in \bigcup_{1\leq k \leq l}X_k$, there exist l real numbers $s_1^W,..,s_l^W$, such that: \begin{align*} 
    \diS{M_2'}( W) &\geq \sum\limits_{1\leq k \leq l} s_k^W
    && \forall W \in \bigcup\limits_{1\leq k \leq l} X_k \\
    \distr {F}(v_k) &\leq \sum\limits_{W \in X_k} s_k^W && \forall \,1 \leq k \leq l  
  \end{align*}
  So we have for every $b \in \val B$: 
  \begin{align*}
    \distrone (b) & = {\sum_{1\leq k \leq l} \distr F (v_k) \left( \sum_{1 \leq i \leq n} \distr{K}(\abstrac P_i).\distr{E}_{P_i,v_k}(b)  + \sum_{1\leq j \leq m} \distr{K}(\fix Q_j) .\distr{G}_{Q_j,v_j}(b)\right)}\\
    & = {\sum_{1\leq k \leq l} \left(\sum\limits_{W \in X_k} s_k^W \right) \left( \sum_{1 \leq i \leq n} \left(\sum\limits_{U \in K_i} r_i^U \right).\distr{E}_{P_i,v_k}(b)  + \sum_{1\leq j \leq m}\left( \sum\limits_{U \in J_j} q_j^U\right) \cdot\distr{G}_{Q_j,v_j}(b)\right)}\\
    & = {\sum_{1\leq k \leq l} \left(\sum\limits_{W \in X_k} s_k^W  \left( \sum_{1 \leq i \leq n} \left(\sum\limits_{U \in K_i} r_i^U \cdot \distr{E}_{P_i,v_k}(b)\right)  + \sum_{1\leq j \leq m}\left( \sum\limits_{U \in J_j} q_j^U \cdot\distr{G}_{Q_j,v_j}(b)\right)\right)\right)}\\
  \end{align*}
  Let be $U \in \bigcup\limits_{1\leq i \leq n} \Rv(K_i) \cup \bigcup\limits_{1\leq j \leq n} \Rv(J_j)$, and $W \in \bigcup\limits_{1\leq k \leq l} X_k$.
  Let be $t_{U,W} =  {\subst T W x}$ if $U = \abstrac T$, and $ {t_{U,W} = \left({\subst T {\fix T} x} W\right)}$ if $U = \fix T$. 
  We can suppose that $B = \tau \times \tau'$ (other cases are similar). 
  Let be $ X \subseteq \val B$. Let be  $E(X) =  \{\pair {z_1} {z_2}\, |\exists {x_1},\, {x_2}, \pair {x_1} {x_2} \in X \text{ and } \left(\emptyset \proves  {x_1}\, \Rvh \, {z_1}: \tau \right), \,\left(\emptyset \proves   {x_2} \,\Rvh \,  {z_2}: {\tau'}\right)\}$.
  We are going to show:
  \begin{varitemize}
  \item[(i)]If $U \in \Rv(K_i)$, and $W \in X_k$, $\distrtwo_{P_i,v_k}(X) \leq \diS {t_{U,W}}(E(X))$.
  \item[(ii)]If $U \in \Rv(J_j)$, and $W \in X_k$, $\distr G_{Q_j,v_k}(X) \leq \diS {t_{U,W}}(E(X))$
  \end{varitemize}
  \ \\
  \begin{varitemize}
  \item [(i)]Let be $U \in \Rv(K_i)$, and $W \in X_k$. Then there exists $S$ such that 
    \begin{equation}
      \label{eqkl1}
      \emptyset \proves \abstrac S \,\Rv\, U: A \rightarrow B
    \end{equation}
    \begin{equation}
      \label{eqkl2}
      x:A \proves P_i \,\Rvh\, S:B
    \end{equation}
    And besides, since $W \in X_k$:
    \begin{equation}
      \label{eqkl3}
      \emptyset \proves v_k \,\Rvh\, W: A
    \end{equation}
    By \eqref{eqkl1}, and Lemma \ref{lemma:redv}, we have: $\emptyset \proves {\subst S W x} \,\Rv \,t_{U,W}$. Moreover, by \eqref{eqkl2}, \eqref{eqkl3} and Lemma \ref{lemma:substitutivev}, we have $\emptyset \proves \subst {P_i} {v_k} x \Rvh  \subst S W x$. \\
    So by \eqref{NonTrans}, it follows:$\emptyset \proves \subst {P_i} {v_k} x \Rvh  t_{U,W}$.
    And, by induction hypothesis applied to $ \subst {P_i} {v_k} x$, it implies that: $\distrtwo_{P_i,v_k}(X) \leq \diS {t_{U,W}}(E(X))$.
    
  \item[(ii)] The proof is similar to (i).
  \end{varitemize}
  
  \begin{align*}
    \distrone (X) &\leq {\sum_{1\leq k \leq l} \left(\sum\limits_{W \in X_k} s_k^W  \left( \sum_{1 \leq i \leq n} \left(\sum\limits_{U \in K_i} r_i^U \cdot \distr{E}_{P_i,v_k}(X)\right)  + \sum_{1\leq j \leq m}\left( \sum\limits_{U \in J_j} q_j^U \cdot\distr{G}_{Q_j,v_j}(X)\right)\right)\right)}\\
    &\leq {\sum_{1\leq k \leq l} \left(\sum\limits_{W \in X_k} s_k^W  \left( \sum_{1 \leq i \leq n} \left(\sum\limits_{U \in K_i} r_i^U \cdot \diS {t_{U,W}}(E(X))\right)  + \sum_{1\leq j \leq m}\left( \sum\limits_{U \in J_j} q_j^U \cdot\diS {t_{U,W}}(E(X))\right)\right)\right)}\\
    &\leq \sum_{W \in \left(\bigcup\limits_{1 \leq k \leq l} X_k\right)} \sum_{U \in \left(\bigcup\limits_{1\leq i \leq n} K_i \cup \bigcup\limits_{1\leq j \leq n} J_j\right)} \left(\sum_{k \text{ s.t. } W \in X_k} s_k^W \right) \cdot \left(\sum_{i \text{ s.t. } U \in K_i} r_i^U + \sum_{j \text{ s.t. } U \in J_j} q_j^U \right) \diS {t_{U,W}}(E(X))\\
    &\leq \sum_{W \in \left(\bigcup\limits_{1 \leq k \leq l} X_k\right)} \sum_{U \in \left(\bigcup\limits_{1\leq i \leq n} K_i \cup \bigcup\limits_{1\leq j \leq n} J_j\right)} \left(\diS {M_2'}(W) \right) \cdot \left( \diS{M_1'}( U) \right) \diS {t_{U,W}}(E(X))\\
    &\leq \diS {M_1'M_2'} (E(X))
  \end{align*}
\end{varitemize}
\end{proof}

A consequence of the Key Lemma, then, is that $(\Rvh)^+$ is an applicative bisimulation, thus included in the largest one, namely $\Rv$. Since the
latter is itself included in $\Rvh$, we obtain that $\Rv=(\Rvh)^+$. But $(\Rvh)^+$ is a precongruence, and we get the main result of this section:
\begin{theorem}[Soundness]\label{theo:soundness}
The typed relation $\Rv$ is a precongruence relation included in $\leq$. Analogously, $\Ev$ is a congruence
relation included in $\equiv$.
\end{theorem}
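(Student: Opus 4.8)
The plan is to take as given, from the discussion immediately preceding the statement, that $\Rv=(\Rvh)^+$ and hence that $\Rv$ is both compatible and transitive, i.e.\ a precongruence; likewise $\Ev$ is transitive, reflexive and symmetric by the earlier lemma. What genuinely remains is the two inclusions together with the compatibility of $\Ev$. I would organise everything around the single inclusion $\Rv\subseteq{\leq}$, from which the rest follows cheaply.

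To prove $\Rv\subseteq{\leq}$, I would first record the standard fact that a compatible typed relation is closed under contexts: by induction on the derivation of $\emptyset\proves\ctxone(\Gamma;\tau):\sigma$, applying one compatibility clause of $\Rv$ at each step (the binding cases $\abstrac{\ctxone}$ and $\fix{\ctxone}$ being handled by the clauses that extend the typing context), one obtains that $\Gamma\proves\termone\Rv\termtwo:\tau$ implies $\emptyset\proves\ctxone[\termone]\Rv\ctxone[\termtwo]:\sigma$ for every closing context. Then I would convert this closed-term instance of $\Rv$ into a comparison of convergence probabilities. Since $(\ctxone[\termone],\sigma)\precsim(\ctxone[\termtwo],\sigma)$, I apply the simulation inequality with label $\evact$ to the set $X=\{(\hat{\valone},\sigma)\mid\valone\in\val{\sigma}\}$ of all value-states of type $\sigma$. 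The left-hand side equals $\mccbvmatrix((\ctxone[\termone],\sigma),\evact,X)=\sum\diS{\ctxone[\termone]}$. Because a value-state can only be $\precsim$-related to another value-state of the same type (each structural label witnesses this, exactly as in the lemma characterising $(\hat{\valone},\sigma)\precsim(\hat{\valtwo},\sigma)$), we have $\precsim(X)\subseteq X$, so by monotonicity of $\mccbvmatrix$ in its set argument the right-hand side is at most $\sum\diS{\ctxone[\termtwo]}$. This gives $\sum\diS{\ctxone[\termone]}\leq\sum\diS{\ctxone[\termtwo]}$ for every closing $\ctxone$, which is precisely $\Gamma\proves\termone\leq\termtwo:\tau$.

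For the claims about $\Ev$, I would invoke Proposition~\ref{prop:appsimvsappbisim}, which yields $\Ev={\Rv}\cap{\Rv}^{op}$. Compatibility of $\Ev$ then follows because each compatibility clause of $\Rv$ yields the corresponding clause for $\Rv^{op}$ by reversing the two sides, and the intersection of two compatible relations is compatible; together with the already-known symmetry and transitivity this makes $\Ev$ a congruence. Finally, $\Ev\subseteq{\equiv}$ is immediate from $\Rv\subseteq{\leq}$: if $\Gamma\proves\termone\Ev\termtwo:\sigma$ then $\Gamma\proves\termone\Rv\termtwo:\sigma$ and $\Gamma\proves\termtwo\Rv\termone:\sigma$, hence $\Gamma\proves\termone\leq\termtwo:\sigma$ and $\Gamma\proves\termtwo\leq\termone:\sigma$, i.e.\ $\Gamma\proves\termone\equiv\termtwo:\sigma$.

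I expect the main obstacle to be the second paragraph, and within it the passage from the abstract simulation condition to the probability of convergence: the whole point is to pick the test set $X$ so that the $\evact$-mass it collects is exactly the total convergence probability, and then to argue that passing to $\precsim(X)$ cannot reach outside $X$, so that no extra mass is counted on the right. The context-closure induction is conceptually the heart of ``compatibility implies soundness,'' but it is mechanical; by contrast, the choice of $X$ and the containment $\precsim(X)\subseteq X$ are where the specific structure of the Markov chain $\mccbv$ is really used.
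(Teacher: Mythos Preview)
Your proposal is correct and matches the paper's argument, which is entirely contained in the paragraph preceding the theorem: the Key Lemma yields $\Rv=(\Rvh)^+$, hence $\Rv$ is a precongruence, and the inclusion in $\leq$ is the standard consequence that you spell out (and that the paper leaves implicit). One simplification for the step you flag as the main obstacle: rather than taking $X$ to be the set of value-states of type $\sigma$ and arguing $\precsim(X)\subseteq X$, you may simply take $X=\mccbvstates$; since $\precsim$ is reflexive, $\precsim(\mccbvstates)=\mccbvstates$, and the $\evact$-mass from a term-state into $\mccbvstates$ is already the total convergence probability, so the simulation inequality yields $\sum\diS{\ctxone[\termone]}\leq\sum\diS{\ctxone[\termtwo]}$ with no further work.
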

\subsection{Back to Our Examples}
We now have all the necessary tools to prove that the example programs from Section~\ref{sect:motexa}
are indeed context equivalent. As an example, let us consider again the following terms:
\begin{align*}
  \expc_\mathit{FST}&=\abstr{\varone}{\abstr{\vartwo}{\enc\;\varone\;\gen}}:\arr{\bty}{\arr{\bty}{\bty}};\\
  \expc_\mathit{SND}&=\abstr{\varone}{\abstr{\vartwo}{\enc\;\vartwo\;\gen}}:\arr{\bty}{\arr{\bty}{\bty}}.
\end{align*}
One can define the relations $\relone_{\bty},\relone_{\arr{\bty}{\bty}},\relone_{\arr{\bty}{\arr{\bty}{\bty}}}$
by stipulating that $\relone_{\sigma}=X_\sigma\times X_\sigma\cup\mathit{ID}_\sigma$
where
{\footnotesize
\begin{align*}
  X_\bty&=\{(\enc\;\underline\ttrue\;\gen),(\enc\;\underline\ffalse\;\gen)\};\\
  X_{\arr{\bty}{\bty}}&=\{(\abstr{\vartwo}{\enc\;\vartwo\;\gen}),(\abstr{\vartwo}{\enc\;\underline\ttrue\;\gen}),(\abstr{\vartwo}{\enc\;\underline\ffalse\;\gen})\};\\
  X_{\arr{\bty}{\arr{\bty}{\bty}}}&=\{\expc_\mathit{FST},\expc_\mathit{SND}\};
\end{align*}}
and for every type $\sigma$, $\mathit{ID}_\sigma$ is the identity on $\termty{\sigma}$. When $\sigma$ is
not one of the types above, $\relone_\sigma$ can be set to be just $\mathit{ID}_{\sigma}$. This way, the
family $(\relone_\sigma)$ can be seen as a relation $\vrelone$ on the state space of $\mccbv$ (since any state in
the form $(\hat\valone,\sigma)$ can be treated as $(\valone,\sigma)$). But $\vrelone$ is easily seen
to be a bisimulation. Indeed:
\begin{varitemize}
\item
  All pairs of terms in $\relone_{\bty}$ have the same semantics, since 
  $\diS{\enc\;\underline\ttrue\;\gen}$ and $\diS{\enc\;\underline\ffalse\;\gen}$ are both
  the uniform distribution on the set of boolean values.
\item
  The elements of $X_{\arr{\bty}{\bty}}$ are values, and if we
  apply any two of them to a fixed boolean value, we end up with two terms
  $\relone_{\bty}$ puts in relation.
\item
  Similarly for $X_{\arr{\bty}{\arr{\bty}{\bty}}}$: applying any two elements
  of it to a boolean value yields two elements which are put in relations
  by $X_{\arr{\bty}{\bty}}$.
\end{varitemize}
Being an applicative bisimulation, $(\relone_\sigma)_\sigma$ is included in $\sim$.
And, by Theorem~\ref{theo:soundness}, we can conclude that $\expc_\mathit{FST}\equiv\expc_\mathit{SND}$.
Analogously, one can verify that $\expc\equiv\rand$.
\section{Full Abstraction}
Theorem~\ref{theo:soundness} tells us that applicative bisimilarity is a sound way to prove
that certain terms are context equivalent. Moreover, applicative bisimilarity is a congruence, and
can then be applied in any context yielding bisimilar terms. In this section, we ask ourselves \emph{how close}
bisimilarity and context equivalence really are. Is it that the two coincide?
\subsection{LMPs, Bisimulation, and Testing}
The concept of probabilistic bisimulation has been generalized to the continuous case by Edalat, Desharnais and Panangaden, more
than ten years ago~\cite{DEP02}. Similarity and bisimilarity as defined in the aforementioned paper were later shown to exactly
correspond to appropriate, and relatively simple, notions of \emph{testing}~\cite{vBMOW05}. We will
make essential use of this characterization when proving that context equivalence is included in bisimulation. And this section
is devoted to giving a brief but necessary introduction to the relevant theory. For more details, please refer to~\cite{vBMOW05} and
to~\cite{EV}.

In the rest of this section, $\actset$ is a fixed set of labels. The first step consists in giving a
generalization of LMCs in which the set of states is not restricted to be countable:
\begin{definition}
A \emph{labelled Markov process} (LMP in the following) is a triple $\mpone=(\statone,\fieldone,\mu)$, consisting of a set $\statone$ of states, a 
$\sigma$-field $\fieldone$ on $\statone$, and a transition probability function $\mu:\statone\times\actset\times\Sigma \rightarrow [0,1]$, such that: 
\begin{varitemize}
\item 
  for all $x \in \statone$, and $a \in Act$, the naturally defined function $\mu_{x,a}(\cdot):\Sigma \rightarrow [0,1]$ is a subprobability measure;
\item
  for all $a \in Act$, and $A \in \Sigma$, the naturally defined function $\mu_{(\cdot),a}(A): \statone \rightarrow [0,1]$ is measurable.
\end{varitemize}
\end{definition} 
The notion of (bi)simulation can be smoothly generalized to the continuous case:
\begin{definition}
\label{defBi}
Let $(\statone,\fieldone,\mu)$ be a LMP, and let $\vrelone$ be a reflexive relation on $\statone$. 
We say that $\vrelone$ is a \emph{simulation} if it satisfies condition \ref{point:bislmpone} below, and we say that $\vrelone$ is a \emph{ bisimulation } 
if it satisfies both conditions \ref{point:bislmpone} and \ref{point:bislmptwo}:
\begin{varenumerate}
\item\label{point:bislmpone}
  If $\rela{x}{R}{y}$, then for every $a\in\actset$ and for every $A\in\Sigma$ such that $A=R(A)$, it holds that $\mu_{x,a}(A)\leq \mu_{y,a}(A)$.
\item\label{point:bislmptwo}
  If $\rela{x}{R}{y}$, then for every $a\in\actset$ and for every $A\in\Sigma$, $\mu_{x,a}(\statone) = \mu_{y,a}(\statone)$.
\end{varenumerate}
We say that two states are \emph{bisimilar} if they are related by some bisimulation.
\end{definition}
\begin{lemma}
\label{ProcessLargestBisim}
Let $\langle \statone\,,\Sigma\,,\mu\rangle$ be a labelled Markov process.
\begin{varitemize}
\item 
  There is a largest bisimulation on $(\statone,\fieldone,\mu)$ which is an equivalence relation.
\item 
  For an equivalence relation $\vrelone$, the two criteria in Definition \ref{defBi} can be compressed into the following condition:
  $x\vrelone y \Rightarrow (\forall a \in\actset)(\forall A \in \Sigma)({A = R(A) \Rightarrow \mu_{x,a}(A) = \mu_{y,a}(A)})$.
\end{varitemize}
\end{lemma}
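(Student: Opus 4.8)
The plan is to handle the two items separately: I would realise the largest bisimulation as the union of all bisimulations and then check that it is an equivalence relation, and I would establish the compression of the two criteria by a short argument exploiting symmetry.

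For the first item I would set $\backsim = \bigcup\{\vrelone \mid \vrelone \text{ is a bisimulation}\}$, where $\vrelone(A) = \{y \mid \exists x \in A,\ x\,\vrelone\,y\}$, and prove that $\backsim$ is itself a bisimulation; being a union of bisimulations it contains every one of them, so once it is a bisimulation it is automatically the largest. The relation $\backsim$ is reflexive because the identity relation is trivially a bisimulation. The engine of the argument is that closedness transfers downward along inclusion: if $\vrelone \subseteq \backsim$ and $A \in \fieldone$ satisfies $A = \backsim(A)$, then reflexivity of $\vrelone$ gives $A \subseteq \vrelone(A) \subseteq \backsim(A) = A$, so $A = \vrelone(A)$. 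Consequently, given $x \backsim y$ witnessed by a bisimulation $\vrelone$, condition~(1) of Definition~\ref{defBi} for $\vrelone$ applies to every $\backsim$-closed measurable $A$ and yields $\mu_{x,a}(A) \leq \mu_{y,a}(A)$, while condition~(2) transfers verbatim since it only concerns the total masses $\mu_{x,a}(\statone)$. Hence $\backsim$ is a bisimulation.

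To see that $\backsim$ is an equivalence relation I would establish three closure properties of the class of bisimulations. Reflexivity has already been noted. For transitivity I would check that the composition $\vreltwo \circ \vrelone$ of two bisimulations is a bisimulation: reflexivity and condition~(2) chain immediately, and for condition~(1) one verifies, again using reflexivity, that a $(\vreltwo\circ\vrelone)$-closed set is simultaneously $\vrelone$-closed and $\vreltwo$-closed, so the two simulation inequalities compose; thus $\vreltwo \circ \vrelone \subseteq \backsim$ and $\backsim$ is transitive. For symmetry I would show that the inverse $\vrelone^{-1}$ of a bisimulation is a bisimulation, and this is the delicate step, being the only place where condition~(2) is genuinely used. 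The key is that $A$ is $\vrelone^{-1}$-closed iff its complement $A^c$ is $\vrelone$-closed; for $x\,\vrelone^{-1}\,y$, i.e.\ $y\,\vrelone\,x$, condition~(1) of $\vrelone$ on $A^c$ gives $\mu_{y,a}(A^c) \leq \mu_{x,a}(A^c)$, and subtracting these from the equal total masses supplied by condition~(2) flips the inequality into $\mu_{x,a}(A) \leq \mu_{y,a}(A)$, as required; measurability is never a problem since $A^c \in \fieldone$. Therefore $\vrelone^{-1} \subseteq \backsim$ and $\backsim$ is symmetric, so it is an equivalence relation.

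For the second item, let $\vrelone$ be an equivalence relation and prove the two implications. If $\vrelone$ satisfies the single equality condition, then condition~(1) is immediate and condition~(2) follows by instantiating the equality at $A = \statone$, which is $\vrelone$-closed. Conversely, suppose $\vrelone$ is a bisimulation and let $A = \vrelone(A)$ with $A \in \fieldone$. Condition~(1) applied to $x\,\vrelone\,y$ gives $\mu_{x,a}(A) \leq \mu_{y,a}(A)$; since $\vrelone$ is symmetric we also have $y\,\vrelone\,x$, and as $A$ is still $\vrelone$-closed, condition~(1) applied to this pair gives the reverse inequality, whence equality. I expect the symmetry step of the first item to be the main obstacle, since it is the only point at which the complement manipulation together with the total-mass condition~(2) are genuinely needed; the rest reduces to the downward transfer of closedness and routine chaining of inequalities.
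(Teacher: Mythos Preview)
Your argument is correct. The paper does not actually prove this lemma: it is stated without proof, as a known fact from the labelled Markov process literature (the surrounding section cites \cite{DEP02,vBMOW05}), so there is no ``paper's own proof'' to compare against. Your proof stands on its own and handles both items cleanly; in particular, the downward transfer of closedness along inclusions of reflexive relations, the complement trick for symmetry combined with condition~(2), and the use of symmetry to upgrade the inequality to an equality in the second item are all sound and are exactly the ingredients one finds in the standard LMP references.
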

We will soon see that there is a natural way to turn any LMC into a LMP, in such a way that (bi)similarity stays the same. Before doing
so, however, let us introduce the notion of a \emph{test}:
\begin{definition}\label{def:testlan}
The \emph{test language} $\testlan$ is given by the grammar
$\testone::=\omega\midd\actone\cdot\testone\midd\seq{\testone,\testone}$,
where $\actone\in\actset$.
\end{definition}
Please observe that tests are \emph{finite} objects, and that there isn't any disjunctive nor any negative test in $\testlan$.
Intuitively, $\omega$ is the test which always succeeds, while $\seq{\testone,\testtwo}$ corresponds to making 
two copies of the underlying state, testing them independently according to $\testone$ and $\testtwo$ and succeeding iff \emph{both}
tests succeed. The test $\actone\cdot\testone$ consists in performing the action $\actone$, and in case of success perform the test $\testone$.
This can be formalized as follows: 
\begin{definition}
\label{test}
Given a labelled Markov Process $\mpone=(\statone,\fieldone,\mu)$, we define an indexed family 
$\{\prsucc{\mpone}{\cdot}{\testone}\}_{\testone\in\testlan}$ (such that
$\prsucc{\mpone}{\cdot}{\testone}:\statone\rightarrow\RR$) by induction on the
structure of $\testone$:
{\footnotesize
$$
  \prsucc{\mpone}{x}{\omega}=1;\qquad
  \prsucc{\mpone}{x}{\actone\cdot\testone}=\int\prsucc{\mpone}{\cdot}{\testone}d\mu_{x,\actone};\qquad
  \prsucc{\mpone}{x}{\seq{\testone,\testtwo}}=\prsucc{\mpone}{x}{\testone}\cdot\prsucc{\mpone}{x}{\testtwo}. 
$$}
\end{definition}
From our point of view, the key result is the following one:
\begin{theorem}[\cite{vBMOW05}]\label{theo:bisimtestproc}
Let $\mpone=(\statone,\fieldone,\mu)$ be a LMP. Then $x,y\in\statone$ are bisimilar iff
$\prsucc{\mpone}{x}{\testone}=\prsucc{\mpone}{y}{\testone}$ for every test $\testone\in\testlan$. 
\end{theorem}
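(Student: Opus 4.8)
The statement is an equivalence, so I would prove the two implications separately, writing $x \approx y$ for the relation ``$\prsucc{\mpone}{x}{\testone}=\prsucc{\mpone}{y}{\testone}$ for every $\testone\in\testlan$'', which is plainly an equivalence relation. For the direction from bisimilarity to tests, I would fix the largest bisimulation (an equivalence by Lemma \ref{ProcessLargestBisim}) and show by induction on the structure of $\testone$ that $\prsucc{\mpone}{\cdot}{\testone}$ is constant on each bisimulation class, whence $x$ bisimilar to $y$ forces $\prsucc{\mpone}{x}{\testone}=\prsucc{\mpone}{y}{\testone}$. The base case $\testone=\omega$ is the constant $1$, and for $\testone=\seq{\testtwo,\testtwo'}$ the function is a pointwise product of two class-invariant functions, so both are immediate. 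The only real case is $\testone=\actone\cdot\testtwo$: here $\prsucc{\mpone}{x}{\actone\cdot\testtwo}=\int \prsucc{\mpone}{\cdot}{\testtwo}\,d\mu_{x,\actone}$, and by the induction hypothesis the integrand is invariant under bisimilarity, so all its super-level sets $\{z\mid \prsucc{\mpone}{z}{\testtwo}>c\}$ are bisimulation-closed. For such closed measurable sets the compressed criterion of Lemma \ref{ProcessLargestBisim} gives $\mu_{x,\actone}(A)=\mu_{y,\actone}(A)$, and integrating through the layer-cake decomposition of the $[0,1]$-valued integrand yields equality of the two integrals.

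For the converse I would show that $\approx$ is itself a bisimulation. Since $\approx$ is an equivalence relation, by Lemma \ref{ProcessLargestBisim} it suffices to verify the single condition $\mu_{x,\actone}(A)=\mu_{y,\actone}(A)$ for every action $\actone$ and every measurable, $\approx$-closed set $A$, whenever $x \approx y$. The engine is a function-algebra argument. Write $f_{\testone}=\prsucc{\mpone}{\cdot}{\testone}$. The grammar of $\testlan$ makes the family $\{f_{\testone}\mid\testone\in\testlan\}$ contain the constant $1$ (from $\omega$) and be closed under pointwise products (from $\seq{\cdot,\cdot}$), so its linear span $L$ is an algebra of bounded measurable functions containing the constants. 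From $x \approx y$ and the identity $\prsucc{\mpone}{x}{\actone\cdot\testone}=\int f_{\testone}\,d\mu_{x,\actone}$ we get $\int f_{\testone}\,d\mu_{x,\actone}=\prsucc{\mpone}{x}{\actone\cdot\testone}=\prsucc{\mpone}{y}{\actone\cdot\testone}=\int f_{\testone}\,d\mu_{y,\actone}$ for every $\testone$, and hence $\int g\,d\mu_{x,\actone}=\int g\,d\mu_{y,\actone}$ for every $g\in L$; taking $g=1$ shows the two subprobability measures have equal total mass.

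Two finite measures of equal mass that agree on a multiplicative generating algebra agree on the generated $\sigma$-field: the bounded functions $h$ with $\int h\,d\mu_{x,\actone}=\int h\,d\mu_{y,\actone}$ form a monotone vector space containing $L$, so by the functional monotone-class theorem they include every bounded $\sigma(L)$-measurable function, in particular every indicator of a set in $\sigma(L)$. Thus $\mu_{x,\actone}$ and $\mu_{y,\actone}$ agree on $\sigma(L)$. The last, and genuinely hard, step is to see this is enough, i.e. that every measurable $\approx$-closed set lies in $\sigma(L)$: each $f_{\testone}$ is $\approx$-invariant, so $\sigma(L)$ consists of $\approx$-closed sets, but the reverse inclusion — that the separating family $\{f_{\testone}\}$ saturates the whole Borel structure of $\approx$-classes — is where the structural hypotheses on the LMP (an analytic state space and a countable set of labels, as in \cite{vBMOW05}) become indispensable. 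I expect this measure-theoretic identification of saturated sets with $\sigma(L)$ to be the main obstacle and the technical core of the cited result; granting it, $\approx$ satisfies the compressed criterion, is a bisimulation, and therefore $x \approx y$ implies $x$ and $y$ are bisimilar, completing the equivalence.
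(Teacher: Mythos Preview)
The paper does not prove this theorem at all: it is stated with the citation \cite{vBMOW05} and immediately followed by the next subsection, so there is no ``paper's own proof'' to compare against. The result is imported as a black box from the literature.

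Your sketch is, however, a fair outline of how the cited result is actually obtained. The forward direction is exactly right: induction on tests, with the $\actone\cdot\testtwo$ case handled by observing that the integrand is bisimulation-invariant and then pushing the equality $\mu_{x,\actone}(A)=\mu_{y,\actone}(A)$ for closed $A$ through the integral (layer-cake, or equivalently approximation by simple functions over closed sets). For the converse, your plan --- take the family $\{f_\testone\}$, note it is a multiplicative system containing the constants, use the functional monotone-class theorem to extend agreement of $\mu_{x,\actone}$ and $\mu_{y,\actone}$ to $\sigma(L)$, and then identify $\sigma(L)$ with the $\sigma$-field of $\approx$-closed measurable sets --- is the standard route. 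You are also right that the last identification is the crux and is \emph{not} free in the generality stated here: one needs the state space to be analytic (and typically the action set countable) so that a countable separating family of Borel functions generates, up to saturation, the full Borel $\sigma$-field restricted to $\approx$-closed sets. In the paper's own setting this is harmless, since $\mccbv$ has a countable state space and $\powset{\statone}$ as $\sigma$-field, so the measure-theoretic subtleties evaporate; but as a proof of the theorem as literally stated (arbitrary LMP), that step is genuinely nontrivial and is precisely what \cite{vBMOW05} supplies.
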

\subsection{From LMPs to LMCs}\label{sect:fromlmptolmc}
We are now going to adapt Theorem~\ref{theo:bisimtestproc} to LMCs, thus getting an analogous
characterization of probabilistic bisimilarity for them.

Let $\mcone=(\statone,\actset,\matrixone)$ be a LMC. The function $\mu_\mcone:\statone\times\actset\times\powset{\statone}\rightarrow [0,1]$
is defined by $\mu_\mcone(\stone,\actone,\setone)=\sum_{x\in\setone}\matrixone(s,\actone,x)$. This construction allows us to
see any LMC as a LMP:
\begin{lemma}
  Let $\mcone=(\statone,\actset,\matrixone)$ be a LMC. Then 
  $(\statone,\powset{\statone},\mu_\mcone)$ is a LMP, that we denote as $\lmp{\mcone}$.
\end{lemma}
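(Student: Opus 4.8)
The plan is to verify directly the three ingredients in the definition of an LMP for the candidate triple $(\statone,\powset{\statone},\mu_\mcone)$. First I would observe that $\powset{\statone}$ is trivially a $\sigma$-field on $\statone$, being the largest $\sigma$-algebra available. It then remains to check the two defining conditions on $\mu_\mcone$, after confirming that $\mu_\mcone$ indeed takes values in $[0,1]$: non-negativity holds because every entry $\matrixone(\stone,\actone,y)$ of a transition probability matrix is non-negative, so each sum $\mu_\mcone(\stone,\actone,\setone)=\sum_{y\in\setone}\matrixone(\stone,\actone,y)$ is non-negative, while the upper bound follows from $\mu_\mcone(\stone,\actone,\setone)\leq\mu_\mcone(\stone,\actone,\statone)=\sum_{y\in\statone}\matrixone(\stone,\actone,y)\leq 1$, the last inequality being exactly the defining condition of an LMC.

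The substantive step is to show that for each fixed state $\stone$ and label $\actone$, the set function $\mu_{\stone,\actone}(\cdot):=\mu_\mcone(\stone,\actone,\cdot)$ is a subprobability measure on $\powset{\statone}$. That $\mu_{\stone,\actone}(\emptyset)=0$ is immediate, the empty sum vanishing, and $\mu_{\stone,\actone}(\statone)\leq 1$ is the bound recalled above. Countable additivity is the only point requiring care: given a countable family $(\setone_n)_n$ of pairwise disjoint subsets of $\statone$ with union $\setone=\biguplus_n\setone_n$, I would expand $\mu_{\stone,\actone}(\setone)=\sum_{y\in\setone}\matrixone(\stone,\actone,y)$ and regroup this sum according to the partition. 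Because $\statone$ is countable and all summands are non-negative, the series converges unconditionally, so Tonelli's theorem (equivalently, the rearrangement theorem for non-negative series) licenses the regrouping, yielding $\sum_n\sum_{y\in\setone_n}\matrixone(\stone,\actone,y)=\sum_n\mu_{\stone,\actone}(\setone_n)$, which is precisely countable additivity.

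The second LMP condition, namely that for each fixed $\actone$ and $\setone\in\powset{\statone}$ the map $\stone\mapsto\mu_\mcone(\stone,\actone,\setone)$ is measurable, is automatic: since the $\sigma$-field on the domain is the full power set $\powset{\statone}$, every function out of $\statone$ is measurable, whatever $\sigma$-field is chosen on $[0,1]$. The proof therefore reduces entirely to the bookkeeping of the previous paragraph. I expect no genuine obstacle here; the only place where something must actually be verified rather than read off is the countable additivity, and there the essential ingredient, unconditional convergence of a non-negative series over a countable index set, is exactly what makes the rearrangement valid. Everything else is a direct translation of the LMC axioms into the LMP axioms.
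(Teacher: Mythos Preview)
Your proposal is correct and follows essentially the same approach as the paper's own proof: a direct verification that $\powset{\statone}$ is a $\sigma$-field, that each $\mu_{\stone,\actone}$ is a subprobability measure (checking $\mu_{\stone,\actone}(\emptyset)=0$, $\mu_{\stone,\actone}(\statone)\leq 1$, and countable additivity via regrouping of non-negative series), and that measurability is automatic because the $\sigma$-field is the full power set. Your write-up is in fact slightly more careful than the paper's, explicitly noting why the rearrangement in the countable-additivity step is licensed.
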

\begin{proof}
\begin{itemize}
\item $\powset{\statone}$ is a $\sigma$-field (non-empty, closed under complementation and countable unions).
\item $\mu$ verifies :
\begin{itemize}
\item for every $\stone \in \statone$, and $\actone \in \actset$, $\mu_{\stone,\actone}$ is a sub-probability measure, since : 
\begin{itemize}
\item $\mu_{\stone,\actone}(\emptyset) = \sum_{x\in\emptyset}\matrixone(s,\actone,x) = 0$
\item $\mu_{\stone,\actone}({\statone}) = \sum_{x\in\statone}\matrixone(s,\actone,x) \leq 1$ since $\matrixone$ is a probability matrix.
\item For all countable collection of pairwise disjoints $A_n \in \powset{\statone}$,
\begin{align*} 
\mu_{\stone,\actone}\left(\bigcup_{n}A_n\right) &= \sum_{x\in(\bigcup_{n}A_n)}\matrixone(s,\actone,x)\\
& = \sum_{n}\left(\sum_{x \in A_n}\matrixone(s,\actone,x)\right) \\
& = \sum_n \mu_{\stone,\actone}(A_n)
\end{align*}
\end{itemize}
\item for every $\actone \in \actset$, and $A \in \powset{\statone}$, $\mu_{-,\actone}(A): \statone \rightarrow [0,1]$ is measurable since : $\forall I \subseteq [0,1]$, ${\mu_{-,\actone}(A)}^{-1} \in \powset{\statone}$. 
\end{itemize}
\end{itemize}
\end{proof}
But how about bisimulation? Do we get the same notion of equivalence this way? The answer
is positive:
\begin{lemma}
\label{BisimChainProcess}
  Let $\mcone=(\statone,\actset,\matrixone)$ be a LMC,
  and let $R$ be an equivalence relation over $\statone$. Then $R$ is 
  a bisimulation with respect to $\mcone$ if and only if $R$ is a bisimulation 
  with respect to $\lmp{\mcone}$. Moreover, two states are bisimilar with respect 
  to $\mcone$ iff they are bisimilar with respect to $\lmp{\mcone}$.
\end{lemma}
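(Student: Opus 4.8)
The plan is to collapse the biconditional to a single comparison of the defining conditions, exploiting that $R$ is an equivalence relation. First I would invoke Lemma~\ref{ProcessLargestBisim}: since $R$ is an equivalence relation, $R$ being a bisimulation on the LMP $\lmp{\mcone}=(\statone,\powset{\statone},\mu_\mcone)$ is equivalent to the single compressed condition
\[
  x\mathrel{R}y \;\Longrightarrow\; (\forall a\in\actset)(\forall A\in\powset{\statone})\bigl(A=R(A)\Rightarrow \mu_\mcone(x,a,A)=\mu_\mcone(y,a,A)\bigr).
\]
On the other side, by Definition~\ref{def:probasim}, $R$ being a bisimulation on $\mcone$ says that $x\mathrel{R}y$ implies $\matrixone(x,a,E)=\matrixone(y,a,E)$ for every equivalence class $E$ of $R$ and every $a\in\actset$, where $\matrixone(x,a,E)=\sum_{z\in E}\matrixone(x,a,z)=\mu_\mcone(x,a,E)$. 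Since the $\sigma$-field of $\lmp{\mcone}$ is the full powerset, there is no measurability obstruction, and the whole task reduces to matching ``$R$-closed sets'' against ``equivalence classes of $R$''.

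For the direction $\lmp{\mcone}\Rightarrow\mcone$ I would simply specialize: fixing $x\mathrel{R}y$, a label $a$, and an equivalence class $E$, the set $E$ is $R$-closed (that is, $E=R(E)$) because $R$ is an equivalence relation, so the compressed LMP condition applied to $A=E$ yields $\mu_\mcone(x,a,E)=\mu_\mcone(y,a,E)$, which is exactly the LMC requirement. For the converse, I would start from an $R$-closed set $A$ and use that $\statone$ is countable: then $A$ is a countable disjoint union $A=\bigcup_i E_i$ of equivalence classes of $R$. Because $\lmp{\mcone}$ is a labelled Markov process, $\mu_\mcone(x,a,\cdot)$ is a subprobability measure on $\powset{\statone}$, so countable additivity gives $\mu_\mcone(x,a,A)=\sum_i \mu_\mcone(x,a,E_i)$, and likewise for $y$; applying the LMC condition to each class $E_i$ and summing yields $\mu_\mcone(x,a,A)=\mu_\mcone(y,a,A)$. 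This settles the first part of the lemma.

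For the \emph{moreover}, the delicate point is that LMP-bisimulations are only required to be reflexive (Definition~\ref{defBi}), whereas LMC-bisimulations are required to be equivalence relations (Definition~\ref{def:probasim}), so one cannot directly compare arbitrary witnessing relations. My plan is to route through the two largest bisimulations, both of which are equivalence relations. Let $\backsim$ be probabilistic bisimilarity on $\mcone$, which is the largest LMC-bisimulation and is an equivalence relation, and let $\approx$ be the largest LMP-bisimulation on $\lmp{\mcone}$, which is an equivalence relation by Lemma~\ref{ProcessLargestBisim}. Applying the first part of the lemma to the equivalence relation $\backsim$ shows it is also an LMP-bisimulation, so $\backsim\,\subseteq\,\approx$ by maximality of $\approx$; applying the first part to $\approx$ shows it is an LMC-bisimulation, so $\approx\,\subseteq\,\backsim$ by maximality of $\backsim$. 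Hence $\backsim\,=\,\approx$, and two states are bisimilar with respect to $\mcone$ iff they are bisimilar with respect to $\lmp{\mcone}$.

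I expect essentially all of this to be bookkeeping against the definitions; the one genuine step is the countable additivity argument in the converse direction, which is where discreteness of $\statone$ is essential, together with the observation that for an equivalence relation the $R$-closed subsets are precisely the unions of equivalence classes. The only other care needed is the asymmetry between the two notions of bisimulation in the \emph{moreover}, handled by passing to the two maximal equivalences rather than to individual witnesses.
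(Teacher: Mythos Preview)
Your proposal is correct and follows essentially the same route as the paper's own proof: invoke the compressed criterion from Lemma~\ref{ProcessLargestBisim}, match $R$-closed sets against equivalence classes, and for the \emph{moreover} pass through the largest bisimulations on each side. If anything your argument is more careful in the $\mcone\Rightarrow\lmp{\mcone}$ direction, where the paper asserts that an $R$-closed set \emph{is} an equivalence class (rather than a countable union of them) and so skips the countable-additivity step you spell out.
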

\begin{proof}
Let $\relone$ be an equivalence relation over $\statone$. 
\begin{itemize}
\item[$\Rightarrow$] We suppose that $\relone$ is a bisimulation with respect to the Markov chain $(\statone,\actset,\matrixone)$. 
By Lemma \ref{ProcessLargestBisim}, it is enough to show that : $$x\vrelone y \Rightarrow (\forall a \in\actset)(\forall A \in \powset{\statone})({A = R(A) \Rightarrow \mu_{x,a}(A) = \mu_{y,a}(A)}).$$
Let $x,y$ be such that $x \relone y$.
 For every $\actone \in \actset$ and $A \in \powset{\statone}$ such that $A = \relone(A)$, we have : A is a $\relone$-equivalence class. So (since $\relone$  is a bisimulation with respect to the Markov chain), $\mu_{x,\actone}(A) = \sum_{\stone \in A}\matrixone(x,\actone,\stone) = \sum_{\stone \in A}\matrixone(y,\actone,\stone)= \mu_{y,\actone}(A)$.

\item[$\Leftarrow$] We suppose that $\relone$ is a bisimulation with respect to the Markov process $(\statone,\powset{\statone},\mu)$. Then $\relone$ is a bisimulation with respect to the Markov chain $(\statone,\actset,\matrixone)$, since
\begin{itemize}
\item $\relone$ is an equivalence relation.
\item Let $x,y$ be such that $x \relone y$. For every $\actone \in \actset$, and E a $\relone$-equivalence class, 
\begin{align*}
\sum_{\stone \in E}\matrixone(x,\actone,\stone) &= \mu_{x,\actone}(E) \\
&= \mu_{y,\actone}(E) &\text{by Lemma \ref{ProcessLargestBisim}}\\
&= \sum_{\stone \in E}\matrixone(y,\actone,\stone)
\end{align*}
\end{itemize}
\end{itemize}
About the second statement:
\begin{itemize}
\item[$\Rightarrow$]
Let $x,y$ be two states which are bisimilar with respect to the Markov chain $(\statone,\actset,\matrixone)$. Then there is a $\relone$ a bisimulation with respect to the Markov chain $(\statone,\actset,\matrixone)$ such that $x \relone y$. It follows from Lemma \ref{BisimChainProcess} that $\relone$ is a bisimulation with respect to the Markov Process $(\statone,\powset{\statone},\mu)$. So, $x$ and $y$ are bisimilar with respect to the Markov process $(\statone,\powset{\statone},\mu)$.    
  \item[$\Leftarrow$] Let $x,y$ be two states which are bisimilar with respect to the Markov 
  Process $(\statone,\powset{\statone},\mu)$. Then (by Lemma \ref{ProcessLargestBisim}) we can consider $\relone$ the largest bisimulation (with respect to the Markov Process), and we know that $\relone$ is an equivalence relation. We have : $x \relone y$. It follows from Lemma \ref{BisimChainProcess} that $\relone$ is a bisimulation with respect to the Markov chain, and so $x$ and $y$ are bisimilar with respect to the Markov Chain.
  \end{itemize}
\end{proof}
Let $\mcone=(\statone,\actset,\matrixone)$ be a LMC.
We define an indexed family $\{\prsucc{\mcone}{\cdot}{\testone}\}_{\testone\in\testlan}$ by 
$\prsucc{\mcone}{x}{\testone}=\prsucc{\lmp{\mcone}}{x}{\testone}$, the latter
being the function from Definition~\ref{test} applied to the Markov process $\lmp{\mcone}$.
As a consequence of the previous results in this section, we get that:
\begin{theorem}\label{theo:testbisim}
  Let $\mcone=(\statone,\actset,\matrixone)$ be a LMC.
  Then two states $x,y\in\statone$ are bisimilar if and only if 
  for all tests $\testone\in\testlan$, 
  $\prsucc{\mcone}{x}{\testone}=\prsucc{\mcone}{y}{\testone}$. 
\end{theorem}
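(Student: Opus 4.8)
The plan is to derive this statement as a direct composition of the two results established immediately before it, namely Lemma~\ref{BisimChainProcess} and Theorem~\ref{theo:bisimtestproc}, together with the defining equation $\prsucc{\mcone}{x}{\testone}=\prsucc{\lmp{\mcone}}{x}{\testone}$. All the conceptual and measure-theoretic work has already been carried out in transporting bisimilarity between a LMC and its associated LMP, so what remains is purely a chaining of biconditionals.

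First I would fix a LMC $\mcone=(\statone,\actset,\matrixone)$ and two states $x,y\in\statone$, and recall that by the previous lemma $\lmp{\mcone}=(\statone,\powset{\statone},\mu_\mcone)$ is a genuine LMP, so that Theorem~\ref{theo:bisimtestproc} is applicable to it. The key transport step is the second statement of Lemma~\ref{BisimChainProcess}: the states $x$ and $y$ are bisimilar with respect to $\mcone$ if and only if they are bisimilar with respect to $\lmp{\mcone}$. This converts the question about the Markov chain into the corresponding question about the Markov process.

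Next I would invoke Theorem~\ref{theo:bisimtestproc} on $\lmp{\mcone}$: the states $x$ and $y$ are bisimilar with respect to $\lmp{\mcone}$ if and only if $\prsucc{\lmp{\mcone}}{x}{\testone}=\prsucc{\lmp{\mcone}}{y}{\testone}$ holds for every test $\testone\in\testlan$. Finally, unfolding the definition $\prsucc{\mcone}{x}{\testone}=\prsucc{\lmp{\mcone}}{x}{\testone}$, and likewise for $y$, this last family of equalities is literally $\prsucc{\mcone}{x}{\testone}=\prsucc{\mcone}{y}{\testone}$ for all $\testone\in\testlan$. Concatenating the three equivalences yields the thesis.

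I do not expect any genuine obstacle here, since the subtle content lives entirely in the cited Theorem~\ref{theo:bisimtestproc} and in Lemma~\ref{BisimChainProcess}. The only point deserving a moment of care is that Lemma~\ref{BisimChainProcess} phrases the correspondence of \emph{relations} under the restriction that they be equivalence relations; however, its \emph{second} statement already asserts the correspondence of \emph{bisimilarity of states} without that restriction, so the chaining goes through directly and no auxiliary argument about the largest bisimulation being an equivalence relation needs to be reproduced.
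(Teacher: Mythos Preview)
Your proposal is correct and matches the paper's own proof essentially line for line: the paper also chains the equivalence from Lemma~\ref{BisimChainProcess} (bisimilarity in $\mcone$ iff bisimilarity in $\lmp{\mcone}$), then Theorem~\ref{theo:bisimtestproc}, and finally the defining equality $\prsucc{\mcone}{x}{\testone}=\prsucc{\lmp{\mcone}}{x}{\testone}$.
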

\begin{proof}
\begin{align*}
x \text{ and } y \text{ are bisimilar } &\Leftrightarrow x \text{ and } y \text{ are bisimilar with respect to the Markov Process }\lmp{\mcone} = (\statone,\powset{\statone},\mu) \\ 
&\Leftrightarrow \forall \testone \in \testlan, \,
\prsucc{\lmp{\mcone}}{x}{\testone} = \prsucc{\lmp{\mcone}}{y}{\testone} \, \, \text{ by Theorem \ref{theo:bisimtestproc}} \\
&\Leftrightarrow \forall \testone \in \testlan,\, \prsucc{\mcone}{x}{\testone}=\prsucc{\mcone}{y}{\testone}
\end{align*} 
\end{proof}
The last result derives appropriate expressions for the $\prsucc{\mcone}{\cdot}{\cdot}$, which will be
extremely useful in the next section: 
\begin{proposition}
Let $\mcone=(\statone,\actset,\matrixone)$ be a LMC. For all $x\in\statone$, and $\testone\in\testlan$, we have: 
{\scriptsize
$$
\prsucc{\mcone}{x}{\omega}=1;\qquad
\prsucc{\mcone}{x}{\actone\cdot\testone}=\sum_{\stone\in\statone}\matrixone(x,\actone,\stone)\cdot\prsucc{\mcone}{\stone}{\testone};\qquad 
\prsucc{\mcone}{x}{\seq{\testone,\testtwo}}=\prsucc{\mcone}{x}{\testone}\cdot\prsucc{\mcone}{x}{\testtwo}. 
$$}
\end{proposition}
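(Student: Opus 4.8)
The plan is to prove all three equations by unfolding the definition $\prsucc{\mcone}{x}{\testone}=\prsucc{\lmp{\mcone}}{x}{\testone}$ and reading off the corresponding clauses of Definition~\ref{test} as applied to the labelled Markov process $\lmp{\mcone}=(\statone,\powset{\statone},\mu_\mcone)$. Two of the three cases are immediate. For $\omega$, Definition~\ref{test} gives $\prsucc{\lmp{\mcone}}{x}{\omega}=1$ directly, hence $\prsucc{\mcone}{x}{\omega}=1$. For $\seq{\testone,\testtwo}$, the product clause transfers verbatim: $\prsucc{\mcone}{x}{\seq{\testone,\testtwo}}=\prsucc{\lmp{\mcone}}{x}{\seq{\testone,\testtwo}}=\prsucc{\lmp{\mcone}}{x}{\testone}\cdot\prsucc{\lmp{\mcone}}{x}{\testtwo}=\prsucc{\mcone}{x}{\testone}\cdot\prsucc{\mcone}{x}{\testtwo}$, again just by the defining identity applied to each subtest.

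The real content lies in the clause for $\actone\cdot\testone$. Here Definition~\ref{test} yields $\prsucc{\lmp{\mcone}}{x}{\actone\cdot\testone}=\int\prsucc{\lmp{\mcone}}{\cdot}{\testone}\,d\mu_{x,\actone}$, where $\mu_{x,\actone}$ is the subprobability measure on $(\statone,\powset{\statone})$ given by $\setone\mapsto\mu_\mcone(x,\actone,\setone)=\sum_{\stone\in\setone}\matrixone(x,\actone,\stone)$. The key observation I would make is that this measure is purely atomic: it is the countable sum of point masses $\sum_{\stone\in\statone}\matrixone(x,\actone,\stone)\,\delta_\stone$, so that $\mu_{x,\actone}(\{\stone\})=\matrixone(x,\actone,\stone)$. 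Since $\statone$ is countable, every subset is measurable, and the integrand $\prsucc{\lmp{\mcone}}{\cdot}{\testone}$ is non-negative and bounded by $1$ (an easy induction on $\testone$, using that products and integrals against subprobability measures stay in $[0,1]$). Consequently the integral collapses to the weighted sum $\sum_{\stone\in\statone}\prsucc{\lmp{\mcone}}{\stone}{\testone}\cdot\matrixone(x,\actone,\stone)$, and rewriting $\prsucc{\lmp{\mcone}}{\stone}{\testone}$ as $\prsucc{\mcone}{\stone}{\testone}$ gives exactly the claimed identity.

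The only step requiring care, and hence the main obstacle, is the reduction of the Lebesgue integral to the countable sum. I would justify it by the standard measure-theoretic fact that, for a measure on a countable space concentrated on singletons, the integral of a non-negative measurable function equals the summation of its values weighted by the point masses; this follows by approximating the integrand from below by simple functions and invoking the monotone convergence theorem, with the boundedness of the success probabilities in $[0,1]$ guaranteeing that no convergence difficulties arise. Everything else is a direct transcription of Definition~\ref{test} through the equation defining $\prsucc{\mcone}{\cdot}{\cdot}$ in terms of $\prsucc{\lmp{\mcone}}{\cdot}{\cdot}$, so the proposition is essentially a dictionary translation from the continuous setting to the discrete one.
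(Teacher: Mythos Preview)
Your proof is correct and is exactly the natural argument one would give. The paper in fact states this proposition without proof, treating it as an immediate consequence of unfolding Definition~\ref{test} through the identity $\prsucc{\mcone}{x}{\testone}=\prsucc{\lmp{\mcone}}{x}{\testone}$; your write-up supplies precisely the details the authors left implicit, in particular the reduction of the Lebesgue integral against the purely atomic measure $\mu_{x,\actone}$ to the countable weighted sum.
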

\subsection{Every Test has an Equivalent Context}
We are going to consider the labelled Markov Chain $\mccbv$ defined previously. We know that two programs $\termone$ 
and $\termtwo$ in $\termty\sigma$ are bisimilar if and only if the states $(\termone,\sigma)$ and $(\termtwo,\sigma)$ 
have exactly the same probability to succeed for the tests in $\testlan$, measured according to $\prsucc{\mcone}{\cdot}{\cdot}$. 
Proving that context equivalence is included in bisimulation boils down to show that if $\termone$ and $\termtwo$ have 
exactly the same convergence probability for all contexts, then they have exactly the same success probability 
for all tests. Or, more precisely, that for a given test $\testone$, and a given type $\sigma$, there exists 
a context $\ctxone$, such that for all term $\termone$ of type $\sigma$, the success probability of $\testone$ on 
$(\termone,\sigma)$ is \emph{exactly} the convergence probability of $\act\ctxone\termone$:
$$
\prsucc{\mccbv}{(\termone,\sigma)}{\testone}=\sum\diS{\act{\ctxone}{\termone}}.
$$
However, we should take into account states in the form $(\hat{\valone},\sigma) \in \mccbvstates$, where $\valone$ is a value. 
The formalisation of the just described idea is the following Lemma: 
\begin{lemma}\label{lemma:testcontext}
Let $\sigma$ be a type, and $\testone$ a test.
Then there are contexts $\ctxone_\testone^\sigma$, and ${\ctxtwo_\testone^\sigma}$  such that 
${{\emptyset} \proves {\tycon {\ctxone_\testone^\sigma} {\emptyset} {\sigma} {\bty}}}$, 
${\emptyset \proves \tycon {\ctxtwo_\testone^\sigma} {\emptyset} {\sigma} {\bty} }$, and
for every $\termone \in \termty{\sigma}$ and every $\valone \in \val{\sigma}$,
it holds that
$$
\prsucc{\mccbv}{(\termone,\sigma)}{\testone}=\sum{\diS{\act {\ctxone_\testone^\sigma}{\termone}}};\qquad
\prsucc{\mccbv}{(\hat{\valone},\sigma)}{\testone}=\sum{\diS{\act {\ctxtwo_\testone^\sigma}{\valone}}}.
$$
\end{lemma}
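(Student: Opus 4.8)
The plan is to construct the two families $\ctxone_\testone^\sigma$ and $\ctxtwo_\testone^\sigma$ simultaneously, by induction on the structure of the test $\testone$, with the type $\sigma$ as a parameter. The two families must be defined by mutual recursion because of the shape of $\mccbv$: from a term state $(\termone,\sigma)$ the only non-trivial action is $\evact$, which lands in value states, whereas from a value state $(\hat\valone,\sigma)$ the structural actions (application by a value, $\fstact$, $\sndact$, $\hd$, $\tl$) land back in term states. Throughout I use the closed-form recurrences for $\prsucc{\mccbv}{\cdot}{\cdot}$ stated just above, together with the remark—immediate from those recurrences and the definition of $\mccbvmatrix$—that $\prsucc{\mccbv}{(\termone,\sigma)}{\testone}$ depends on $\termone$ only through $\diS{\termone}$; this lets me replace any result term occurring in the matrix (such as $\subst{\termone}{\valtwo}{\varone}$) by any semantically equal term (such as $\valone\,\valtwo$) without affecting test success.

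Taking the easy cases first: for $\testone=\omega$ I use a context that discards the hole and converges, e.g.\ $(\abstr{\varthree}{\cnst{\ttrue}})\,(\abstr{\varfour}{\emptycon})$, whose argument is already a value, so it converges with probability $1$ regardless of the hole, matching $\prsucc{\mccbv}{\cdot}{\omega}=1$. For an action prefix $\actone\cdot\testone$ I case on $\actone$ against $\sigma$ and the kind of state, inserting $\Omega$ (a fixed divergent term of the right type) wherever the corresponding entry of $\mccbvmatrix$ is $0$, and setting the context to $\ctxone_\testone^\sigma$ (resp.\ $\ctxtwo_\testone^\sigma$) when $\actone$ is the type self-loop $\sigma$. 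For the term family the only remaining case is $\actone=\evact$: I evaluate the hole once, bind its value and feed it to the value context, via
\[
\ctxone_{\evact\cdot\testone}^{\sigma}=(\abstr{\varone}{\ctxtwo_\testone^{\sigma}[\varone]})\,\emptycon,
\]
so that call-by-value evaluation of $\emptycon=\termone$ yields each $\valone$ with weight $\diS{\termone}(\valone)$ and then runs $\ctxtwo_\testone^\sigma[\valone]$; the convergence probability is $\sum_\valone\diS{\termone}(\valone)\sum\diS{\ctxtwo_\testone^\sigma[\valone]}$, which by the value-case induction hypothesis is $\sum_\valone\diS{\termone}(\valone)\,\prsucc{\mccbv}{(\hat{\valone},\sigma)}{\testone}$. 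For the value family and a structural action I use the matching destructor: for application $\valtwo\cdot\testone$ at $\sigma=\tau\rightarrow\theta$ take $\ctxone_\testone^\theta[\,\emptycon\,\valtwo\,]$ (this handles both the $\abstrac{}$ and the $\fixt$ subcases, since $\valone\,\valtwo$ has the right semantics); for $\fstact$ take $\ctxone_\testone^\tau[\fst{\emptycon}]$, and dually for $\sndact$; for $\hd$/$\tl$ use $\caset{\emptycon}{\Omega}{(\ldots)}$, feeding the bound head (resp.\ tail) into the appropriate term context; and for the ground self-loops $k$, $b$ and for $\nilact$, first inspect the value and diverge on mismatch, e.g.\ $(\abstr{\varone}{\ifth{(\varone=\cnst{k})}{\ctxtwo_\testone^{\ity}[\varone]}{\Omega}})\,\emptycon$.

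The delicate case is conjunction $\seq{\testtwo,\testthree}$, where the success $\prsucc{\mccbv}{x}{\testtwo}\cdot\prsucc{\mccbv}{x}{\testthree}$ is a product over two \emph{independent} copies of the same state. I realize the product as convergence by sequencing: $(\abstr{\varthree}{A})\,B$ converges iff $B$ and then $A$ both converge, giving $\sum\diS{A}\cdot\sum\diS{B}$ when $A,B$ are independent. The obstacle is that a context has a single hole, so I cannot literally duplicate it, and the two copies must be genuinely independent. For the value family sharing is harmless, since each downstream use re-runs the ensuing computation afresh; I bind the hole value to a variable and reuse it:
\[
\ctxtwo_{\seq{\testtwo,\testthree}}^{\sigma}=(\abstr{\varone}{(\abstr{\varthree}{\ctxtwo_{\testtwo}^{\sigma}[\varone]})\,(\ctxtwo_{\testthree}^{\sigma}[\varone])})\,\emptycon,
\]
substitution for $\varone$ recovering two copies of the value-filled subcontexts. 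For the term family sharing the value would be \emph{wrong}, because conjunction at a term state demands two independent re-evaluations of the probabilistic term; I therefore thunk the hole into the value $\abstr{\varfour}{\emptycon}$, bind the thunk, and force it anew in each copy:
\[
\ctxone_{\seq{\testtwo,\testthree}}^{\sigma}=(\abstr{\varone}{(\abstr{\varthree}{\ctxone_{\testtwo}^{\sigma}[\varone\,\dmyval]})\,(\ctxone_{\testthree}^{\sigma}[\varone\,\dmyval])})\,(\abstr{\varfour}{\emptycon}).
\]
Each occurrence $\varone\,\dmyval$ reduces to a fresh copy of $\termone$, so the subcomputations are independent; applying the term-case induction hypothesis to $(\abstr{\varfour}{\termone})\,\dmyval$ (which has the same semantics as $\termone$, hence the same test success) gives $\prsucc{\mccbv}{(\termone,\sigma)}{\testtwo}\cdot\prsucc{\mccbv}{(\termone,\sigma)}{\testthree}$.

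I expect the main difficulty to lie precisely here: getting the term/value dichotomy right so that \emph{terms} are re-run (by thunking and forcing) while \emph{values} are shared, all while respecting the single-hole discipline, and then checking that the probability bookkeeping—weighted sums for $\evact$, products for conjunction—matches the recurrences in each clause. The typing side-conditions, namely the choice of the dummy/thunk types and the verifications $\emptyset\proves\tycon{\ctxone_\testone^\sigma}{\emptyset}{\sigma}{\bty}$ and $\emptyset\proves\tycon{\ctxtwo_\testone^\sigma}{\emptyset}{\sigma}{\bty}$, are routine by construction but must be tracked uniformly through the induction.
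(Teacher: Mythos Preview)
Your proposal is correct and follows essentially the same approach as the paper: both argue by induction on tests, thunk the hole as $\abstr{\varfour}{\emptycon}$ and force it at each occurrence to obtain independent copies for conjunction, and realise each action label by the matching language destructor with $\Omega$ for mismatches. Your sequencing via $(\abstr{\varthree}{A})\,B$ is a cosmetic variant of the paper's nested $\mathsf{if}$-chains, and your explicit handling of the type self-loop label $\sigma$ and of the continuation after the ground self-loops $k$, $b$, $\nilact$ is in fact more careful than the paper's sketch.
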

\begin{proof}
  We are going to show the thesis by induction on $\testone$.
  \begin{varitemize}
  \item 
    if $\testone = \omega$, then $\forall \sigma$, we define $\ctxone_{\omega}^\sigma = (\abstrac {\underline{\ttrue}}) (\lambda z.\emptycon)$, and 
    ${\ctxtwo_{\omega}^\sigma} = (\abstrac {\underline{\ttrue}}) (\lambda z.\emptycon)$. And we have: 
    $$\forall \sigma,\, \forall \termone \in \termty{\sigma}, \prsucc{\mccbv}{(\termone,\sigma)}{\omega} = 1 = \sum\left({\diS{\act {\ctxone_\omega^\sigma} {\termone}}}\right)$$
    and 
    $$
    \forall \sigma,\, \forall \valone \in \val{\sigma}, 
\prsucc{\mccbv}{(\hat{\valone},\sigma)}{\omega}=1 = \sum\left({\diS{\act {\ctxtwo_\omega^\sigma}{\valone}}}\right).
    $$
  \item 
    If $\testone = \seq{\testtwo_1,...,\testtwo_n}$.
    Let $\sigma$ be a type.
    By induction hypothesis, for all $1\leq i \leq n$, there exist $\ctxone_{\testtwo_i}^\sigma$ and ${\ctxtwo_{\testtwo_i}^\sigma}$, such that 
    \begin{align*}
     &\emptyset \proves {\tycon {\ctxone_{\testtwo_i}^\sigma}{\sigma}{\emptyset} {\bty}} \text{ and } \emptyset \proves {\tycon{\ctxtwo_{\testtwo_i}^\sigma}{\sigma}{\emptyset}{\bty}} \\
      &\forall \termone \in \termty{\sigma}, 
      \prsucc{\mccbv}{(\termone,\sigma)}{\testtwo_i} = \sum\left({\diS{\act {\ctxone_{\testtwo_i}^\sigma} \termone}}\right)\\
      &\forall \valone \in \val{\sigma}, 
\prsucc{\mccbv}{(\hat{\valone},\sigma)}{\testtwo_i}  = \sum\left({\diS{\act {\ctxtwo_{\testtwo_i}^\sigma} \valone}}\right)\\
    \end{align*}
    We define: 
    \begin{align*}
      \ctxone_\testone^\sigma = (\abstrac T)(\lambda z.\emptycon)\\
      {\ctxtwo_\testone^\sigma} = (\abstrac T^*)(\lambda z.\emptycon)
    \end{align*}
    where:
    {\small
      \begin{align*}
        T =&\ifth {\left( (\lambda y.{\underline{\ttrue}}) (\ctxone_{\testtwo_1}^\sigma [x\, \underline{0}]) \right) \\
        &\;\;\;\;\;\;\;\; } { \left(
        {\ifth {(({\lambda y.{\underline{\ttrue}}})( \ctxone_{\testtwo_2}^\sigma [x\, \underline{0}]))  }
         {...(\ifth {(\lambda y.{\underline{\ttrue}}) (\ctxone_{\testtwo_n}^\sigma [x \,\underline{0}])} {\underline{\ttrue}} {\underline{\ttrue}})...}
          {\underline{\ttrue}} }\right) \\ &} {\underline{\ttrue}}\\
              T^* =& \text{ if } \left( (\lambda y.{\underline{\ttrue}}) ({D_{\testtwo_1}^\sigma} [x\, \underline{0}]) \right) \text{then } \left(\text{if} (\lambda y.{\underline{\ttrue}})\left( {D_{\testtwo_2}^\sigma }[x\, \underline{0}] \right)...(\text{if} (\lambda y.{\underline{\ttrue}}) ({D_{\testtwo_n}^\sigma} [x\, \underline{0}]) \text{ then } {\underline{\ttrue}} \text{ else } {\underline{\ttrue}})...\text{else } {\underline{\ttrue}} \right) \text{else } {\underline{\ttrue}}
      \end{align*}
    }.
    
    Let be $\termone \in \termty \sigma$. We have: 
    \begin{align*}
      \sum\left({\diS{\act {\ctxone_\testone^\sigma}{\termone}}}\right) & = \prod_{1\leq i \leq n}\sum\left({\diS{\act {\ctxone_{\testtwo_i}^\sigma}{(\lambda z.\termone){\underline{0}}}}}\right) \\
      &=\prod_{1\leq i \leq n}\sum\left({\diS{\act{\ctxone_{\testtwo_i}^\sigma}{\termone}}}\right) \text{ since } \forall \termone \text{ closed term }, \termone \backsim_{ctx} (\lambda z.\termone){\underline 0}\\
      &=\prsucc{\mccbv}{(\termone,\sigma)}{\testtwo_1}\cdot ...\cdot \prsucc{\mccbv}{(\termone,\sigma)}{\testtwo_n} \\
      &=\prsucc{\mccbv}{(\termone,\sigma)}{\seq{\testtwo_1,...,\testtwo_n}}
    \end{align*}
    And similarly we have, for every $\valone \in \val{\sigma}$:
    \begin{equation*} 
      \sum\left({\diS{{\act {\ctxtwo_\testone^\sigma}{\valone}}}}\right)  =\prsucc{\mccbv}{(\hat \valone,\sigma)}{\seq{\testtwo_1,...,\testtwo_n}}
    \end{equation*}
    
  \item 
    if $\testone = \actone\cdot \testtwo$
    \begin{varitemize}
    \item if $\actone = \evact$, we define: 
      \begin{equation*}
        {\ctxtwo_{\testone}^\sigma} = (\abstrac {\emptycon})\Omega
      \end{equation*}
      and
      \begin{equation*}
        \ctxone_{\testone}^\sigma = \left(\abstrac ({\act{\ctxtwo_{\testtwo}^\sigma}{\varone}})\right)(\emptycon)
      \end{equation*}
      And we have $\forall \termone \in \termty{\sigma}$: 
      \begin{align*}
        \sum\left({\diS{\act{C_t^\sigma}{\termone}}}\right) & = 
        \sum_{\valone \in \val{\sigma}} \diS{\termone}(\valone)\cdot \sum\left({\diS{{{D_{\testtwo}^\sigma}[\valone]}}}\right)\\
        & = \sum_{\valone \in \val{\sigma}} \diS{\termone}(\valone)\cdot \prsucc{(\hat \valone,\sigma)}{\testtwo} \\
        &= \sum_{e \in \mccbvstates} \mccbvmatrix((\termone,\sigma),\evact,y)\cdot \prsucc{\mccbv}{e}{\testtwo} \\
        &=\prsucc{\mccbv}{(\termone,\sigma)}{\testtwo}
      \end{align*}
      
    \item if $\actone = \valone$, with $\valone \in {\values}$.
      we define $\ctxone_\testone^\sigma = (\abstrac \emptycon)\Omega$.
      \begin{varitemize}
      \item if $\sigma = \tau_1 \rightarrow \tau_2$, and $\valone \in \val {\tau_1}$, then we define:
        \begin{equation*} 
          {\ctxtwo_\testone^{\tau_1 \rightarrow \tau_2}} = {\ctxone_{\testtwo}^{\tau_2}}[\emptycon \valone]
        \end{equation*}
      \item otherwise, we define: ${\ctxtwo_\testone^\sigma} = (\abstrac {\emptycon})\Omega$.
      \end{varitemize}
\item 
  if $\actone = \fstact$:
  we define $C_t^\sigma = (\abstrac \emptycon)\Omega$.
  \begin{varitemize}
  \item if $\sigma = \tau_1 \times \tau_2$ then we define:
    \begin{equation*} 
      {D_t^{\tau_1 \times \tau_2}} = {C_{\testtwo}^{\tau_1}}[\fst \emptycon]
    \end{equation*}
  \item otherwise, we define: ${D_t^\sigma}=(\abstrac \emptycon)\Omega$.
  \end{varitemize} 
\item 
  if $\actone = \sndact$: similar to the previous case.
\item 
  if $\actone = \hd$:
  we define $C_t^\sigma = (\abstrac \emptycon)\Omega$.
  \begin{varitemize}
  \item if $\sigma = [\tau]$ then we define:
    \begin{equation*} 
      {D_t^{[\tau]}} = {C_{\testtwo}^{\tau}}[\caset{\emptycon} {\Omega} {h}]
    \end{equation*}
  \item otherwise, we define: ${D_t^\sigma}=(\abstrac \emptycon)\Omega$.
  \end{varitemize} 
\item 
  if $\actone = \tl$:
  we define $C_t^\sigma = (\abstrac \emptycon)\Omega$.
  \begin{varitemize}
  \item if $\sigma = [\tau]$ then we define:
    \begin{equation*} 
      {D_t^{[\tau]}} = {C_{\testtwo}^{[\tau]}}[\caset{\emptycon} {\Omega} {t}]
    \end{equation*}
  \item otherwise, we define: ${D_t^\sigma}=(\abstrac \emptycon)\Omega$.
  \end{varitemize} 
\item 
  if $\actone = \nilact$:
  we define $C_t^\sigma = (\abstrac \emptycon)\Omega$.
  \begin{varitemize}
  \item if $\sigma = [\tau]$ then we define:
    \begin{equation*} 
      {D_t^{[\tau]}} = {C_{\testtwo}^{\gamma}}[\caset{\emptycon} {C_{\testtwo}^\gamma[\tilde div]} {\Omega}]
    \end{equation*}
  \item otherwise, we define: ${D_t^\sigma}=(\abstrac \emptycon)\Omega$.
  \end{varitemize}
\item 
  if $\actone = k$, with $k \in \mathbb{N}$.
  \begin{varitemize}
  \item if $\sigma = \ity$ we define: 
    ${D_t^{\ity}} =\ifth{(\emptycon = \underline k)}{\underline{\ttrue}}{\Omega}$.
  \item otherwise: ${D_t^{\sigma}} = (\abstrac \emptycon)\Omega$
  \end{varitemize} 
\end{varitemize}
\end{varitemize}
\end{proof}

It follows from Lemma~\ref{lemma:testcontext} that if two well-typed closed terms are context equivalent, 
they are bisimilar:
\begin{theorem}\label{theo:fullabstr}
  Let $\termone,\termtwo$ be terms such that $\emptyset \proves \termone \equiv \termtwo: \sigma$.
  Then $\emptyset \proves \termone \Ev \termtwo: \sigma$.
\end{theorem}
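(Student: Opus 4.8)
The plan is to route through the testing characterization of bisimilarity supplied by Theorem~\ref{theo:testbisim}, using Lemma~\ref{lemma:testcontext} as the bridge from tests to contexts. Since $\termone$ and $\termtwo$ are closed, the empty-context instance of the open extension defining $\Ev$ collapses to $\pabv{\sigma}$, so the goal $\emptyset \proves \termone \Ev \termtwo : \sigma$ is precisely the requirement that the states $(\termone,\sigma)$ and $(\termtwo,\sigma)$ be bisimilar in the labelled Markov chain $\mccbv$. By Theorem~\ref{theo:testbisim}, this is in turn equivalent to
$$
\prsucc{\mccbv}{(\termone,\sigma)}{\testone} = \prsucc{\mccbv}{(\termtwo,\sigma)}{\testone}
$$
for every test $\testone \in \testlan$. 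So I would fix an arbitrary $\testone$ and reduce the whole theorem to this single numerical equality.

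First I would invoke Lemma~\ref{lemma:testcontext} for the fixed $\testone$ and type $\sigma$ to obtain a context $\ctxone_\testone^\sigma$ with $\emptyset \proves \tycon{\ctxone_\testone^\sigma}{\emptyset}{\sigma}{\bty}$ such that, for every $\termthree \in \termty{\sigma}$, one has $\prsucc{\mccbv}{(\termthree,\sigma)}{\testone} = \sum \diS{\act{\ctxone_\testone^\sigma}{\termthree}}$. Only the first of the two contexts delivered by the lemma is needed, since $\termone,\termtwo$ are arbitrary programs rather than distinguished value states. Applying this identity at $\termone$ and at $\termtwo$, and inserting the hypothesis $\emptyset \proves \termone \equiv \termtwo : \sigma$ against the particular context $\ctxone_\testone^\sigma$ — which is legitimate because $\ctxone_\testone^\sigma$ is a closing context of boolean result type, exactly the kind quantified over in the definition of $\equiv$ — I would chain
$$
\prsucc{\mccbv}{(\termone,\sigma)}{\testone} = \sum \diS{\act{\ctxone_\testone^\sigma}{\termone}} = \sum \diS{\act{\ctxone_\testone^\sigma}{\termtwo}} = \prsucc{\mccbv}{(\termtwo,\sigma)}{\testone}.
$$
As $\testone$ was arbitrary, Theorem~\ref{theo:testbisim} then yields bisimilarity of $(\termone,\sigma)$ and $(\termtwo,\sigma)$, that is, $\emptyset \proves \termone \Ev \termtwo : \sigma$.

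The essential difficulty of this theorem is not in this final deduction, which is a short chain of equalities, but in the two results it rests upon. Theorem~\ref{theo:testbisim} transports the testing characterization of~\cite{vBMOW05} from labelled Markov processes down to the discrete chain $\mccbv$, while Lemma~\ref{lemma:testcontext} is where the real work lies: by induction on the structure of $\testone$ it requires an explicit construction of a \PCFLP\ context realizing each test, carefully matching the observation encoded by each label ($\evact$, application by a value, $\fstact$, $\sndact$, $\hd$, $\tl$, $\nilact$, and the numeric and boolean labels) and turning the conjunctive former $\seq{\testtwo_1,\ldots,\testtwo_n}$ into iterated evaluations whose success probabilities multiply. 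Granting those ingredients, the only point demanding attention here is the bookkeeping that the contexts produced by the lemma indeed have boolean type and close $\termone$ and $\termtwo$, so that context equivalence applies to them directly.
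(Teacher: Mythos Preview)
Your proposal is correct and follows exactly the same route as the paper: fix an arbitrary test, invoke Lemma~\ref{lemma:testcontext} to turn it into a context, use the hypothesis $\termone\equiv\termtwo$ on that context to equate success probabilities, and conclude bisimilarity by Theorem~\ref{theo:testbisim}. The additional remarks you make about where the real work lies and about the typing of $\ctxone_\testone^\sigma$ are accurate but not part of the paper's (very terse) proof.
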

\begin{proof}
Let $\testone$ be a test. We have that, since $\termone \equiv \termtwo$,
$$
  \prsucc{\mccbv}{(\termone,\sigma)}{\testone}=\sum\diS{\act{C_{\testone}^\sigma}{\termone}}
   =\sum\diS{\act{C_{\testone}^\sigma}{\termtwo}} 
   =\prsucc{\mccbv}{(\termtwo,\sigma)}{\testone},
$$
where $C_{\testone}^\sigma$ is the context from Lemma~\ref{lemma:testcontext}.
By Theorem \ref{theo:testbisim}, $(\termone,\sigma)$ and $(\termtwo,\sigma)$ are bisimilar.
So $\emptyset \proves \termone \Ev \termtwo: \sigma$ which is the thesis.
\end{proof}
We can now easily extend this result to terms in $\termtyg \sigma \Gamma$, which gives us 
Full Abstraction: bisimilarity and context equivalence indeed coincide.
\begin{theorem}[Full Abstraction]\label{theorem:fullabstraction}
Let $\termone$ and $\termtwo$ be terms in $\termtyg \sigma \Gamma$ .Then $\Gamma \proves \termone \equiv \termtwo: \sigma$ iff
$\Gamma \proves \termone \Ev \termtwo: \sigma$.
\end{theorem}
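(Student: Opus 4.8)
The plan is to treat the two implications separately, with almost all the work going into only one of them. The implication $\Gamma \proves \termone \Ev \termtwo: \sigma \Rightarrow \Gamma \proves \termone \equiv \termtwo: \sigma$ is already available: Theorem~\ref{theo:soundness} states that $\Ev$ is a congruence relation included in $\equiv$, and this is exactly the statement for arbitrary typing contexts. So the whole content lies in the converse, $\equiv\ \Rightarrow\ \Ev$, which is the step that lifts the closed-term full-abstraction result of Theorem~\ref{theo:fullabstr} to open terms.

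For that converse I would start by unfolding the definition of the open extension. Writing $\Gamma = \varone_1:\tau_1,\ldots,\varone_n:\tau_n$, what must be shown is that for \emph{every} $\Gamma$-closure $\ccone = (\valone_1,\ldots,\valone_n)\in\ngc{\Gamma}$ the two closed terms $\termone\ccone$ and $\termtwo\ccone$ satisfy $(\termone\ccone)\pabv{\sigma}(\termtwo\ccone)$. By Theorem~\ref{theo:fullabstr} it is enough to prove that $\termone\ccone$ and $\termtwo\ccone$ are context equivalent as closed terms. So I fix an arbitrary closed context $\ctxtwo$ with $\emptyset\proves\tycon{\ctxtwo}{\emptyset}{\sigma}{\typthree}$ and build from it the context
\[
  \ctxone \;=\; \act{\ctxtwo}{(\abstr{\varone_1}{\cdots\abstr{\varone_n}{\emptycon}})\,\valone_1\cdots\valone_n}.
\]
A routine induction on the context-typing rules of Figure~\ref{fig:typcon} yields $\emptyset\proves\tycon{\ctxone}{\Gamma}{\sigma}{\typthree}$, so $\ctxone$ is a legitimate closing context for the open terms $\termone,\termtwo$.

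The key auxiliary fact is that, because each $\valone_i$ is a \emph{closed} value, the term $(\abstr{\varone_1}{\cdots\abstr{\varone_n}{\termone}})\,\valone_1\cdots\valone_n$ reduces deterministically through $n$ successive $\beta$-steps to the simultaneous substitution $\termone\ccone$ (closedness of the $\valone_i$ is what makes iterated and simultaneous substitution coincide), and hence $\diS{(\abstr{\varone_1}{\cdots\abstr{\varone_n}{\termone}})\,\valone_1\cdots\valone_n}=\diS{\termone\ccone}$, and likewise for $\termtwo$. By Lemma~\ref{lemma:kleene} these two closed terms are bisimilar, hence by Theorem~\ref{theo:soundness} context equivalent; applying this closed context equivalence to the particular testing context $\ctxtwo$ gives $\sum\diS{\act{\ctxone}{\termone}}=\sum\diS{\act{\ctxtwo}{\termone\ccone}}$, and symmetrically $\sum\diS{\act{\ctxone}{\termtwo}}=\sum\diS{\act{\ctxtwo}{\termtwo\ccone}}$. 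Combining these with the hypothesis $\Gamma\proves\termone\equiv\termtwo:\sigma$, which forces $\sum\diS{\act{\ctxone}{\termone}}=\sum\diS{\act{\ctxone}{\termtwo}}$, yields $\sum\diS{\act{\ctxtwo}{\termone\ccone}}=\sum\diS{\act{\ctxtwo}{\termtwo\ccone}}$. Since $\ctxtwo$ was arbitrary, $\termone\ccone$ and $\termtwo\ccone$ are context equivalent; Theorem~\ref{theo:fullabstr} then gives $(\termone\ccone)\pabv{\sigma}(\termtwo\ccone)$, and as $\ccone$ ranged over all of $\ngc{\Gamma}$ the definition of the open extension delivers $\Gamma\proves\termone\Ev\termtwo:\sigma$.

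The theorem is thus essentially a corollary of the closed case, and I do not anticipate a deep obstacle; the only points demanding care are the verification that $\ctxone$ is well typed and, above all, the transfer step that turns the syntactic identity $\act{\ctxone}{\termone}=\act{\ctxtwo}{(\abstr{\varone_1}{\cdots\abstr{\varone_n}{\termone}})\,\valone_1\cdots\valone_n}$ together with the semantic equality $\diS{(\abstr{\varone_1}{\cdots\abstr{\varone_n}{\termone}})\,\valone_1\cdots\valone_n}=\diS{\termone\ccone}$ into equal convergence probabilities under $\ctxtwo$. Everything else is a direct appeal to Lemma~\ref{lemma:kleene}, Theorem~\ref{theo:soundness}, and the already-established closed full abstraction of Theorem~\ref{theo:fullabstr}.
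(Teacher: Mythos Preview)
Your proof is correct and follows essentially the same route as the paper: both directions are handled the same way, with soundness (Theorem~\ref{theo:soundness}) giving $\Ev\subseteq\equiv$, and the converse reduced to the closed case (Theorem~\ref{theo:fullabstr}) by closing with arbitrary values and then invoking the definition of the open extension. The only difference is one of detail: the paper simply asserts that $\equiv$ is value-substitutive (i.e., $\Gamma\proves\termone\equiv\termtwo:\sigma$ implies $\emptyset\proves\termone[\overline{\valone}/\overline{\varone}]\equiv\termtwo[\overline{\valone}/\overline{\varone}]:\sigma$ for every tuple of closed values), whereas you spell out a proof of this fact via the explicit context $\ctxone=\act{\ctxtwo}{(\abstr{\varone_1}{\cdots\abstr{\varone_n}{\emptycon}})\valone_1\cdots\valone_n}$ together with Lemma~\ref{lemma:kleene} and soundness. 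Your version is thus more self-contained; the underlying argument is the same.
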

\begin{proof}
There is only one inclusion to show (we know already that bisimilarity is included in context equivalence). 
We know that $\equiv$ is value substitutive. 
We note $\Gamma = x_1: \tau_1,\ldots,x_n:\tau_n$.
So for all $\valone_1 \in \val {\tau_1},\ldots,\valone_n\in \val{\tau_n}$, we have:
$\emptyset \proves M[\overline{\valone}/\overline{x}] \equiv N[\overline{\valone}/\overline{x}]: \sigma$.
By Theorem \ref{theo:fullabstr}, we have: $\emptyset \proves M[\overline{\valone}/\overline{x}]\backsim_v N[\overline{\valone}/\overline{x}]: \sigma$.
And so by definition of the open extension: 
$\Gamma \proves M \Ev N: \sigma$.
\end{proof}
\subsection{The Asymmetric Case}
Theorem~\ref{theorem:fullabstraction} establishes a precise correspondence between bisimulation and
context equivalence. This is definitely not the end of the story --- surprisingly enough, indeed,
\emph{simulation} and the contextual \emph{preorder} do not coincide, and this section gives a
counterexample, namely a pair of terms which can be compared in the context preorder but which are not
similar. 

Let us fix the following terms:
\begin{align*}
  \termone&=\abstr{\varone}{\abstr{\vartwo}{(\Omega\oplus I)}};\\
  \termtwo&=\abstr{\varone}{(\abstr{\vartwo}{\Omega})\oplus(\abstr{\vartwo}{I})}.
\end{align*}
Both these terms can be given the type $\sigma=\arr{\bty}{\arr{\bty}{\arr{\bty}{\bty}}}$ in
the empty context.
The first thing to note is that $\termone$ and $\termtwo$ 
cannot even be compared in the simulation preorder:
\begin{lemma}
  It is not the case that $\emptyset\proves\termone\Rv\termtwo:\sigma$ nor that $\emptyset\proves\termtwo\Rv\termone:\sigma$.
\end{lemma}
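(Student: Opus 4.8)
The plan is to unfold the definition of $\Rv$ on closed terms and argue directly on the labelled Markov chain $\mccbv$. Since $\termone$ and $\termtwo$ are closed, $\emptyset\proves\termone\Rv\termtwo:\sigma$ holds iff $(\termone,\sigma)\precsim(\termtwo,\sigma)$, where $\precsim$ is the largest simulation on $\mccbv$, and symmetrically for the opposite pair; so it suffices to refute both $(\termone,\sigma)\precsim(\termtwo,\sigma)$ and $(\termtwo,\sigma)\precsim(\termone,\sigma)$. The semantic reason both fail is the timing of the coin flip: in $\termone$ the choice $\Omega\oplus I$ is resolved only after \emph{both} arguments are consumed, so $\termone$ deterministically reaches the single value $B=\abstr{y}{(\Omega\oplus I)}$, whose application converges with probability exactly $\tfrac12$; in $\termtwo$ the choice is resolved after the \emph{first} argument, branching with probability $\tfrac12$ each to $\abstr{y}{\Omega}$, which always diverges, and to $\abstr{y}{I}$, which always converges. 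Writing $\tau=\arr{\bty}{\arr{\bty}{\bty}}$, $\theta=\arr{\bty}{\bty}$ and $A=(\abstr{y}{\Omega})\oplus(\abstr{y}{I})$, the strategy in each direction is to assume the simulation holds and then repeatedly instantiate the defining inequality $\mccbvmatrix(s,l,X)\le\mccbvmatrix(t,l,\precsim(X))$, with $\precsim(X)=\{u\mid\exists v\in X.\,v\precsim u\}$, at singleton sets $X$ and at the labels $\evact$ and $\underline{\ttrue}\in\val{\bty}$, propagating the relation down the forced evaluation chain until reaching a state that converges under $\evact$ with more probability than its partner can offer.

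For $(\termtwo,\sigma)\not\precsim(\termone,\sigma)$: starting from $(\termtwo,\sigma)\precsim(\termone,\sigma)$, the choice $l=\evact$, $X=\{(\hat{\termtwo},\sigma)\}$ forces $(\hat{\termtwo},\sigma)\precsim(\hat{\termone},\sigma)$, since both terms are values with a single $\evact$-edge of weight $1$. Applying the label $\underline{\ttrue}$ then forces $(A,\tau)\precsim(B,\tau)$, as substitution for the absent $x$ is trivial. Next, $l=\evact$ with $X=\{(\widehat{\abstr{y}{I}},\tau)\}$ gives $\mccbvmatrix((A,\tau),\evact,X)=\tfrac12$, and as $B$ is a value whose only $\evact$-successor is $(\hat{B},\tau)$, we must have $(\widehat{\abstr{y}{I}},\tau)\precsim(\hat{B},\tau)$. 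One more $\underline{\ttrue}$-step yields $(I,\theta)\precsim(\Omega\oplus I,\theta)$. Finally, with $l=\evact$ and $X=\{(\hat{I},\theta)\}$ we obtain $1=\diS{I}(I)=\mccbvmatrix((I,\theta),\evact,X)\le\mccbvmatrix((\Omega\oplus I,\theta),\evact,\precsim(X))\le\sum\diS{\Omega\oplus I}=\tfrac12$, a contradiction.

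For $(\termone,\sigma)\not\precsim(\termtwo,\sigma)$: the same opening two steps (now with the roles of the pair swapped) force $(B,\tau)\precsim(A,\tau)$. The crucial step is $l=\evact$, $X=\{(\hat{B},\tau)\}$: here $\mccbvmatrix((B,\tau),\evact,X)=1$, while all of the $\evact$-mass of $A$, which equals $1$, is split over the two distinct values $\widehat{\abstr{y}{\Omega}}$ and $\widehat{\abstr{y}{I}}$; hence $\precsim(X)$ must contain \emph{both}, and in particular $(\hat{B},\tau)\precsim(\widehat{\abstr{y}{\Omega}},\tau)$. Applying $\underline{\ttrue}$ gives $(\Omega\oplus I,\theta)\precsim(\Omega,\theta)$, and a last $\evact$-step with $X=\{(\hat{I},\theta)\}$ produces $\tfrac12=\mccbvmatrix((\Omega\oplus I,\theta),\evact,X)\le\mccbvmatrix((\Omega,\theta),\evact,\precsim(X))=0$, because $\diS{\Omega}=\emptyset$, again a contradiction. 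The main obstacle is entirely this bookkeeping: reading off the weights of $\mccbvmatrix$ correctly, especially that $\termtwo$'s first $\evact$ genuinely branches into two values of weight $\tfrac12$ each whereas $\termone$ stays a single value, and, in the second direction, the branch-covering argument that a single unit-weight $\evact$-edge is forced to simulate \emph{all} the positive-weight $\evact$-successors of $A$. Everything else is a routine unwinding of Definition~\ref{def:probasim}.
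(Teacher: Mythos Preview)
Your argument is correct and follows essentially the same route as the paper's proof: in each direction you assume the simulation, push it through $\evact$ and a value-application step to reach $(\abstr{\vartwo}{\Omega})\oplus(\abstr{\vartwo}{I})$ versus $\abstr{\vartwo}{(\Omega\oplus I)}$, then use the simulation inequality at $\evact$ on a well-chosen singleton to force the ``bad'' pairing ($\abstr{\vartwo}{I}$ against $\abstr{\vartwo}{(\Omega\oplus I)}$ in one direction, $\abstr{\vartwo}{(\Omega\oplus I)}$ against $\abstr{\vartwo}{\Omega}$ in the other), and conclude with a convergence-probability contradiction. The only cosmetic differences are that you spell out the first two propagation steps explicitly and fix the applied value to $\underline{\ttrue}$, whereas the paper starts the argument one level down and leaves the value generic.
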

\begin{proof}
   The Markov Chain used to define $\precsim$ has the following form:
   \begin{center}
     \begin{tikzpicture}[scale = 1]
       \tikzstyle{states}=[draw,rounded corners,very thick]
       \tikzstyle{operation}=[->,>=latex]
       \tikzstyle{trait}=[-,>=latex,thick]
       \tikzstyle{etiquette}=[midway]
       \tikzstyle{proba}=[midway]
       \node[states] (M) at (0,5){$\termone$};
       \node[states] (N) at (4,5){$\termtwo$};
       \node[states] (M2) at (0,3){ $\widehat \termone$};
       \node[states] (N2) at (4,3){ $\widehat \termtwo$};
       \node[states] (M3) at (0,1){ $\abstr{\vartwo}{(\Omega \oplus I)}$};
       \node[states] (N3) at (4,1){ $(\abstr {\vartwo}{\Omega})\oplus (\abstr {\vartwo}{I})$};
       \node[states] (M4) at (0,-1){ $\widehat {\abstr{\vartwo}{(\Omega \oplus I)}}$};
       \node[states] (N3a) at (3,-1){ $\widehat {\abstr {\vartwo}{\Omega}}$};
       \node[states] (N3b) at (5,-1){ $\widehat {\abstr {\vartwo}{I}}$};
       \node[states] (M5) at (0,-3){ ${(\Omega \oplus I)}$};
       \node[states] (N5a) at (3,-3){ ${\Omega}$};
       \node[states] (N5b) at (5,-3){ $I$};
       \node[states] (N6b) at (2,-5){ $\widehat I$};
       \coordinate (inter1) at (4,0);
       \draw[operation] (M)--(M2) node[etiquette, left]{ $\evact$};
       \draw[operation] (N)--(N2) node[etiquette, left]{ $\evact$};
       \draw[operation] (M2)--(M3) node[etiquette, left]{ $\valone$};
       \draw[operation] (N2)--(N3) node[etiquette, left]{ $\valone$};
       \draw[operation] (M3)--(M4) node[etiquette, left]{ $\evact$};
       \draw[trait] (N3)--(inter1) node[etiquette, left]{ $\evact$};
       {\draw[operation] (inter1)--(N3a) node[proba,left]{$\frac 1 2$};}
       {\draw[operation] (inter1)--(N3b) node[proba,right]{$\frac 1 2$};}
       \draw[operation] (M4)--(M5) node[etiquette, left]{ $\valone$};
       \draw[operation] (N3a)--(N5a) node[etiquette, left]{ $\valone$};
       \draw[operation] (N3b)--(N5b) node[etiquette, left]{ $\valone$};
       \draw[operation] (N5b)--(N6b) node[etiquette, left]{ $\evact$};
       {\draw[operation] (M5)--(N6b) node[proba,left]{$\frac 1 2$};}
       {\draw[operation] (M5)--(N6b) node[etiquette,right]{$\evact$};}
     \end{tikzpicture}
   \end{center}
  \begin{varitemize}
  \item
    Suppose that $\termtwo \precsim \termone$. So (since $\precsim$ is a simulation),  $\left(\abstrac I \oplus \abstrac \Omega \right) \precsim (\abstrac(I \oplus \Omega))$.
    So we have : $\frac 1 2 = \mccbvmatrix {(\left(\abstrac I \oplus \abstrac \Omega \right),\evact,\widehat{\abstrac I})} \leq \mccbvmatrix {(\left(\abstrac (I \oplus \Omega) \right),\evact,\precsim(\widehat{\abstrac I}))}$, and it folds that : $(\abstr \valone {(I \oplus \Omega)}) \in \precsim({\abstr \valone I})$, i.e. ($\abstr \valone I \precsim \abstr \valone (I \oplus \Omega)$). But since $\precsim$ is a simulation, we can then deduce that : $I \precsim I \oplus \Omega$ : and we have a contradiction since $\mccbvmatrix{(I,\evact, {\widehat I })} > \mccbvmatrix({I \oplus \Omega},{\evact},{\mccbvstates})$.
  \item 
    Suppose that $\termone \precsim \termtwo$. We use on the same way the fact that $\precsim$ is a simulation.
We have :  $(\abstrac(I \oplus \Omega)) \precsim \left(\abstrac I \oplus \abstrac \Omega \right)$. And so we have : $1 = \mccbvmatrix {(\left(\abstrac( I \oplus \Omega) \right),\evact,\widehat{\abstrac (I \oplus \Omega)})} \leq \mccbvmatrix {(\left(\abstrac I \oplus \abstrac \Omega \right),\evact,\precsim(\widehat{\abstrac (I \oplus \Omega)}))}$. It implies that :  $\widehat \abstrac (I \oplus \Omega) \precsim \widehat \abstrac \Omega $. 
    We can now apply the $\evact$ action, and we see that : $\mccbvmatrix({I \oplus \Omega},{\evact},{\mccbvstates}) \leq \mccbvmatrix ({I \oplus \Omega},{\evact},{\mccbvstates}) = 0$, and so we have a contradiction.
  \end{varitemize} 
  This concludes the proof.
\end{proof}
We now proceed by proving that $\termone$ and $\termtwo$ can be compared in the \emph{contextual} preorder. We will do
so by studying their dynamics seen as terms of $\Lambda_\oplus$~\cite{DalLagoZorzi}
(in which the only constructs are variables, abstractions, applications and probabilistic choices, and in which
types are absent) rather than terms of \PCFLP. We will later argue why this translates back into a result for \PCFLP.
This detour allows to simplify the overall treatment without sacrificing generality. From now on, then
$\termone$ and $\termtwo$ are seen as pure terms, where $\Omega$ takes the usual form $(\abstrac\varone\varone)(\abstrac\varone\varone)$.

Let us introduce some notation now. First of all, three terms need to be given names as follows:
$\termthree=\abstr{\vartwo}{(\Omega\oplus I)}$,
$\termthree_0=\abstr{\vartwo}{\Omega}$, and $\termthree_1=\abstr{\vartwo}{I}$.
If $b=b_1,\ldots,b_n\in\{0,1\}^n$, then $\termthree_b$ denotes the sequence of terms $\termthree_{b_1}\cdots\termthree_{b_n}$. 
If $\termfour$ is a term, $\evalvsp{\termfour}{\probone}$ means that there is distribution $\distrone$
such that $\evalvs{\termfour}{\distrone}$ and $\sum\distrone=\probone$ (where $\Rightarrow$ is small-step approximation
semantics~\cite{DalLagoZorzi}.

The idea, now, is to prove that in any term $\termfour$, if we replace an occurrence of $\termone$ by an occurrence
of $\termtwo$, we obtain a term $\termfive$ which converges with probability smaller than the one with which 
$\termfour$ converges. We first need an auxiliary lemma, which proves a similar result for $\termthree_0$ and $\termthree_1$.
\begin{lemma}\label{lemma:auxcounter}
  For every term $\termfour$, if $\evalvsp{(\subst{\termfour}{\termthree_0}{\varone})}{\probone}$, then
  there is another real number $\probtwo\geq\probone$ such that $\evalvsp{(\subst{\termfour}{\termthree_1}{\varone})}{\probtwo}$.
\end{lemma}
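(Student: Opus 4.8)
The plan is to prove a statement slightly more general than the displayed one, chosen so as to be stable under reduction, and then to argue by induction on the small-step approximation derivation. The starting observation is that both $\termthree_0=\abstr{\vartwo}{\Omega}$ and $\termthree_1=\abstr{\vartwo}{\id}$ are \emph{closed values} (abstractions), and that they behave identically under every observation except one: applying $\termthree_0$ to an argument reduces to the divergent term $\Omega$, whereas applying $\termthree_1$ to an argument reduces to the value $\id$. I would therefore view $\termfour$ as a skeleton with a distinguished free variable $\varone$ (occurring any number of times) and compare the two closed instances $\subst{\termfour}{\termthree_0}{\varone}$ and $\subst{\termfour}{\termthree_1}{\varone}$. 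Since the two plugs are values occupying exactly the same positions, the two instances admit the \emph{same} decomposition into an evaluation context $\ectxone$ and a redex, and I would split the redex into two kinds: a \emph{skeleton redex} (a $\beta$-redex whose abstraction comes from $\termfour$, or a choice $\termone_1\oplus\termone_2$ coming from $\termfour$) and a \emph{plug redex} (an application whose function part is one of the substituted copies of $\termthree_0$, respectively $\termthree_1$).

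The core claim, proved by induction on the derivation of $\subst{\termfour}{\termthree_0}{\varone}\Rightarrow\distrone$, is that there exists $\distrtwo$ with $\subst{\termfour}{\termthree_1}{\varone}\Rightarrow\distrtwo$ and $\sum\distrone\leq\sum\distrtwo$; the lemma then follows by taking suprema. The rule producing the empty distribution gives $\sum\distrone=0$ and is trivial; if $\subst{\termfour}{\termthree_0}{\varone}$ is a value, then (closed values being abstractions, and $\termfour$ being then either $\varone$ or an abstraction) $\subst{\termfour}{\termthree_1}{\varone}$ is a value too, so both converge with probability $1$. In the reduction case I decompose $\subst{\termfour}{\termthree_0}{\varone}=\act{\ectxone}{\termfive}$. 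If the redex is a skeleton redex, the instance $\subst{\termfour}{\termthree_1}{\varone}$ fires the matching reduction; its finitely many successors are pairwise again of the form $\subst{\termfour'}{\termthree_0}{\varone}$ and $\subst{\termfour'}{\termthree_1}{\varone}$ for a common reduced skeleton $\termfour'$, and I conclude by applying the induction hypothesis to each successor and recombining the weights with the small-step rule. Keeping the successors in this correspondence is the substitutive part of the argument, and it must survive the duplication or discarding of an argument that itself carries copies of the plug.

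The decisive case is the plug redex $\termfive=\termthree_0\,\valone=(\abstr{\vartwo}{\Omega})\,\valone$. Its unique successor is $\act{\ectxone}{\Omega}$, using that $\Omega$ is closed so that $\subst{\Omega}{\valone}{\vartwo}=\Omega$. Because the hole of an evaluation context is always in redex position and $\Omega$ reduces only to itself and never to a value, $\act{\ectxone}{\Omega}$ diverges, so $\sum\diS{\act{\ectxone}{\Omega}}=0$ and every approximant of this branch carries weight $0$. Hence this branch contributes nothing to $\sum\distrone$, and the required inequality holds with any $\distrtwo$ whatsoever. This is precisely where the asymmetry between $\termthree_0$ and $\termthree_1$ enters: firing a copy of $\termthree_0$ irrevocably kills a branch, so all of the convergence mass of $\subst{\termfour}{\termthree_0}{\varone}$ is supported on computations that never fire a plug redex, and those are faithfully mirrored --- with only \emph{extra} mass becoming available at the plug points --- by $\subst{\termfour}{\termthree_1}{\varone}$.

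I expect the genuine difficulty to lie in the lockstep bookkeeping rather than in any inequality. One has to check that the two instances really do share their evaluation-context/redex decomposition at every step (which rests on $\termthree_0,\termthree_1$ both being values placed identically), and that firing a skeleton redex preserves the skeleton correspondence even when a value containing copies of the plug is substituted, duplicated, or thrown away; this is a routine but fiddly substitutivity argument in which bound variables are kept distinct from $\varone$ and closedness of $\termthree_0,\termthree_1$ is used. By contrast, the plug-redex case --- the conceptual heart of the lemma --- becomes immediate once one notes $\sum\diS{\act{\ectxone}{\Omega}}=0$.
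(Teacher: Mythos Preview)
Your proposal is correct and follows the same overall strategy as the paper (structural induction on the small-step approximation derivation), but differs from it in two respects that are worth noting.

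First, the paper does not work directly with $\termthree_0$ and $\termthree_1$. It observes that for a fresh variable $\varthree$ one has $\subst{\termfour}{\termthree_i}{\varone}=\subst{(\subst{\termfour}{\abstr{\vartwo}{\varthree}}{\varone})}{\cdot}{\varthree}$, and thereby reduces the lemma to the statement: for every $\termfive$, if $\evalvsp{\subst{\termfive}{\Omega}{\varone}}{\probone}$ then $\evalvsp{\subst{\termfive}{\id}{\varone}}{\probtwo}$ for some $\probtwo\geq\probone$. The induction is then carried out on this simplified problem. Your approach skips this rewriting and runs the induction directly on the $\termthree_0$/$\termthree_1$ instances. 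This is in fact slightly cleaner for the ``skeleton redex'' case: since both $\termthree_0$ and $\termthree_1$ are \emph{values}, your claim that the two instances share the same evaluation-context decomposition is immediate. In the paper's version one of the plugs, $\Omega$, is not a value, so the corresponding claim (that a non-$\Omega$ redex in $\subst{\termfive}{\Omega}{\varone}$ has a matching redex in $\subst{\termfive}{\id}{\varone}$) needs a word of justification that the paper leaves implicit.

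Second, the key case is handled differently. In the paper, when the active redex is an occurrence of $\Omega$ coming from the substitution, one uses $\Omega\rightarrow\Omega$ to obtain $\subst{\termfive}{\Omega}{\varone}\rightarrow\subst{\termfive}{\Omega}{\varone}$ and applies the induction hypothesis to the strictly shorter subderivation of the \emph{same} judgement. You instead observe that firing a plug redex lands in $\act{\ectxone}{\Omega}$, whose every approximant has weight $0$; hence $\probone=0$ in that branch and the empty rule on the $\termthree_1$ side suffices. Both arguments are valid; yours is a semantic shortcut where the paper keeps the induction purely syntactic.

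One small remark: the phrase ``the lemma then follows by taking suprema'' is not needed --- the statement is already about a single approximant $\probone$, and your core claim is exactly the lemma.
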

\begin{proof}
  First, we can remark that, for every term $\termfour$ and any variable $\varthree$ which doesn't appear in 
  $\termfour$, $\subst \termfour {\termthree_0} \varone = \subst {\left(\subst \termfour {\abstr{\vartwo}{\varthree}}{\varone}\right)}{\Omega}{\varthree}$, 
  and $\subst \termfour {\termthree_1} \varone = \subst {\left(\subst \termfour {\abstr{\vartwo}{\varthree}}{\varone}\right)}{I}{\varthree}$. 
  It is thus enough to show that for every term $\termfive$, if $\evalvsp{(\subst{\termfive}{\Omega}{\varone})}{\probone}$, then there is 
  $\probtwo \geq\probone$ such that $\evalvsp{(\subst{\termfive}{I}{\varone})}{\probtwo}$. 
  This is an induction on the proof of $\evalvsp{(\subst{\termfive}{\Omega}{\varone})}{\probone}$, i.e., an
  induction on the structure of a derivation of $\evalvs{(\subst{\termfive}{\Omega}{\varone})}{\distrone}$
  where $\sum{\distrone}=\probone$. Some interesting cases:
  \begin{varitemize}
  \item 
    If ${(\subst{\termfive}{\Omega}{\varone})} = \valone$ is a value, then the term 
    ${(\subst{\termfive}{I}{\varone})}$ is a value too. So we have 
    ${\evalvsD {(\subst{\termfive}{I}{\varone})}{\distrv{(\subst \termfive I \varone)}}}$, and so 
    $\evalvsp{(\subst{\termfive}{I}{\varone})}{1}$, and the thesis holds.
  \item 
    Suppose that the derivation looks as follows:
    \begin{prooftree}
      \AxiomC{${(\subst{\termfive}{\Omega}{\varone})} \rightarrow \overline{\termsix}$}
      \AxiomC{$ \evalvs {\termsix_i}{\distrtwo_i}$}
      \BinaryInfC{$\evalvsD {(\subst{\termfive}{\Omega}{\varone})}  {\sum_{1\leq i \leq k}{\frac {1} {k} } \cdot {\distrtwo_i}}$} 
    \end{prooftree}    
    Then there are two possible cases : 
    \begin{varitemize}  
    \item If $\subst{\termfive}{\Omega}{\varone}\rightarrow\termsix_1,\ldots,\termsix_k$, but the
      involved redex is \emph{not} $\Omega$, then we can easily
      prove that each $\termsix_i$ can be written in the
      form $\subst{\termseven_i}{\Omega}{\varone}$, where
      $$
      \subst{\termfive}{\Omega}{\varone}\rightarrow
      \subst{\termseven_1}{\Omega}{\varone},\ldots,
      \subst{\termseven_k}{\Omega}{\varone}.
      $$
      Similarly $
      \subst{\termfive}{I}{\varone}\rightarrow
      \subst{\termseven_1}{I}{\varone},\ldots,
      \subst{\termseven_k}{I}{\varone}$.
      We can then apply the induction hypothesis to each of the derivations for $\subst{\termseven_i}{\Omega}{\varone}$.
    \item 
      The interesting case is when the active redex in $\subst{\termfive}{\Omega}{\varone}$ is $\Omega$. Since we have 
      $\Omega \rightarrow \Omega$, we have $\subst{\termfive}{\Omega}{\varone} \rightarrow \subst {\termfive}{\Omega}{\varone}$, and so 
      $\overline{\termsix}=\termsix_1 = \subst {\termfive}{\Omega}{\varone}$, and $\distrone = \distrtwo_1$. 
      We can apply the induction hypothesis to $\evalvs {\termsix_1}{\distrtwo_1}$, and the thesis follows.
    \end{varitemize} 
  \end{varitemize}
  This concludes the proof.\cqed
\end{proof}
We are now ready to prove the central lemma of this section, which takes a rather complicated form just for the sake of
its inductive proof:
\begin{lemma}\label{lemma:bound}
  Suppose that $\termfour$ is a term and suppose that $\evalvsp{(\subst{\termfour}{\termone,\termthree}{\varone,\overline{\vartwo}})}{\probone}$,
  where $\overline{\vartwo}=\vartwo_1,\ldots,\vartwo_n$. Then for every $b\in\{0,1\}^n$ there is $\probone_b$ such
  that $\evalvsp{(\subst{\termfour}{\termtwo,\termthree_b}{\varone,\overline{\vartwo}})}{\probone_b}$
  and $\sum_{b}\frac{\probone_b}{2^n}\geq\probone$.
\end{lemma}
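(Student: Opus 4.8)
The plan is to induct on a derivation witnessing $\evalvsp{(\subst{\termfour}{\termone,\termthree}{\varone,\overline{\vartwo}})}{\probone}$ (equivalently, on the small-step derivation of the witnessing distribution $\distrone$ with $\sum\distrone=\probone$), keeping the \emph{skeleton} $\termfour$ explicit so that every residual copy of $\termthree$ is tracked by a variable among $\overline{\vartwo}$. The structural fact that makes the induction go through is that $\termone,\termtwo,\termthree,\termthree_0,\termthree_1$ are closed values, so substituting them for variables neither creates nor destroys an evaluation context: the current redex of $\subst{\termfour}{\termone,\termthree}{\varone,\overline{\vartwo}}$ either (C) lies inside the skeleton, or (A) is an application of a copy of $\termone$ (its head is the variable $\varone$ in $\termfour$), or (B) is an application of an $\overline{\vartwo}$-copy of $\termthree$ (its head is some $\vartwo_i$ in $\termfour$). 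The \emph{same} redex occurrence sits in $\subst{\termfour}{\termtwo,\termthree_b}{\varone,\overline{\vartwo}}$, which is what lets me match reductions on the two sides.

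First the easy cases. If the derivation ends with the empty-distribution rule then $\probone=0$ and I take each $\probone_b=0$. If $\subst{\termfour}{\termone,\termthree}{\varone,\overline{\vartwo}}$ is a value then $\termfour$ is an abstraction or a variable, so $\subst{\termfour}{\termtwo,\termthree_b}{\varone,\overline{\vartwo}}$ is a value for every $b$, all $\probone_b=1$, and the average is $1=\probone$. In case (C) the fired redex is a $\beta$- or $\oplus$-redex of the skeleton, hence reduction commutes with both substitutions: $\subst{\termfour}{\termone,\termthree}{\varone,\overline{\vartwo}}\SSred\subst{\termfour_1}{\termone,\termthree}{\varone,\overline{\vartwo}},\ldots,\subst{\termfour_k}{\termone,\termthree}{\varone,\overline{\vartwo}}$ and likewise on the $\termtwo$-side with the \emph{same} $\termfour_1,\ldots,\termfour_k$. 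I apply the induction hypothesis to each $\termfour_j$ and average with weight $\tfrac1k$; the bound $\sum_b\probone_{j,b}/2^n\geq\probone_j$ is preserved by this convex combination.

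In case (A) a copy of $\termone=\abstrac\termthree$ is applied to a value, and since $\varone\notin\FV{\termthree}$ the step produces a fresh copy of $\termthree$ in place, $\subst{\termfour}{\termone,\termthree}{\varone,\overline{\vartwo}}\SSred\act{\ectxone}{\termthree}$, which I rewrite as $\subst{\termfour'}{\termone,\termthree}{\varone,\overline{\vartwo},\vartwo_{n+1}}$ by re-abstracting that occurrence with a fresh variable $\vartwo_{n+1}$, and I invoke the induction hypothesis on $\termfour'$ with $n+1$ copies. On the $\termtwo$-side the matching redex applies $\termtwo=\abstr{\varone}{(\termthree_0\oplus\termthree_1)}$; as $\varone$ is not free in $\termthree_0\oplus\termthree_1$ it reduces to $\act{\ectxone}{\termthree_0\oplus\termthree_1}$, and the ensuing choice splits into $\act{\ectxone}{\termthree_0}$ and $\act{\ectxone}{\termthree_1}$, i.e.\ exactly the two instantiations of the new bit. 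Hence the $\tfrac12$ split weight matches the refinement of $\{0,1\}^n$ into $\{0,1\}^{n+1}$, giving $\probone_b=\tfrac12(\probone_{(b,0)}+\probone_{(b,1)})$, so that $\sum_{b}\probone_b/2^n=\sum_{c\in\{0,1\}^{n+1}}\probone_c/2^{n+1}\geq\probone$ as required.

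Case (B) is the main obstacle. Applying a $\vartwo_i$-copy of $\termthree=\abstr{\vartwo}{(\Omega\oplus I)}$ yields $\Omega\oplus I$, whose $\Omega$-branch diverges (contributing $0$) and whose $I$-branch is $\act{\ectxone}{I}=\subst{\termfour^\ast}{\termone,\termthree}{\varone,\overline{\vartwo}}$, so $\probone=\tfrac12\,\probone_I$, and the induction hypothesis on $\termfour^\ast$ yields values $\probone^\ast_b$ with $\sum_b\probone^\ast_b/2^n\geq\probone_I$. On the $\termtwo$-side $\vartwo_i$ holds $\termthree_{b_i}$: if $b_i=0$ the application of $\termthree_0$ diverges and $\probone_b=0$, whereas if $b_i=1$ the application of $\termthree_1$ reduces to $I$, leaving $\subst{\termfour^\ast}{\termtwo,\termthree_b}{\varone,\overline{\vartwo}}$. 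Keeping only the $b_i=1$ strings would discard half the mass of the inductive bound, and this is precisely the delicate step: I recover it by pairing each $b$ with $b_i=0$ to its flip $b^{+i}$, invoking Lemma~\ref{lemma:auxcounter} (replacing $\termthree_0$ by $\termthree_1$ at $\vartwo_i$) to see that the $b^{+i}$-instance converges with probability at least $\probone^\ast_b$, while the induction hypothesis gives it at least $\probone^\ast_{b^{+i}}$; by directedness of the approximation semantics (Lemma~\ref{lem:directed}) a single derivation realises at least $\max(\probone^\ast_b,\probone^\ast_{b^{+i}})\geq\tfrac12(\probone^\ast_b+\probone^\ast_{b^{+i}})$. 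Summing over the $2^{n-1}$ pairs gives $\sum_b\probone_b/2^n\geq\tfrac12\sum_b\probone^\ast_b/2^n\geq\tfrac12\probone_I=\probone$, closing the induction. The genuine content, and the reason for the convoluted statement, is exactly this transfer of probability from the eager per-variable choice on the $\termtwo$-side to the lazy per-use choice on the $\termone$-side, for which both Lemma~\ref{lemma:auxcounter} and the inequality $\geq$ (rather than $=$) are indispensable.
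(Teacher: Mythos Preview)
Your proof is correct and follows the same inductive strategy as the paper's, with the same three-way case analysis (skeleton redex, $\termone$-headed redex, $\termthree$-headed redex) and the same use of Lemma~\ref{lemma:auxcounter} to transfer probability mass from the $b_i=0$ strings to their $b_i=1$ flips in case~(B). One cosmetic point: the directedness you invoke would be needed for $\Rightarrow$ rather than for $\Downarrow$ (Lemma~\ref{lem:directed} concerns the latter), but in fact you can simply pick whichever of the two available derivations for the $b^{+i}$-instance has the larger value, so directedness is not strictly required here---the paper's ``without loss of generality'' is doing exactly this.
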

\begin{proof}
  This is an induction on the proof of $\evalvsp{(\subst{\termfour}{\termone,\termthree}{\varone,\overline{\vartwo}})}{\probone}$, i.e., an
  induction on the structure of a derivation of $\evalvs{(\subst{\termfour}{\termone,\termthree}{\varone,\overline{\vartwo}})}{\distrone}$
  where $\sum{\distrone}=\probone$:
  \begin{varitemize}
  \item
    If $\subst{\termfour}{\termone,\termthree}{\varone,\overline{\vartwo}}$ is a value, then:
    \begin{varitemize}
    \item
      either $\probone=1$, but we can also choose $\probone_b$ to be $1$ for every $b$, 
      since the term $\subst{\termfour}{\termtwo,\termthree_b}{\varone,\overline{\vartwo}}$ is a value, too;
    \item
      or $\probone=0$, and in this case we can fix $\probone_b$ to be $0$ for every $b$.
    \end{varitemize}
  \item
    If $\subst{\termfour}{\termone,\termthree}{\varone,\overline{\vartwo}}\rightarrow\termfive_1,\ldots,\termfive_k$, but the
    involved redex has \emph{not} $\termone$ nor $\termthree$ as functions, then we are done, because one can easily
    prove in this case that each $\termfive_i$ can be written in the
    form $\subst{\termsix_i}{\termone,\termthree}{\varone,\overline{\vartwo}}$, where
    $$
    \subst{\termfour}{\termtwo,\termthree_b}{\varone,\overline{\vartwo}}\rightarrow
    \subst{\termsix_1}{\termtwo,\termthree_b}{\varone,\overline{\vartwo}},\ldots,
    \subst{\termsix_k}{\termtwo,\termthree_b}{\varone,\overline{\vartwo}}.
    $$
    It suffices, then, to apply the induction hypothesis to each of the derivations
    for $\subst{\termsix_i}{\termone,\termthree}{\varone,\overline{\vartwo}}$, easily reaching the thesis;
  \item
    The interesting case is when the active redex in $\subst{\termfour}{\termone,\termthree}{\varone,\overline{\vartwo}}$
    has either $\termone$ or $\termthree$ (or, better, occurrences of them coming from the
    substitution) in functional position.
    \begin{varitemize}
    \item
      If $\termone$ is involved, then there are a term $\termfive$ and a variable $\varthree$ such that
      \begin{align*}
        \subst{\termfour}{\termone,\termthree}{\varone,\overline{\vartwo}}&\rightarrow\subst{\termfive}{\termone,\termthree,\termthree}{\varone,\overline{\vartwo},\varthree};\\
        \subst{\termfour}{\termtwo,\termthree_b}{\varone,\overline{\vartwo}}&
          \rightarrow\subst{\termfive}{\termtwo,\termthree_b,\termthree_0}{\varone,\overline{\vartwo},\varthree},
          \rightarrow\subst{\termfive}{\termtwo,\termthree_b,\termthree_1}{\varone,\overline{\vartwo},\varthree}.
      \end{align*}
      This, in particular, means that we can easily apply the induction hypothesis to $\subst{\termfive}{\termone,\termthree,\termthree}{\varone,\overline{\vartwo},\varthree}$.
    \item
      If, on the other hand $\termthree$ is involved in the redex, then there are a term $\termfive$ and a variable $\varthree$ such that
      $$
      \subst{\termfour}{\termone,\termthree}{\varone,\overline{\vartwo}}\rightarrow
      \subst{\termfive}{\termone,\termthree,\Omega}{\varone,\overline{\vartwo},\varthree},
      \subst{\termfive}{\termone,\termthree,I}{\varone,\overline{\vartwo},\varthree}.
      $$
      Moreover, the space of all sequences $b$ can be partitioned into two classes of the same cardinality $2^{n-1}$, call
      them $B_B$ and $B_G$; for every $b\in B_B$, we have that $\subst{\termfour}{\termtwo,\termthree_b}{\varone,\overline{\vartwo}}$ 
      is diverging, while for every $b\in B_G$, we have that
      $$
      \subst{\termfour}{\termtwo,\termthree_b}{\varone,\overline{\vartwo}}\rightarrow\subst{\termfive}{\termtwo,\termthree_b,I}{\varone,\overline{\vartwo},\varthree}.
      $$
      Observe how for any $b\in B_B$ there is $\hat{b}\in B_G$ such that $b$ and $\hat{b}$ agree on every bit except one, which is $0$ in $b$ and $1$ in $\hat{b}$.
      Now, observe that $\probone=\frac{\probtwo}{2}$ where $\evalvsp{\subst{\termfive}{\termone,\termthree,I}{\varone,\overline{\vartwo},\varthree}}{\probtwo}$.
      We can then apply the induction hypothesis and obtain that $\probtwo\leq\sum_{b}\frac{\probtwo_b}{2^n}$ where
      $\evalvsp{\subst{\termfive}{\termtwo,\termthree_b,I}{\varone,\overline{\vartwo},\varthree}}{\probtwo_b}$. Due to Lemma~\ref{lemma:auxcounter}, we
      can assume without losing generality that $\probtwo_b\leq\probtwo_{\hat{b}}$ for every $b\in B_B$. Now, fix $\probone_b=0$ if $b\in B_B$
      and $\probone_b=\probtwo_b$ if $b\in B_G$. Of course $\evalvsp{(\subst{\termfour}{\termtwo,\termthree_b}{\varone,\overline{\vartwo}})}{\probone_b}$.
      But moreover,
      $$
      \probone=\frac{\probtwo}{2}\leq \frac{1}{2}\sum_{b}\frac{\probtwo_b}{2^n}
      \leq \frac{1}{2}\sum_{b\in B_G}\frac{2\cdot\probtwo_b}{2^n}=\sum_{b\in B_G}\frac{\probtwo_b}{2^n}=\sum_{b}\frac{\probone_b}{2^n}.
      $$
    \end{varitemize}
  \end{varitemize}
This concludes the proof.\cqed
\end{proof}
From what we have seen so far, it is already clear that for any context $\ctxone$, it cannot be that $\sum\diS{\ctxone[\termone]}>\sum\diS{\ctxone[\termtwo]}$,
as this would mean that for a certain term $\termfour$, $\subst{\termfour}{\termone}{\varone}$ would converge to a distribution $\distrone$
whose sum $\probone$ is higher than the sum of any distribution to which $\subst{\termfour}{\termtwo}{\varone}$ converges, and this is in contradiction
with Lemma~\ref{lemma:bound}: simply consider the case where $n=0$. 

But how about \PCFLP? Actually, there is an embedding $\emb{\cdot}$ of \PCFLP\ into $\Lambda_{\oplus}$ such that for every
$\termfour\in\termty{\sigma}$, it holds that $\sum\diS{\termfour}=\sum\diS{\emb{\termfour}}$ (for more details, see the next section). 
As a consequence there cannot be any \PCFLP\ context contradicting what we have said in the last paragraph. Summing up,
\begin{theorem}
The simulation preorder $\Rv$ is not fully abstract.
\end{theorem}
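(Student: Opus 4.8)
The plan is to refute the reverse of the sound inclusion. By Theorem~\ref{theo:soundness} we already have $\Rv\,\subseteq\,\leq$, so full abstraction of the preorder would amount to the converse inclusion $\leq\,\subseteq\,\Rv$, i.e.\ comparability in the contextual preorder forcing similarity. To break this I would use the pair $\termone,\termtwo$ fixed at the start of the section and show that $\emptyset\proves\termone\leq\termtwo:\sigma$ holds while $\emptyset\proves\termone\Rv\termtwo:\sigma$ fails. The failure of similarity is precisely the content of the preceding lemma, so the entire remaining burden is to establish the contextual inequality $\emptyset\proves\termone\leq\termtwo:\sigma$.

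First I would settle the analogous inequality inside the pure calculus $\LOP$. Specialising Lemma~\ref{lemma:bound} to $n=0$ (so that $\overline{\vartwo}$ and $b$ are empty and the $\termthree_b$ disappear) gives, for every pure term $\termfour$ with at most $\varone$ free: if $\evalvsp{(\subst{\termfour}{\termone}{\varone})}{\probone}$ then there is $\probtwo\geq\probone$ with $\evalvsp{(\subst{\termfour}{\termtwo}{\varone})}{\probtwo}$. Taking the supremum over the directed family of small-step approximants then yields $\sum\diS{\subst{\termfour}{\termone}{\varone}}\leq\sum\diS{\subst{\termfour}{\termtwo}{\varone}}$ for every such $\termfour$. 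Since in the pure, untyped calculus a one-hole context filled with a \emph{closed} term $\termone$ is literally a substitution $\subst{(\ctxone[\varone])}{\termone}{\varone}$ (no capture occurs, as $\termone$ is closed), this is exactly the statement that $\termone$ lies below $\termtwo$ in the contextual preorder of $\LOP$.

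It then remains to transport this back to \PCFLP, which I would do by contradiction. Suppose some \PCFLP\ context $\ctxone$ with $\emptyset\proves\tycon{\ctxone}{\emptyset}{\sigma}{\tau}$ satisfied $\sum\diS{\act{\ctxone}{\termone}}>\sum\diS{\act{\ctxone}{\termtwo}}$. Invoking the embedding $\emb{\cdot}:\PCFLP\to\LOP$, which is compositional (so $\emb{\act{\ctxone}{\termone}}=\subst{\emb{\ctxone}}{\emb{\termone}}{\varone}$ for the hole variable $\varone$) and satisfies $\sum\diS{\termfour}=\sum\diS{\emb{\termfour}}$ for every $\termfour\in\termty{}$, this would produce a pure term $\termfour=\emb{\ctxone}$ with $\sum\diS{\subst{\termfour}{\emb{\termone}}{\varone}}>\sum\diS{\subst{\termfour}{\emb{\termtwo}}{\varone}}$, directly contradicting the previous paragraph once one checks that $\emb{\termone},\emb{\termtwo}$ are the pure terms to which Lemma~\ref{lemma:bound} was applied. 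Hence no such $\ctxone$ exists, $\emptyset\proves\termone\leq\termtwo:\sigma$, and combining this with $\neg(\emptyset\proves\termone\Rv\termtwo:\sigma)$ we obtain $\leq\,\neq\,\Rv$, i.e.\ $\Rv$ is not fully abstract.

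The main obstacle I anticipate is this bridging step rather than any combinatorics: one must fix the embedding $\emb{\cdot}$ precisely enough that it commutes with hole-filling and preserves the total convergence mass, and verify that, modulo the encoding of the base-type machinery, the images of the specific terms $\termone$ and $\termtwo$ coincide with the pure terms used in Lemma~\ref{lemma:bound}. The genuinely quantitative part---that collapsing the probabilistic delay inside $\termone$ can only raise the convergence probability---is already isolated in Lemmas~\ref{lemma:auxcounter} and~\ref{lemma:bound}, and here it is only needed at $n=0$, so no further estimation is required.
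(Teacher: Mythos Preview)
Your proposal is correct and follows essentially the same route as the paper: use the preceding lemma for non-similarity, specialise Lemma~\ref{lemma:bound} at $n=0$ to get the contextual inequality in $\Lambda_\oplus$, and then pull the result back to \PCFLP\ via the convergence-preserving embedding $\emb{\cdot}$, arguing by contradiction. The only point on which you are more scrupulous than the paper is the bridging step---checking that $\emb{\termone},\emb{\termtwo}$ really are (up to the choice of diverging term) the pure terms to which Lemma~\ref{lemma:bound} applies---which the paper leaves implicit; your flagging this as the main residual obligation is accurate, not a gap in your argument.
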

The careful reader may now wonder whether a result akin to Theorem~\ref{theo:testbisim} exists for \emph{simulation}
and testing. Actually, there \emph{is} such a result~\cite{vBMOW05}, but for a different notion of test, which not only, like $\testlan$,
includes conjunctive tests, but also disjunctive ones. Now, anybody familiar with the historical developments of the quest 
for a fully abstract model of \PCF\ \cite{Plotkin77,BerryCurien82} would immediately recognize disjunctive tests as something 
which cannot be easily implemented by terms. 
\subsection{Embedding \PCFLP\ into $\Lambda_\oplus$}
The embedding $\emb{\cdot}$ maps any term in \PCFLP\ into a pure, untyped, term. It is defined as follows:
\begin{align*}
\emb{\varone}&=\varone;\\
\emb{\cnst{n}}&=\embsn{n};\\
\emb{\cnst{b}}&=\embsn{b};\\
\emb{\nil}&=\abstr{\varone}{\abstr{\vartwo}{\varone\star}};\\
\emb{\lst{\termone}{\termtwo}}&=\abstr{\varone}{\abstr{\vartwo}{\vartwo\emb{\termone}\emb{\termtwo}}};\\
\emb{\pair{\termone}{\termtwo}}&=\abstr{\varone}{\varone(\abstrsv{\emb{\termone}})(\abstrsv{\emb{\termtwo}})};\\
\emb{\abstr{\varone}{\termone}}&=\abstr{\varone}{\emb{\termone}};\\
\emb{\fix{\termone}}&=\abstr{\vartwo}{\termone_\fixt(\abstr{\varone}{\emb{\termone}})\vartwo};\\
\emb{\termone\oplus\termtwo}&=\emb{\termone}\oplus\emb{\termtwo};\\
\emb{\ifth{\termone}{\termtwo}{\termthree}}&=\emb{\termone}(\abstrsv{\emb{\termtwo}})(\abstrsv{\emb{\termthree}})\dmyval;\\
\emb{\op{\termone}{\termtwo}}&=\termone_{\op{}{}}\emb{\termone}\emb{\termtwo};\\
\emb{\fst{\termone}}&=\emb{\termone}(\abstr{\varone}{\abstr{\vartwo}{\varone}})\star;\\
\emb{\snd{\termone}}&=\emb{\termone}(\abstr{\varone}{\abstr{\vartwo}{\vartwo}})\star;\\
\emb{\termone\termtwo}&=\emb{\termone}\emb{\termtwo};\\
\emb{\caset{\termone}{\termtwo}{\termthree}}&=\emb{\termone}(\abstrsv{\emb{\termtwo}})(\abstr{h}{\abstr{t}{\termthree}});
\end{align*}
where:
\begin{varitemize}
\item
  $\embsn{\cdot}$ is the so-called Scott-encoding of natural numbers and booleans in the $\lambda$-calculus:
\item
  $\termone_\fixt$ is the term $\termtwo\termtwo$, where $\termtwo$ is the term
  $\abstr{\varone}{\abstr{\vartwo}{\vartwo\left(\abstr{\varthree}{((\varone\varone)\vartwo)\varthree}\right)}}$.
\item
  $\termone_{\op{}{}}$ is the term implementing $\op{}{}$, which we suppose to always exist given the universality
  of weak call-by-value reduction~\cite{DalLagoMartini12}.
\end{varitemize}
\begin{lemma}
  For every \PCFLP\ term $\termone$, $\termone$ is a value iff $\emb{\termone}$ is a value.
\end{lemma}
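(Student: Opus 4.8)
The plan is to prove both implications simultaneously by a direct case analysis on the outermost syntactic constructor of $\termone$; no induction is needed. The point is that whether a \PCFLP\ term is a value is determined purely by its top-level shape (the value grammar lists exactly the forms $\cnst{n}$, $\cnst{b}$, $\nil$, $\abstrac{\termone}$, $\fix{\termone}$, $\lst{\termone}{\termtwo}$, $\pair{\termone}{\termtwo}$), and, reading off the clauses defining $\emb{\cdot}$, the top-level shape of $\emb{\termone}$ is itself fixed by the top-level shape of $\termone$. I will use the convention of \cite{DalLagoZorzi} that the values of $\LOP$ are precisely the abstractions $\abstr{\varone}{\termtwo}$, so that ``$\emb{\termone}$ is a value'' means ``$\emb{\termone}$ is syntactically an abstraction''. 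Thus the whole statement reduces to checking a dichotomy: the embedding of a \PCFLP\ value is an abstraction, while the embedding of every non-value \PCFLP\ term is \emph{not} an abstraction.

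For the direction from left to right, I would simply run through the seven value productions. In each case the defining clause produces a term whose head symbol is $\lambda$: $\emb{\abstr{\varone}{\termone}}=\abstr{\varone}{\emb{\termone}}$, $\emb{\fix{\termone}}=\abstr{\vartwo}{\termone_\fixt(\abstr{\varone}{\emb{\termone}})\vartwo}$, $\emb{\nil}=\abstr{\varone}{\abstr{\vartwo}{\varone\star}}$, $\emb{\lst{\termone}{\termtwo}}=\abstr{\varone}{\abstr{\vartwo}{\vartwo\emb{\termone}\emb{\termtwo}}}$, and $\emb{\pair{\termone}{\termtwo}}=\abstr{\varone}{\varone(\abstrsv{\emb{\termone}})(\abstrsv{\emb{\termtwo}})}$ are all visibly abstractions, and for the constants $\emb{\cnst{n}}=\embsn{n}$ and $\emb{\cnst{b}}=\embsn{b}$ I would invoke the fact that the Scott encodings of numerals and booleans are abstractions. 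Hence each \PCFLP\ value is sent to a $\LOP$-value.

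For the converse I would treat the remaining (non-value) productions, exhibiting in each case a head symbol other than $\lambda$. The choice clause gives $\emb{\termone\oplus\termtwo}=\emb{\termone}\oplus\emb{\termtwo}$, whose head is the choice operator; and the clauses for application, $\ifth{\cdot}{\cdot}{\cdot}$, $\op{\cdot}{\cdot}$, $\fst{\cdot}$, $\snd{\cdot}$, and $\caset{\cdot}{\cdot}{\cdot}$ all yield terms of the form $\emb{\termone}\,M'$ (an application), e.g.\ $\emb{\fst{\termone}}=\emb{\termone}(\abstr{\varone}{\abstr{\vartwo}{\varone}})\star$ and $\emb{\op{\termone}{\termtwo}}=\termone_{\op{}{}}\emb{\termone}\emb{\termtwo}$. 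None of these is an abstraction, so each non-value \PCFLP\ term is mapped to a non-value of $\LOP$; contrapositively, $\emb{\termone}$ a value forces $\termone$ to be a value.

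The one delicate point, which I would address explicitly rather than gloss over, is the variable clause $\emb{\varone}=\varone$, since $\varone$ is not a \PCFLP\ value. This is harmless precisely because we take the $\LOP$-values to be \emph{only} abstractions, so a bare variable is not a value on either side and the equivalence holds at that case too; this is consistent with the fact that the semantics of \cite{DalLagoZorzi} is developed on closed terms, where variables do not arise as normal forms. I expect this bookkeeping about the notion of value to be the only real obstacle: once the value convention for $\LOP$ is pinned down, the rest of the argument is a mechanical inspection of the defining equations of $\emb{\cdot}$.
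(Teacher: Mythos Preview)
The paper states this lemma without proof, so there is nothing to compare against; your case analysis on the outermost constructor of $\termone$ is the obvious and correct argument, and your handling of the variable clause (treating $\LOP$-values as abstractions only, consistently with the closed-term semantics of \cite{DalLagoZorzi}) is exactly the right way to dispatch the one non-trivial bookkeeping point.
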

\begin{lemma}
  For every typable \PCFLP\ term $\termone$, if $\evalvs{\termone}{\distrone}$, then $\evalvs{\emb{\termone}}{{\emb{\distrone}}}$.
\end{lemma}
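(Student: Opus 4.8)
The plan is to prove the statement by induction on the derivation of $\termone\Rightarrow\distrone$, exploiting the three-rule inductive definition of the small-step approximation relation $\Rightarrow$. Throughout I read $\emb{\distrone}$ as the $\LOP$ value distribution with $\emb{\distrone}(\emb{\valone})=\distrone(\valone)$ for every $\valone\in\values$ and $0$ on non-embedded values; since $\emb{\cdot}$ sends values to values (by the preceding lemma) and is injective on values, this is well defined, and it is linear, so $\emb{\sum_i\lambda_i\distrone_i}=\sum_i\lambda_i\emb{\distrone_i}$. The two base cases are immediate: if $\termone\Rightarrow\emptyset$ then $\emb{\termone}\Rightarrow\emptyset=\emb{\emptyset}$; and if $\termone=\valone$ is a value then $\emb{\valone}$ is a value, whence $\emb{\valone}\Rightarrow\distrv{\emb{\valone}}=\emb{\distrv{\valone}}$.

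The inductive step is where the work lies. Here $\termone\SSred\termtwo_1,\ldots,\termtwo_n$, each $\termtwo_i\Rightarrow\distrone_i$, and $\distrone=\sum_{1\le i\le n}\frac1n\distrone_i$; by the induction hypothesis $\emb{\termtwo_i}\Rightarrow\emb{\distrone_i}$. I would isolate two auxiliary facts. First, a value-substitution lemma $\emb{\subst{\termone}{\valone}{\varone}}=\subst{\emb{\termone}}{\emb{\valone}}{\varone}$ for every value $\valone$, proved by a routine induction on $\termone$ (all \PCFLP\ contractions substitute only values, so it always applies). Second, that $\emb{\cdot}$ is compositional on contexts, so $\emb{\act{\ectxone}{\termthree}}=\widehat{\ectxone}[\emb{\termthree}]$ for an induced $\LOP$ context $\widehat{\ectxone}$, and that $\widehat{\ectxone}$ is a $\LOP$ evaluation context whenever $\ectxone$ is a \PCFLP\ evaluation context (each destructor places its principal argument in head position, which is exactly an evaluation position after embedding). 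Granting these, the step reduces to a \emph{redex-level simulation}: for the unique contracted deterministic redex $\termfive\SSred\termfive'$ one checks $\emb{\termfive}\SSred^{+}\emb{\termfive'}$ in $\LOP$ by a finite sequence of deterministic administrative reductions, while the choice rule $\termsix_1\oplus\termsix_2\SSred\termsix_1,\termsix_2$ is matched on the nose since $\emb{\termsix_1\oplus\termsix_2}=\emb{\termsix_1}\oplus\emb{\termsix_2}$. Re-assembling through $\widehat{\ectxone}$: every deterministic administrative step is a single-term reduction, so it is absorbed by the $n=1$ instance of the third $\Rightarrow$-rule and leaves the distribution unchanged, and the one branching step (arity $n$, with $n\le 2$ since binary choice is the only branching rule) contributes exactly the factors $\frac1n$. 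Hence $\emb{\termone}\Rightarrow\sum_i\frac1n\emb{\distrone_i}=\emb{\distrone}$, as required.

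The main obstacle is the redex-level simulation for constructs whose embedding is not a plain $\beta$-redex, above all the call-by-value fixpoint. For $\abstr{\varone}{\termone}$ the $\beta$-case is one step followed by the substitution lemma; but $(\fix{\termone})\valone\SSred(\subst{\termone}{\fix{\termone}}{\varone})\valone$ must be simulated through the Turing-style combinator $\termone_\fixt=\termtwo\termtwo$ hidden in $\emb{\fix{\termone}}$, and one has to verify that its unfolding takes a fixed, finite number of \emph{deterministic} cbv steps and lands exactly on $\emb{(\subst{\termone}{\fix{\termone}}{\varone})\valone}$, introducing neither divergence nor spurious probabilistic branching. The same care is needed for the Scott encodings: $\fst{\pair{\valone}{\valtwo}}\SSred\valone$, the symmetric $\snd{}$ case, the conditional, the arithmetic operators $\termone_{\op{}{}}$, the list constructor, and the $\mathsf{case}$ destructor each unfold to a short deterministic chain of applications that must terminate at the embedded contractum. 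The key technical ingredient making all these chains genuine cbv reductions is that the thunks $\abstrsv{\cdot}$ and the dummy $\dmyval$ are values, so every intermediate application is a legal redex. Typability is used only to guarantee that contracted redexes have the expected shape (e.g.\ that an operator really faces two numerals). Once each of these finitely many cases is discharged, the induction closes.
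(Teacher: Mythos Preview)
The paper states this lemma without proof, so there is no argument to compare against; your induction on the derivation of $\termone\Rightarrow\distrone$ is the natural route and the overall scheme is sound.

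One claim, however, is not true as stated and needs more care. You assert that the induced context $\widehat{\ectxone}$ is a $\LOP$ evaluation context whenever $\ectxone$ is a \PCFLP\ evaluation context, arguing that ``each destructor places its principal argument in head position''. This fails for several of the contexts in the paper's grammar. For $\op{\valone}{\ectxone}$ the embedding is $\termone_{\op{}{}}\,\emb{\valone}\,\widehat{\ectxone}$, and $\termone_{\op{}{}}\,\emb{\valone}$ is an application, not a value, so the hole is not yet in call-by-value evaluation position. For $\caset{\ectxone::\termthree}{\termfour}{\termfive}$ and $\caset{\valone::\ectxone}{\termfour}{\termfive}$ the failure is sharper: since $\emb{\termsix::\termseven}=\abstr{\varone}{\abstr{\vartwo}{\vartwo\,\emb{\termsix}\,\emb{\termseven}}}$, the hole lands \emph{under} $\lambda$-binders in the direct image. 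The repair is exactly the mechanism you already invoke for redexes: for each such $\ectxone$ a fixed finite sequence of deterministic administrative $\LOP$-steps, uniform in the plugged subterm, takes $\emb{\act{\ectxone}{\termthree}}$ to some $\ectxtwo[\emb{\termthree}]$ with $\ectxtwo$ a genuine $\LOP$ evaluation context. To close the induction in the branching case you then also need the converse absorption fact (if $A\Rightarrow\distrtwo$ and $A\SSred B$ with $B$ the unique successor, then $B\Rightarrow\distrtwo$), so that the induction hypothesis on $\emb{\act{\ectxone}{\termtwo_i}}$ transports to $\ectxtwo[\emb{\termtwo_i}]$; this is easy but should be stated. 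A minor side remark: injectivity of $\emb{\cdot}$ on values is not obvious in an untyped target (a \PCFLP\ abstraction $\abstr{\varone}{\abstr{\vartwo}{\vartwo\,\termthree\,\termfour}}$ and the list $\termthree::\termfour$ have the same image), but you do not actually need it---taking $\emb{\distrone}$ to be the pushforward along $\emb{\cdot}$ is enough and preserves the linearity you use.
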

\begin{proposition}
  For every typable \PCFLP\ term $\termone$, $\emb{\diS{\termone}}=\diS{\emb{\termone}}$.
\end{proposition}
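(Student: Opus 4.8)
The plan is to split the asserted identity into two inequalities between distributions over $\LOP$-values and then collapse them with a mass argument. Throughout, for a \PCFLP\ value distribution $\distrone$ I read $\emb{\distrone}$ as the pushforward along $\emb{\cdot}$, that is $\emb{\distrone}(\valtwo)=\sum_{\valone:\,\emb{\valone}=\valtwo}\distrone(\valone)$; because the first lemma above sends values to values and $\emb{\cdot}$ is injective on values (immediate from the clauses defining it), this is a faithful relabelling of the support, and in particular $\sum\emb{\distrone}=\sum\distrone$. I also record that $\emb{\cdot}$ is Scott-continuous on value distributions: it is monotone and, since summation over the fibres of $\emb{\cdot}$ commutes with suprema of directed families (monotone convergence), $\emb{\sup_i\distrone_i}=\sup_i\emb{\distrone_i}$ for every directed $(\distrone_i)_i$.

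First I would settle the easy inequality $\emb{\diS{\termone}}\leq\diS{\emb{\termone}}$. Recall that $\diS{\termone}$ is the supremum of the directed set of finite approximants $\distrone$ with $\evalvs{\termone}{\distrone}$ (Lemma~\ref{lem:directed} together with the big-step/small-step equivalence). By the second lemma above, each such $\distrone$ yields $\evalvs{\emb{\termone}}{{\emb{\distrone}}}$, so $\emb{\distrone}$ is itself an approximant of $\emb{\termone}$ and hence $\emb{\distrone}\leq\diS{\emb{\termone}}$. Taking suprema and using continuity of $\emb{\cdot}$ gives $\emb{\diS{\termone}}=\sup_{\evalvs{\termone}{\distrone}}\emb{\distrone}\leq\diS{\emb{\termone}}$, pointwise on $\LOP$-values.

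It remains to upgrade this to equality, and here I would exploit that a pointwise inequality of subdistributions with equal finite total mass is an equality: if $\distrone\leq\distrtwo$ and $\sum\distrone=\sum\distrtwo$, then $\distrtwo-\distrone\geq0$ sums to $0$, hence vanishes. By mass preservation $\sum\emb{\diS{\termone}}=\sum\diS{\termone}$, so it suffices to prove the \emph{reverse} mass inequality $\sum\diS{\emb{\termone}}\leq\sum\diS{\termone}$: combined with the pointwise bound of the previous paragraph one then has $\sum\diS{\emb{\termone}}\leq\sum\diS{\termone}=\sum\emb{\diS{\termone}}\leq\sum\diS{\emb{\termone}}$, so all are equal and the desired identity follows.

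The whole difficulty is thus concentrated in the convergence-reflection statement $\sum\diS{\emb{\termone}}\leq\sum\diS{\termone}$, equivalently: whenever $\evalvsp{\emb{\termone}}{\probone}$ there is $\probtwo\geq\probone$ with $\evalvsp{\termone}{\probtwo}$. I would prove this by a \emph{reverse} simulation. The obstacle is that a weak call-by-value reduction of $\emb{\termone}$ in $\LOP$ interleaves two kinds of steps: \emph{genuine} steps that mirror a \PCFLP\ small-step $\SSred$ of $\termone$, and \emph{administrative} steps internal to the encodings — the unfolding of the call-by-value fixpoint combinator $\termone_\fixt$, the Scott-encoded projections and list destructors, the selection steps for $\ifth{\cdot}{\cdot}{\cdot}$ and $\caset{\cdot}{\cdot}{\cdot}$, and the bookkeeping around the dummy value $\dmyval$. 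I would isolate a decomposition lemma stating that, for a non-value $\termone$, $\emb{\termone}$ performs finitely many deterministic administrative steps and then exposes exactly the redex corresponding to the head redex selected by $\SSred$, so that the $\LOP$-reducts of $\emb{\termone}$ are, up to administrative reduction, the $\emb{\termtwo_i}$ for $\termone\SSred\termtwo_1,\dots,\termtwo_n$, with the same fair weights $\frac1n$ at probabilistic choices. Feeding this into an induction on the length of the small-step approximation of $\emb{\termone}$ yields, for every $\distrtwo$ with $\evalvs{\emb{\termone}}{\distrtwo}$, a \PCFLP\ distribution $\distrone$ with $\evalvs{\termone}{\distrone}$ and $\distrtwo\leq\emb{\distrone}$, whence $\sum\distrtwo\leq\sum\distrone\leq\sum\diS{\termone}$; the supremum over $\distrtwo$ then gives the reverse mass inequality. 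I expect the verification that administrative reduction is deterministic, strongly normalising, and value-preserving — so that it neither creates nor destroys convergence mass and never interacts with $\oplus$ — to be the main technical burden, dispatched by a routine case analysis on the clauses defining $\emb{\cdot}$.
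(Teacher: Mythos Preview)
The paper states this proposition (and the two lemmas preceding it) without proof, so there is no argument to compare against; your task is really to supply one, and the outline you give is sound.

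Your decomposition into (i) the pointwise inequality $\emb{\diS{\termone}}\leq\diS{\emb{\termone}}$ via the forward-simulation lemma and Scott-continuity of the pushforward, followed by (ii) a total-mass argument reducing equality to the reverse inequality $\sum\diS{\emb{\termone}}\leq\sum\diS{\termone}$, is correct and is the natural route. The observation that a pointwise inequality of subdistributions with equal finite mass is an equality is exactly what lets you avoid having to argue separately that every value in $\supp{\diS{\emb{\termone}}}$ lies in the image of $\emb{\cdot}$. One small remark: you invoke injectivity of $\emb{\cdot}$ on values, but you do not actually need it anywhere --- mass preservation $\sum\emb{\distrone}=\sum\distrone$ holds for any pushforward, and that is all you use. (Injectivity is plausible but not quite ``immediate''; e.g.\ one has to check that no $\abstr{\varone}{\termone}$ collides with $\nil$, a pair, or a Scott numeral, which depends on the concrete choice of $\dmyval$ and $\embsn{\cdot}$.)

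The genuine work, as you correctly identify, is the reverse-simulation step: showing that for a non-value $\termone$, the term $\emb{\termone}$ performs a bounded, deterministic, $\oplus$-free sequence of administrative reductions and then reaches a term whose one-step $\LOP$-reducts are exactly the $\emb{\termtwo_i}$ for $\termone\SSred\termtwo_1,\ldots,\termtwo_n$. Your claim that administrative steps never touch $\oplus$ is right --- the only clause of $\emb{\cdot}$ producing $\oplus$ is $\emb{\termone\oplus\termtwo}=\emb{\termone}\oplus\emb{\termtwo}$ --- and the $\fixt$ case works out because the unfolding of $\termone_\fixt(\abstr{\varone}{\emb{\termone}})$ reproduces, up to $\alpha$, the term $\emb{\fix{\termone}}$ in the recursive position. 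The case analysis is tedious (one case per clause of $\emb{\cdot}$, tracking evaluation contexts on both sides) but routine; stating the decomposition lemma precisely and carrying it out would complete the proof.
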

\section{A Comparison with Call-by-Name}
Actually, \PCFLP\ could easily be endowed with call-by-name rather than call-by-value operational semantics. The obtained calculus, then,
is amenable to a treatment similar to the one described in Section~\ref{sect:appbis}. Full abstraction, however, holds neither for simulation
nor for bisimulation. These results are anyway among the major contributions of~\cite{ADLS13}.
The precise correspondence between testing and bisimulation described in Section~\ref{sect:fromlmptolmc} shed some further light on
the gap between call-by-value and call-by-name evaluation. In both cases, indeed, bisimulation can be characterized by testing as
given in Definition~\ref{def:testlan}. What call-by-name evaluation misses, however, is the capability to copy a term \emph{after} having
evaluated it, a feature which is instead available if parameters are passed to function evaluated, as in 
call-by-value. In a sense, then, the tests corresponding to bisimilarity are
the same in call-by-name, but the calculus turns out to be too poor to implement all of them. We conjecture that
the subclass of tests which are implementable in a call-by-name setting are those in the form $\seq{\testone_1,\ldots,\testone_n}$
(where each $\testone_i$ is in the form $\actone_i^1\cdot\ldots\cdot\actone_i^{m_i}\cdot\omega$), and that full abstraction
can be recovered if the language is endnowed with an operator for \emph{sequencing}.

\section{Conclusions}
In this paper, we study probabilistic applicative bisimulation in a call-by-value scenario, in the meantime generalizing it
to a typed language akin to Plotkin's \PCF. Actually, some of the obtained results turn out to be surprising, highlighting a gap
between the symmetric and asymmetric cases, and between call-by-value and call-by-name evaluation. This is a phenomenon
which simply does not show up when applicative bisimulation is defined over deterministic~\cite{Abramsky90} nor over 
nondeterministic~\cite{Las98a} $\lambda$-calculi. The path towards these results goes through a characterization of 
bisimilarity by testing which is known from the literature~\cite{vBMOW05}.
Noticeably, the latter helps in finding the right place for probabilistic $\lambda$-calculi in the coinductive spectrum: the corresponding
notion of test is more powerful than plain trace equivalence, but definitely less complex than
the infinitary notion of test which characterizes applicative bisimulation in presence of nondeterminism~\cite{Ong93}.

Further work includes a broader study on (not necessarily coinductive) notions of equivalence for probabilistic $\lambda$-calculi. As an
example, it would be nice to understand the relations between applicative bisimulation and logical relations (e.g. the ones defined
in~\cite{GLLN08}). Another interesting direction would be the study of notions of \emph{approximate} equivalence for $\lambda$-calculi with
restricted expressive power. This would be a step forward getting a coinductive characterization of computational indistinguishability,
with possibly nice applications for cryptographic protocol verification.

\bibliographystyle{abbrv}
\bibliography{biblio}
\end{document}